\newtheorem{assumption}{Assumption}
\newtheorem{sscheme}{Sampling Scheme}
\newtheorem{algorithm}{Algorithm}
\newtheorem{example}{Example}
\newcommand{\bs}{\boldsymbol}
\def\wh{\widehat }
\def\wt{\widetilde }
\def\({\left (}
\def\){\right )}
\let\pdfoutput=\undefined\fi
\chardef\@x10\chardef\@xv60
\def\tcitime{
\def\@time{%
  \@minute\time\@hour\@minute\divide\@hour\@xv
  \ifnum\@hour<\@x 0\fi\the\@hour:%
  \multiply\@hour\@xv\advance\@minute-\@hour
  \ifnum\@minute<\@x 0\fi\the\@minute
  }}%
\def\x@hyperref#1#2#3{%
   \catcode`\~ = 12
   \catcode`\$ = 12
   \catcode`\_ = 12
   \catcode`\# = 12
   \catcode`\& = 12
   \catcode`\% = 12
   \y@hyperref{#1}{#2}{#3}%
}
\def\y@hyperref#1#2#3#4{%
   #2\ref{#4}#3
   \catcode`\~ = 13
   \catcode`\$ = 3
   \catcode`\_ = 8
   \catcode`\# = 6
   \catcode`\& = 4
   \catcode`\% = 14
}
\def\QCTOpt[#1]#2{%
  \def\QCTOptB{#1}
  \def\QCTOptA{#2}
}
\def\QCTNOpt#1{%
  \def\QCTOptA{#1}
  \let\QCTOptB\empty
}
\def\Qct{%
  \@ifnextchar[{%
    \QCTOpt}{\QCTNOpt}
}
\def\QCBOpt[#1]#2{%
  \def\QCBOptB{#1}%
  \def\QCBOptA{#2}%
}
\def\QCBNOpt#1{%
  \def\QCBOptA{#1}%
  \let\QCBOptB\empty
}
\def\Qcb{%
  \@ifnextchar[{%
    \QCBOpt}{\QCBNOpt}%
}
\def\PrepCapArgs{%
  \ifx\QCBOptA\empty
    \ifx\QCTOptA\empty
      {}%
    \else
      \ifx\QCTOptB\empty
        {\QCTOptA}%
      \else
        [\QCTOptB]{\QCTOptA}%
      \fi
    \fi
  \else
    \ifx\QCBOptA\empty
      {}%
    \else
      \ifx\QCBOptB\empty
        {\QCBOptA}%
      \else
        [\QCBOptB]{\QCBOptA}%
      \fi
    \fi
  \fi
}
\def\GRAPHICSPS#1{%
 \ifcase\GRAPHICSTYPE
   \special{ps: #1}%
 \or
   \special{language "PS", include "#1"}%
 \fi
}%
\def\graffile#1#2#3#4{%
    \bgroup
	   \@inlabelfalse
       \leavevmode
       \@ifundefined{bbl@deactivate}{\def~{\string~}}{\activesoff}%
        \raise -#4 \BOXTHEFRAME{%
           \hbox to #2{\raise #3\hbox to #2{\null #1\hfil}}}%
    \egroup
}%
\def\draftbox#1#2#3#4{%
 \leavevmode\raise -#4 \hbox{%
  \frame{\rlap{\protect\tiny #1}\hbox to #2%
   {\vrule height#3 width\z@ depth\z@\hfil}%
  }%
 }%
}%
\let\nographics=\@msidraft
\newif\ifwasdraft
\def\GRAPHIC#1#2#3#4#5{%
   \ifnum\@msidraft=\@ne\draftbox{#2}{#3}{#4}{#5}%
   \else\graffile{#1}{#3}{#4}{#5}%
   \fi
}
\def\addtoLaTeXparams#1{%
    \edef\LaTeXparams{\LaTeXparams #1}}%
\newif\ifBoxFrame \BoxFramefalse
\newif\ifOverFrame \OverFramefalse
\newif\ifUnderFrame \UnderFramefalse
\def\BOXTHEFRAME#1{%
   \hbox{%
      \ifBoxFrame
         \frame{#1}%
      \else
         {#1}%
      \fi
   }%
}
\def\doFRAMEparams#1{\BoxFramefalse\OverFramefalse\UnderFramefalse\readFRAMEparams#1\end}%
\def\readFRAMEparams#1{%
 \ifx#1\end%
  \let\next=\relax
  \else
  \ifx#1i\dispkind=\z@\fi
  \ifx#1d\dispkind=\@ne\fi
  \ifx#1f\dispkind=\tw@\fi
  \ifx#1t\addtoLaTeXparams{t}\fi
  \ifx#1b\addtoLaTeXparams{b}\fi
  \ifx#1p\addtoLaTeXparams{p}\fi
  \ifx#1h\addtoLaTeXparams{h}\fi
  \ifx#1X\BoxFrametrue\fi
  \ifx#1O\OverFrametrue\fi
  \ifx#1U\UnderFrametrue\fi
  \ifx#1w
    \ifnum\@msidraft=1\wasdrafttrue\else\wasdraftfalse\fi
    \@msidraft=\@ne
  \fi
  \let\next=\readFRAMEparams
  \fi
 \next
 }%
\def\IFRAME#1#2#3#4#5#6{%
      \bgroup
      \let\QCTOptA\empty
      \let\QCTOptB\empty
      \let\QCBOptA\empty
      \let\QCBOptB\empty
      #6%
      \parindent=0pt
      \leftskip=0pt
      \rightskip=0pt
      \setbox0=\hbox{\QCBOptA}%
      \@tempdima=#1\relax
      \ifOverFrame
          \typeout{This is not implemented yet}%
          \show\HELP
      \else
         \ifdim\wd0>\@tempdima
            \advance\@tempdima by \@tempdima
            \ifdim\wd0 >\@tempdima
               \setbox1 =\vbox{%
                  \unskip\hbox to \@tempdima{\hfill\GRAPHIC{#5}{#4}{#1}{#2}{#3}\hfill}%
                  \unskip\hbox to \@tempdima{\parbox[b]{\@tempdima}{\QCBOptA}}%
               }%
               \wd1=\@tempdima
            \else
               \textwidth=\wd0
               \setbox1 =\vbox{%
                 \noindent\hbox to \wd0{\hfill\GRAPHIC{#5}{#4}{#1}{#2}{#3}\hfill}\\%
                 \noindent\hbox{\QCBOptA}%
               }%
               \wd1=\wd0
            \fi
         \else
            \ifdim\wd0>0pt
              \hsize=\@tempdima
              \setbox1=\vbox{%
                \unskip\GRAPHIC{#5}{#4}{#1}{#2}{0pt}%
                \break
                \unskip\hbox to \@tempdima{\hfill \QCBOptA\hfill}%
              }%
              \wd1=\@tempdima
           \else
              \hsize=\@tempdima
              \setbox1=\vbox{%
                \unskip\GRAPHIC{#5}{#4}{#1}{#2}{0pt}%
              }%
              \wd1=\@tempdima
           \fi
         \fi
         \@tempdimb=\ht1
         \advance\@tempdimb by -#2
         \advance\@tempdimb by #3
         \leavevmode
         \raise -\@tempdimb \hbox{\box1}%
      \fi
      \egroup%
}%
\def\DFRAME#1#2#3#4#5{%
  \vspace\topsep
  \hfil\break
  \bgroup
     \leftskip\@flushglue
	 \rightskip\@flushglue
	 \parindent\z@
	 \parfillskip\z@skip
     \let\QCTOptA\empty
     \let\QCTOptB\empty
     \let\QCBOptA\empty
     \let\QCBOptB\empty
	 \vbox\bgroup
        \ifOverFrame 
           #5\QCTOptA\par
        \fi
        \GRAPHIC{#4}{#3}{#1}{#2}{\z@}%
        \ifUnderFrame 
           \break#5\QCBOptA
        \fi
	 \egroup
  \egroup
  \vspace\topsep
  \break
}%
\def\FFRAME#1#2#3#4#5#6#7{%
  \@ifundefined{floatstyle}
    {
     \begin{figure}[#1]%
    }
    {
	 \ifx#1h
      \begin{figure}[H]%
	 \else
      \begin{figure}[#1]%
	 \fi
	}
  \let\QCTOptA\empty
  \let\QCTOptB\empty
  \let\QCBOptA\empty
  \let\QCBOptB\empty
  \ifOverFrame
    #4
    \ifx\QCTOptA\empty
    \else
      \ifx\QCTOptB\empty
        \caption{\QCTOptA}%
      \else
        \caption[\QCTOptB]{\QCTOptA}%
      \fi
    \fi
    \ifUnderFrame\else
      \label{#5}%
    \fi
  \else
    \UnderFrametrue%
  \fi
  \begin{center}\GRAPHIC{#7}{#6}{#2}{#3}{\z@}\end{center}%
  \ifUnderFrame
    #4
    \ifx\QCBOptA\empty
      \caption{}%
    \else
      \ifx\QCBOptB\empty
        \caption{\QCBOptA}%
      \else
        \caption[\QCBOptB]{\QCBOptA}%
      \fi
    \fi
    \label{#5}%
  \fi
  \end{figure}%
 }%
\def\makeactives{
  \catcode`\"=\active
  \catcode`\;=\active
  \catcode`\:=\active
  \catcode`\'=\active
  \catcode`\~=\active
}
   \gdef\activesoff{%
      \def"{\string"}%
      \def;{\string;}%
      \def:{\string:}%
      \def'{\string'}%
      \def~{\string~}%
    }
\def\FRAME#1#2#3#4#5#6#7#8{%
 \bgroup
 \ifnum\@msidraft=\@ne
   \wasdrafttrue
 \else
   \wasdraftfalse%
 \fi
 \def\LaTeXparams{}%
 \dispkind=\z@
 \def\LaTeXparams{}%
 \doFRAMEparams{#1}%
 \ifnum\dispkind=\z@\IFRAME{#2}{#3}{#4}{#7}{#8}{#5}\else
  \ifnum\dispkind=\@ne\DFRAME{#2}{#3}{#7}{#8}{#5}\else
   \ifnum\dispkind=\tw@
    \edef\@tempa{\noexpand\FFRAME{\LaTeXparams}}%
    \@tempa{#2}{#3}{#5}{#6}{#7}{#8}%
    \fi
   \fi
  \fi
  \ifwasdraft\@msidraft=1\else\@msidraft=0\fi{}%
  \egroup
 }%
\def\TEXUX#1{"texux"}
\long\def\QQQ#1#2{%
     \long\expandafter\def\csname#1\endcsname{#2}}%
\long\def\QQA#1#2{}%
\def\QTR#1#2{{\csname#1\endcsname {#2}}}%
\def\EXPAND#1[#2]#3{}%
\def\NOEXPAND#1[#2]#3{}%
\def\LaTeXparent#1{}%
\def\ChildStyles#1{}%
\def\ChildDefaults#1{}%
\def\QTagDef#1#2#3{}%
  \providecommand{\UNICODE}[2][]{\protect\rule{.1in}{.1in}}
  \providecommand{\U}[1]{\protect\rule{.1in}{.1in}}
\def\QQfnmark#1{\footnotemark}
 \def\abstract{%
  \if@twocolumn
   \section*{Abstract (Not appropriate in this style!)}%
   \else \small 
   \begin{center}{\bf Abstract\vspace{-.5em}\vspace{\z@}}\end{center}%
   \quotation 
   \fi
  }%
   \def\registered{\relax\ifmmode{}\r@gistered
                    \else$\m@th\r@gistered$\fi}%
 \def\r@gistered{^{\ooalign
  {\hfil\raise.07ex\hbox{$\scriptstyle\rm\text{R}$}\hfil\crcr
  \mathhexbox20D}}}}{}%
\newdimen\theight
\def\newfmtname{LaTeX2e}
  \DeclareOldFontCommand{\rm}{\normalfont\rmfamily}{\mathrm}
  \DeclareOldFontCommand{\sf}{\normalfont\sffamily}{\mathsf}
  \DeclareOldFontCommand{\tt}{\normalfont\ttfamily}{\mathtt}
  \DeclareOldFontCommand{\bf}{\normalfont\bfseries}{\mathbf}
  \DeclareOldFontCommand{\it}{\normalfont\itshape}{\mathit}
  \DeclareOldFontCommand{\sl}{\normalfont\slshape}{\@nomath\sl}
  \DeclareOldFontCommand{\sc}{\normalfont\scshape}{\@nomath\sc}
\def\alpha{{\Greekmath 010B}}%
\def\beta{{\Greekmath 010C}}%
\def\gamma{{\Greekmath 010D}}%
\def\delta{{\Greekmath 010E}}%
\def\epsilon{{\Greekmath 010F}}%
\def\zeta{{\Greekmath 0110}}%
\def\eta{{\Greekmath 0111}}%
\def\theta{{\Greekmath 0112}}%
\def\iota{{\Greekmath 0113}}%
\def\kappa{{\Greekmath 0114}}%
\def\lambda{{\Greekmath 0115}}%
\def\mu{{\Greekmath 0116}}%
\def\nu{{\Greekmath 0117}}%
\def\xi{{\Greekmath 0118}}%
\def\pi{{\Greekmath 0119}}%
\def\rho{{\Greekmath 011A}}%
\def\sigma{{\Greekmath 011B}}%
\def\tau{{\Greekmath 011C}}%
\def\upsilon{{\Greekmath 011D}}%
\def\phi{{\Greekmath 011E}}%
\def\chi{{\Greekmath 011F}}%
\def\psi{{\Greekmath 0120}}%
\def\omega{{\Greekmath 0121}}%
\def\varepsilon{{\Greekmath 0122}}%
\def\vartheta{{\Greekmath 0123}}%
\def\varpi{{\Greekmath 0124}}%
\def\varrho{{\Greekmath 0125}}%
\def\varsigma{{\Greekmath 0126}}%
\def\varphi{{\Greekmath 0127}}%
\def\nabla{{\Greekmath 0272}}
\def\FindBoldGroup{%
   {\setbox0=\hbox{$\mathbf{x\global\edef\theboldgroup{\the\mathgroup}}$}}%
}
\def\Greekmath#1#2#3#4{%
    \if@compatibility
        \ifnum\mathgroup=\symbold
           \mathchoice{\mbox{\boldmath$\displaystyle\mathchar"#1#2#3#4$}}%
                      {\mbox{\boldmath$\textstyle\mathchar"#1#2#3#4$}}%
                      {\mbox{\boldmath$\scriptstyle\mathchar"#1#2#3#4$}}%
                      {\mbox{\boldmath$\scriptscriptstyle\mathchar"#1#2#3#4$}}%
        \else
           \mathchar"#1#2#3#4%
        \fi 
    \else 
        \FindBoldGroup
        \ifnum\mathgroup=\theboldgroup 
           \mathchoice{\mbox{\boldmath$\displaystyle\mathchar"#1#2#3#4$}}%
                      {\mbox{\boldmath$\textstyle\mathchar"#1#2#3#4$}}%
                      {\mbox{\boldmath$\scriptstyle\mathchar"#1#2#3#4$}}%
                      {\mbox{\boldmath$\scriptscriptstyle\mathchar"#1#2#3#4$}}%
        \else
           \mathchar"#1#2#3#4%
        \fi     	    
	  \fi}
\newif\ifGreekBold  \GreekBoldfalse
\let\SAVEPBF=\pbf
\def\pbf{\GreekBoldtrue\SAVEPBF}%
  \newcounter{equationnumber}  
  \def\mathletters{%
     \addtocounter{equation}{1}
     \edef\@currentlabel{\theequation}%
     \setcounter{equationnumber}{\c@equation}
     \setcounter{equation}{0}%
     \edef\theequation{\@currentlabel\noexpand\alph{equation}}%
  }
    \def\BibTeX{{\rm B\kern-.05em{\sc i\kern-.025em b}\kern-.08em
                 T\kern-.1667em\lower.7ex\hbox{E}\kern-.125emX}}}{}%
\def\AmS{{\protect\usefont{OMS}{cmsy}{m}{n}%
                A\kern-.1667em\lower.5ex\hbox{M}\kern-.125emS}}}{}%
\def\@@eqncr{\let\@tempa\relax
    \ifcase\@eqcnt \def\@tempa{& & &}\or \def\@tempa{& &}%
      \else \def\@tempa{&}\fi
     \@tempa
     \if@eqnsw
        \iftag@
           \@taggnum
        \else
           \@eqnnum\stepcounter{equation}%
        \fi
     \fi
     \global\tag@false
     \global\@eqnswtrue
     \global\@eqcnt\z@\cr}
\def\TCItag{\@ifnextchar*{\@TCItagstar}{\@TCItag}}
\def\@TCItag#1{%
    \global\tag@true
    \global\def\@taggnum{(#1)}%
    \global\def\@currentlabel{#1}}
\def\@TCItagstar*#1{%
    \global\tag@true
    \global\def\@taggnum{#1}%
    \global\def\@currentlabel{#1}}
\def\tint{\msi@int\textstyle\int}%
\def\tiint{\msi@int\textstyle\iint}%
\def\tiiint{\msi@int\textstyle\iiint}%
\def\tiiiint{\msi@int\textstyle\iiiint}%
\def\tidotsint{\msi@int\textstyle\idotsint}%
\def\toint{\msi@int\textstyle\oint}%
\newtoks\temptoksa
\newtoks\temptoksb
\newtoks\temptoksc
\def\msi@int#1#2{%
 \def\@temp{{#1#2\the\temptoksc_{\the\temptoksa}^{\the\temptoksb}}}%
 \futurelet\@nextcs
 \@int
}
\def\@int{%
   \ifx\@nextcs\limits
      \typeout{Found limits}%
      \temptoksc={\limits}%
	  \let\@next\@intgobble%
   \else\ifx\@nextcs\nolimits
      \typeout{Found nolimits}%
      \temptoksc={\nolimits}%
	  \let\@next\@intgobble%
   \else
      \typeout{Did not find limits or no limits}%
      \temptoksc={}%
      \let\@next\msi@limits%
   \fi\fi
   \@next   
}%
\def\@intgobble#1{%
   \typeout{arg is #1}%
   \msi@limits
}
\def\msi@limits{%
   \temptoksa={}%
   \temptoksb={}%
   \@ifnextchar_{\@limitsa}{\@limitsb}%
}
\def\@limitsa_#1{%
   \temptoksa={#1}%
   \@ifnextchar^{\@limitsc}{\@temp}%
}
\def\@limitsb{%
   \@ifnextchar^{\@limitsc}{\@temp}%
}
\def\@limitsc^#1{%
   \temptoksb={#1}%
   \@ifnextchar_{\@limitsd}{\@temp}%
}
\def\@limitsd_#1{%
   \temptoksa={#1}%
   \@temp
}
\def\dint{\msi@int\displaystyle\int}%
\def\diint{\msi@int\displaystyle\iint}%
\def\diiint{\msi@int\displaystyle\iiint}%
\def\diiiint{\msi@int\displaystyle\iiiint}%
\def\didotsint{\msi@int\displaystyle\idotsint}%
\def\doint{\msi@int\displaystyle\oint}%
\def\dprod{\mathop{\displaystyle \prod }}%
\def\ExitTCILatex{\makeatother }
\if@compatibility\message{amsmath already loaded}\fi\aftergroup\ExitTCILatex}
\if@compatibility\message{amstex already loaded}\fi\aftergroup\ExitTCILatex}
\if@compatibility\message{amsgen already loaded}\fi\aftergroup\ExitTCILatex}
\let\DOTSI\relax
\def\RIfM@{\relax\ifmmode}%
\def\FN@{\futurelet\next}%
\def\iint{\DOTSI\intno@\tw@\FN@\ints@}%
\def\iiint{\DOTSI\intno@\thr@@\FN@\ints@}%
\def\iiiint{\DOTSI\intno@4 \FN@\ints@}%
\def\idotsint{\DOTSI\intno@\z@\FN@\ints@}%
\def\ints@{\findlimits@\ints@@}%
\newif\iflimtoken@
\newif\iflimits@
\def\findlimits@{\limtoken@true\ifx\next\limits\limits@true
 \else\ifx\next\nolimits\limits@false\else
 \limtoken@false\ifx\ilimits@\nolimits\limits@false\else
 \ifinner\limits@false\else\limits@true\fi\fi\fi\fi}%
\def\multint@{\int\ifnum\intno@=\z@\intdots@                          
 \else\intkern@\fi                                                    
 \ifnum\intno@>\tw@\int\intkern@\fi                                   
 \ifnum\intno@>\thr@@\int\intkern@\fi                                 
 \int}
\def\multintlimits@{\intop\ifnum\intno@=\z@\intdots@\else\intkern@\fi
 \ifnum\intno@>\tw@\intop\intkern@\fi
 \ifnum\intno@>\thr@@\intop\intkern@\fi\intop}%
\def\intic@{%
    \mathchoice{\hskip.5em}{\hskip.4em}{\hskip.4em}{\hskip.4em}}%
\def\negintic@{\mathchoice
 {\hskip-.5em}{\hskip-.4em}{\hskip-.4em}{\hskip-.4em}}%
\def\ints@@{\iflimtoken@                                              
 \def\ints@@@{\iflimits@\negintic@
   \mathop{\intic@\multintlimits@}\limits                             
  \else\multint@\nolimits\fi                                          
  \eat@}
 \else                                                                
 \def\ints@@@{\iflimits@\negintic@
  \mathop{\intic@\multintlimits@}\limits\else
  \multint@\nolimits\fi}\fi\ints@@@}%
\def\intkern@{\mathchoice{\!\!\!}{\!\!}{\!\!}{\!\!}}%
\def\plaincdots@{\mathinner{\cdotp\cdotp\cdotp}}%
\def\intdots@{\mathchoice{\plaincdots@}%
 {{\cdotp}\mkern1.5mu{\cdotp}\mkern1.5mu{\cdotp}}%
 {{\cdotp}\mkern1mu{\cdotp}\mkern1mu{\cdotp}}%
 {{\cdotp}\mkern1mu{\cdotp}\mkern1mu{\cdotp}}}%
\def\RIfM@{\relax\protect\ifmmode}
\def\text{\RIfM@\expandafter\text@\else\expandafter\mbox\fi}
\let\nfss@text\text
\def\text@#1{\mathchoice
   {\textdef@\displaystyle\f@size{#1}}%
   {\textdef@\textstyle\tf@size{\firstchoice@false #1}}%
   {\textdef@\textstyle\sf@size{\firstchoice@false #1}}%
   {\textdef@\textstyle \ssf@size{\firstchoice@false #1}}%
   \glb@settings}
\def\textdef@#1#2#3{\hbox{{%
                    \everymath{#1}%
                    \let\f@size#2\selectfont
                    #3}}}
\newif\iffirstchoice@
\def\Let@{\relax\iffalse{\fi\let\\=\cr\iffalse}\fi}%
\def\vspace@{\def\vspace##1{\crcr\noalign{\vskip##1\relax}}}%
\def\multilimits@{\bgroup\vspace@\Let@
 \baselineskip\fontdimen10 \scriptfont\tw@
 \advance\baselineskip\fontdimen12 \scriptfont\tw@
 \lineskip\thr@@\fontdimen8 \scriptfont\thr@@
 \lineskiplimit\lineskip
 \vbox\bgroup\ialign\bgroup\hfil$\m@th\scriptstyle{##}$\hfil\crcr}%
\def\Sb{_\multilimits@}%
\def\endSb{\crcr\egroup\egroup\egroup}%
\def\Sp{^\multilimits@}%
\newdimen\ex@
\def\rightarrowfill@#1{$#1\m@th\mathord-\mkern-6mu\cleaders
 \hbox{$#1\mkern-2mu\mathord-\mkern-2mu$}\hfill
 \mkern-6mu\mathord\rightarrow$}%
\def\leftarrowfill@#1{$#1\m@th\mathord\leftarrow\mkern-6mu\cleaders
 \hbox{$#1\mkern-2mu\mathord-\mkern-2mu$}\hfill\mkern-6mu\mathord-$}%
\def\leftrightarrowfill@#1{$#1\m@th\mathord\leftarrow
\mkern-6mu\cleaders
 \hbox{$#1\mkern-2mu\mathord-\mkern-2mu$}\hfill
 \mkern-6mu\mathord\rightarrow$}%
\def\overrightarrow{\mathpalette\overrightarrow@}%
\def\overrightarrow@#1#2{\vbox{\ialign{##\crcr\rightarrowfill@#1\crcr
 \noalign{\kern-\ex@\nointerlineskip}$\m@th\hfil#1#2\hfil$\crcr}}}%
\def\overleftarrow{\mathpalette\overleftarrow@}%
\def\overleftarrow@#1#2{\vbox{\ialign{##\crcr\leftarrowfill@#1\crcr
 \noalign{\kern-\ex@\nointerlineskip}$\m@th\hfil#1#2\hfil$\crcr}}}%
\def\overleftrightarrow{\mathpalette\overleftrightarrow@}%
\def\overleftrightarrow@#1#2{\vbox{\ialign{##\crcr
   \leftrightarrowfill@#1\crcr
 \noalign{\kern-\ex@\nointerlineskip}$\m@th\hfil#1#2\hfil$\crcr}}}%
\def\underrightarrow{\mathpalette\underrightarrow@}%
\def\underrightarrow@#1#2{\vtop{\ialign{##\crcr$\m@th\hfil#1#2\hfil
  $\crcr\noalign{\nointerlineskip}\rightarrowfill@#1\crcr}}}%
\def\underleftarrow{\mathpalette\underleftarrow@}%
\def\underleftarrow@#1#2{\vtop{\ialign{##\crcr$\m@th\hfil#1#2\hfil
  $\crcr\noalign{\nointerlineskip}\leftarrowfill@#1\crcr}}}%
\def\underleftrightarrow{\mathpalette\underleftrightarrow@}%
\def\underleftrightarrow@#1#2{\vtop{\ialign{##\crcr$\m@th
  \hfil#1#2\hfil$\crcr
 \noalign{\nointerlineskip}\leftrightarrowfill@#1\crcr}}}%
\def\qopnamewl@#1{\mathop{\operator@font#1}\nlimits@}
\let\nlimits@\displaylimits
\def\setboxz@h{\setbox\z@\hbox}
\def\varlim@#1#2{\mathop{\vtop{\ialign{##\crcr
 \hfil$#1\m@th\operator@font lim$\hfil\crcr
 \noalign{\nointerlineskip}#2#1\crcr
 \noalign{\nointerlineskip\kern-\ex@}\crcr}}}}
 \def\rightarrowfill@#1{\m@th\setboxz@h{$#1-$}\ht\z@\z@
  $#1\copy\z@\mkern-6mu\cleaders
  \hbox{$#1\mkern-2mu\box\z@\mkern-2mu$}\hfill
  \mkern-6mu\mathord\rightarrow$}
\def\leftarrowfill@#1{\m@th\setboxz@h{$#1-$}\ht\z@\z@
  $#1\mathord\leftarrow\mkern-6mu\cleaders
  \hbox{$#1\mkern-2mu\copy\z@\mkern-2mu$}\hfill
  \mkern-6mu\box\z@$}
\def\projlim{\qopnamewl@{proj\,lim}}
\def\injlim{\qopnamewl@{inj\,lim}}
\def\varinjlim{\mathpalette\varlim@\rightarrowfill@}
\def\varprojlim{\mathpalette\varlim@\leftarrowfill@}
\def\varliminf{\mathpalette\varliminf@{}}
\def\varliminf@#1{\mathop{\underline{\vrule\@depth.2\ex@\@width\z@
   \hbox{$#1\m@th\operator@font lim$}}}}
\def\varlimsup{\mathpalette\varlimsup@{}}
\def\varlimsup@#1{\mathop{\overline
  {\hbox{$#1\m@th\operator@font lim$}}}}
\def\align{\@verbatim \frenchspacing\@vobeyspaces \@alignverbatim
You are using the "align" environment in a style in which it is not defined.}
\let\csname endalign*\endcsname =\endtrivlist
\def\alignat{\@verbatim \frenchspacing\@vobeyspaces \@alignatverbatim
You are using the "alignat" environment in a style in which it is not defined.}
\let\csname endalignat*\endcsname =\endtrivlist
\def\xalignat{\@verbatim \frenchspacing\@vobeyspaces \@xalignatverbatim
You are using the "xalignat" environment in a style in which it is not defined.}
\let\csname endxalignat*\endcsname =\endtrivlist
\def\gather{\@verbatim \frenchspacing\@vobeyspaces \@gatherverbatim
You are using the "gather" environment in a style in which it is not defined.}
\let\csname endgather*\endcsname =\endtrivlist
\def\multiline{\@verbatim \frenchspacing\@vobeyspaces \@multilineverbatim
You are using the "multiline" environment in a style in which it is not defined.}
\let\csname endmultiline*\endcsname =\endtrivlist
\def\arrax{\@verbatim \frenchspacing\@vobeyspaces \@arraxverbatim
You are using a type of "array" construct that is only allowed in AmS-LaTeX.}
\def\tabulax{\@verbatim \frenchspacing\@vobeyspaces \@tabulaxverbatim
You are using a type of "tabular" construct that is only allowed in AmS-LaTeX.}
\let\csname endarrax*\endcsname =\endtrivlist
\let\csname endtabulax*\endcsname =\endtrivlist
 \def\endequation{%
     \ifmmode\ifinner 
      \iftag@
        \addtocounter{equation}{-1} 
        $\hfil
           \displaywidth\linewidth\@taggnum\egroup \endtrivlist
        \global\tag@false
        \global\@ignoretrue   
      \else
        $\hfil
           \displaywidth\linewidth\@eqnnum\egroup \endtrivlist
        \global\tag@false
        \global\@ignoretrue 
      \fi
     \else   
      \iftag@
        \addtocounter{equation}{-1} 
        \eqno \hbox{\@taggnum}
        \global\tag@false%
        $$\global\@ignoretrue
      \else
        \eqno \hbox{\@eqnnum}
        $$\global\@ignoretrue
      \fi
     \fi\fi
 } 
 \newif\iftag@ \tag@false
 \def\TCItag{\@ifnextchar*{\@TCItagstar}{\@TCItag}}
 \def\@TCItag#1{%
     \global\tag@true
     \global\def\@taggnum{(#1)}%
     \global\def\@currentlabel{#1}}
 \def\@TCItagstar*#1{%
     \global\tag@true
     \global\def\@taggnum{#1}%
     \global\def\@currentlabel{#1}}
     \def\tag{\@ifnextchar*{\@tagstar}{\@tag}}
     \def\@tag#1{%
         \global\tag@true
         \global\def\@taggnum{(#1)}}
     \def\@tagstar*#1{%
         \global\tag@true
         \global\def\@taggnum{#1}}
\def\transp{\tiny T}
\newcommand*\patchAmsMathEnvironmentForLineno[1]{%
  \expandafter\let\csname old#1\expandafter\endcsname\csname #1\endcsname
  \expandafter\let\csname oldend#1\expandafter\endcsname\csname end#1\endcsname
  \renewenvironment{#1}%
     {\linenomath\csname old#1\endcsname}%
     {\csname oldend#1\endcsname\endlinenomath}}%
\newcommand*\patchBothAmsMathEnvironmentsForLineno[1]{%
  \patchAmsMathEnvironmentForLineno{#1}%
  \patchAmsMathEnvironmentForLineno{#1*}}%
\begin{document}
\title{A flexible Particle Markov chain Monte Carlo method}
\author{Eduardo F. Mendes \\
School of Applied Mathematics \\
Funda\c{c}\~{a}o Getulio Vargas \and \and Christopher K. Carter \\
School of Economics \\
University of New South Wales \and David Gunawan \\ School of Economics \\ University of New South Wales \and Robert Kohn \\
School of Economics \\
University of New South Wales }
\date{}
\maketitle

\begin{abstract}
Particle Markov Chain Monte Carlo methods are used to carry out inference in non-linear and non-Gaussian state space models,
where the posterior density of the states is approximated using particles.
Current approaches usually perform Bayesian inference using either a particle Marginal Metropolis-Hastings (PMMH) algorithm or a particle Gibbs (PG) sampler.
This paper shows how the two ways of generating variables mentioned above can be combined in a flexible manner to give sampling schemes that converge to a desired target distribution.
The advantage of our approach is that the sampling scheme can be tailored to obtain good results for different applications.
For example, when some parameters and the states are highly correlated, such parameters can be generated using PMMH, while all other parameters are generated using PG
because it is easier to obtain good proposals for the parameters within the PG framework.
We derive some convergence properties of our sampling scheme and also investigate its  performance empirically  by applying it to univariate and multivariate stochastic volatility
models and comparing it to other PMCMC methods proposed in the literature.
\end{abstract}

\textbf{Keywords:} Diffusion equation; Factor stochastic volatility model; Metropolis-Hastings; Particle Gibbs sampler.

\section{Introduction\label{S: Introd}}

Our article deals with statistical inference for both the unobserved states
and the parameters in a class of state space models. Its main goal is to
give a flexible approach to constructing sampling schemes that converge to
the posterior distribution of the states and the parameters. The sampling
schemes generate particles as auxiliary variables. This work extends the
methods proposed by \cite{andrieuetal2010}, \cite{olssonryden2011}, \cite{lindstenschon2012a}, \cite%
{lindstenetal2014}, \citet{Fearnhead2016}, and \citet{Deligiannidis2018}.

\cite{andrieuetal2010} introduce two particle Markov chain Monte Carlo (MCMC)  methods for state space models.
The first is particle marginal Metropolis-Hastings (PMMH),
where the parameters are
generated with the states integrated out. The second is particle Gibbs (PG),
which generates the parameters given the states. They show that the
augmented density targeted by this algorithm has the joint posterior density
of the parameters and states as a marginal density. \cite{andrieuetal2010}
and \cite{andrieuroberts2009} show that the law of the marginal sequence of
parameters and states, sampled using either PG or PMMH, converges to the
true posterior as the number of iterations increase. Both particle MCMC
methods are the focus of recent research. \cite{olssonryden2011} and \cite%
{lindstenschon2012a} use \textit{backward simulation} \citep{godsilletal2004}
for sampling the state vector, instead of \textit{ancestral tracing} %
\citep{kitagawa1996}.
\cite{lindstenschon2012a} extend the PG sampler to a particle Metropolis within Gibbs (PMwG)
sampler to deal with the case where the parameters cannot be generated exactly conditional on the states.
\citet{Fearnhead2016} proposed an augmented particle MCMC methods. They show that their method can improve the mixing of the particle Gibbs when the parameters are highly correlated with the states.
Recently, \citet{Deligiannidis2018} proposed the correlated pseudo marginal Metropolis-Hastings method that significantly reduce the number of particles used by the standard pseudo marginal method. 
Unless stated otherwise, we write PG to denote both the PG and PMwG samplers that generate the parameters conditional on the states.

We note that there are no formal results in the literature to guide the user on whether to use PMMH or PG for any given problem.
Our work extends the particle MCMC framework to
situations where using just PMMH or just PG is inefficient.
It is well-known from the literature on Gaussian and conditionally Gaussian state space models that confining MCMC for state space models to Gibbs sampling or Metropolis-Hastings sampling can result in inefficient or even degenerate sampling. See, for example, \cite{kimetal1998} who show for a stochastic volatility model that generating the states conditional on the parameters and the parameters conditional on the states can result in a highly inefficient sampler. See also \cite{carterkohn1996} and \cite{gerlachetal2000} who demonstrate using a signal plus noise model that a Gibbs sampler for the states and indicator variables for the structural breaks produces a degenerate sampler. A natural solution is to combine Gibbs and Metropolis-Hastings samplers. Motivated by that, we
derive a particle sampler on the same augmented space as the PMMH and
PG samplers, in which some parameters are sampled conditionally on the
states and the remaining parameters are sampled with the states integrated
out. We call this a PMMH+PG sampler.
We show that the PMMH+PG sampler targets the same augmented density as
the PMMH or PG samplers. We provide supplementary material showing that the
Markov chain generated by the algorithm is uniformly ergodic, given
regularity conditions. It implies that the marginal law of the Markov chain
generated by $n^{th}$ iteration of the algorithm converges to the posterior
density function geometrically fast, uniformly on its starting value, as $%
n\rightarrow \infty $.

We use \textit{ancestral tracing} in the particle Gibbs step to
make the presentation accessible. The
online supplementary material shows how to modify the
methods proposed in the paper to incorporate auxiliary particle filters and \textit{backward simulation} in the particle Gibbs step.
The same convergence results for the latter  methods are obtained by modifying
the arguments in \cite{olssonryden2011}.

We apply our PMMH+PG sampler to several univariate and multivariate examples using simulated and real datasets.
As a main application we propose a general algorithm for Bayesian inference on a multivariate factor stochastic volatility (SV) model.
This model is used to jointly model many co-varying financial time series,
as it is able to capture the common features using only a small number of latent factors (see, e.g. \cite{Chib2006} and \cite{Kastner:2017}).
We consider a factor SV model in which the volatilities of the factors follow a traditional SV model (as in \cite{Chib2006} and \cite{Kastner:2017})
and the log-volatilities of the idiosyncratic errors follow either a continuous time Ornstein-Uhlenbeck (OU) process \citep{Stein1991} or a GARCH diffusion process \citep{Chib2004,  Kleppe2010}.
The OU process admits a closed form transition density whereas the GARCH process does not.
Similar factor models can also be applied to spatial temporal data with a large number of spatial measurements at each time point.

We use these examples to compare the performance of our sampling schemes to the standard PMMH and PG samplers of \cite{andrieuetal2010},
the particle Gibbs with data augmentation sampler of \citet{Fearnhead2016}, and the correlated PMMH of \citet{Deligiannidis2018}.
For the standard and correlated PMMH, we consider adaptive random walk proposals and the refined proposals by \citet{dahlin2015} and \citet{Nemeth2016}.
We show that the PMMH + PG sampler outperforms these methods in the situation where we have both a large number of parameters and a large number of latent states.
In general, there are likely to be a number of different sampling schemes that can solve the same problems addressed in our article,
and which sampler is best depends on a number of factors such as the model, the data set and  the number of observations.
We also note that our PMMH + PG approach can be further refined by using the data augmented PMMH and PG sampling schemes proposed
by \citet{Fearnhead2016} and the refined proposals for the PMMH sampling scheme by \citet{dahlin2015} and \citet{Nemeth2016}.

The rest of the paper is organized as follows. Section \ref{s:prelim} introduces the
basic concepts and notation used throughout the paper as well as the PMMH+PG sampler for estimating
a single state space model and its associated parameters.
Sections \ref{SSS: cts time OU process} and  \ref{PMMH+PG SV factor} compare the performance of the PMMH+PG sampler to other competing PMCMC methods for estimating univariate and multivariate stochastic volatility models,
respectively. The paper has an online supplement which contains some further empirical and technical results.


\section{The PMMH+PG sampling scheme for state space
models\label{s:prelim}}
This section introduces a sampling scheme that combines PMMH and PG steps for the
Bayesian estimation of a state space model. The first three sections
give preliminary results and Section~\ref{s:pmwg} presents the sampling scheme.
The methods and models introduced in
this section are used in the univariate models in
Section~\ref{SSS: cts time OU process} and the multivariate models in Section~\ref{PMMH+PG SV factor}.

\subsection{State space model\label{s:SSM}}
Define $\mathbb{N}$ as the set of positive integers and let $\{X_{t}\}_{t\in
\mathbb{N}}$ and $\{Y_{t}\}_{t\in \mathbb{N}}$ denote $\mathcal{X}$-valued
and $\mathcal{Y}$-valued stochastic processes, where $\{X_{t}\}_{t\in
\mathbb{N}}$ is a latent Markov process with initial density $f_1^{\theta
}(x)$ and transition density $f_t^{\theta }(x^{\prime }|x)$, i.e.,
\begin{equation*}
X_{1}\sim f_1^{\theta }(\cdot )\quad \mbox{and}\quad X_{t}|(X_{t-1}=x)\sim
f_t^{\theta }(\cdot |x)\quad (t=2,3,\dots ).
\end{equation*}%

The latent process $\{X_{t}\}_{t\in \mathbb{N}}$ is observed only through
$\{Y_{t}\}_{t\in \mathbb{N}}$, whose value at time $t$ depends on the value
of the hidden state at time $t$, and is distributed according to $g_t^{\theta
}(y|x)$
\begin{equation*}
Y_{t}|(X_{t}=x)\sim g_t^{\theta }(\cdot |x)\quad (t=1,2,\dots ).
\end{equation*}%
The densities $f_t^{\theta }$ and $g_t^{\theta }$ are indexed
by a parameter vector $\theta \in \Theta $, where $\Theta $ is an open
subset of $\mathbb{R}^{d_{\theta }}$, and all densities are with respect to
suitable dominating measures, denoted as $dx$ and $dy$.
The dominating measures are frequently taken to be the Lebesgue measure if $%
\mathcal{X}\in \mathcal{B}(\mathbb{R}^{d_{x}})$ and $\mathcal{Y}\in \mathcal{%
B}(\mathbb{R}^{d_{y}})$, where $\mathcal{B}(A)$ is the Borel $\sigma $%
-algebra generated by the set $A$. Usually $\mathcal{X} = \mathbb{R}^{d_x}$ and $\mathcal{Y} = \mathbb{R}^{d_y}$.

We use the colon notation for collections of random variables, i.e., 
$a_{t}^{1:N}=\left( a_{t}^{1},\dots ,a_{t}^{N}\right) $ and for $t\leq u$, $
a_{t:u}^{1:N}=\left( a_{t}^{1:N},\dots ,a_{u}^{1:N}\right) $. 
The joint probability density function of $\left( x_{1:T},y_{1:T}\right) $ is
\begin{equation*}
p\left( x_{1:T},y_{1:T}|\theta \right) =f_1^{\theta }(x_{1})g_1^{\theta
}(y_{1}|x_{1})\,\prod_{t=2}^{T}f_t^{\theta }(x_{t}|x_{t-1})\,g_t^{\theta
}(y_{t}|x_{t}).
\end{equation*}%
We define $Z_{1}(\theta ):=p(y_{1}|\theta )$ and $Z_{t}(\theta
):=p(y_{t}|y_{1:t-1},\theta )$ for $t\geq 2$, so the likelihood is $%
Z_{1:T}\left( \theta \right) =Z_{1}(\theta )\times Z_{2}(\theta )\ldots
Z_{T}(\theta )$. The joint filtering density of $X_{1:t}$ is
\begin{equation*}
p\left( x_{1:t}|y_{1:t},\theta \right) =\frac{p\left( x_{1:t},y_{1:t}|\theta
\right) }{Z_{1:t}\left( \theta \right) }.
\end{equation*}%
The posterior density of $\theta $ and $X_{1:T}$ can also be factorized as
\begin{equation*}
p(x_{1:T},\theta |y_{1:T})=\frac{p(x_{1:T},y_{1:T}|\theta )p(\theta )}{%
\overline{Z}_{1:T}},
\end{equation*}%
where the marginal likelihood $\overline{Z}_{T}=\int_{\Theta
}Z_{1:T}\left( \theta \right) \,p(\theta )\,\mathrm{d}\theta =p(y_{1:T})$.
This factorization is used in the particle Markov chain Monte Carlo
algorithms.

\subsection{Target distribution for state space models\label{s:targetdist}}

We first approximate the joint filtering densities $\{p(x_{t}|y_{1:t},\theta):\,t=1,2,\dots \}$ sequentially, using particles, i.e., weighted samples, $(x_{t}^{1:N},\bar{w}_{t}^{1:N})$,
drawn from auxiliary distributions $m_{t}^{\theta }$.
This requires specifying \textit{importance densities} $m_{1}^{\theta }(x_{1}):=m_{1}(x_{1}|Y_{1}=y_{1},\theta )$ and $m_{t}^{\theta}(x_{t}|x_{t-1}):=m_{t}(x_{t}|X_{t-1}=x_{t-1},Y_{1:t}=y_{1:t},\theta )$,
and a resampling scheme $\mathcal{M}(a_{t-1}^{1:N}|\bar{w}_{t-1}^{1:N})$, where each $a_{t-1}^{i}=k$ indexes a particle in $(x_{t-1}^{1:N},\bar{w}_{t-1}^{1:N})$, and is sampled with probability $\bar{w}_{t-1}^{k}$.
We refer to \cite{doucetetal2000}, \cite{merveetal2001}, and \cite{guoetal2005} for the choice of importance densities and \cite{doucetal2005} for a comparison between resampling schemes.
Unless stated otherwise, upper case letters indicate random variables and lower case letters indicate the corresponding values of these random variables, e.g., $A_{t}^{j}$ and $a_{t}^{j}$, $X_{t}$ and $x_{t}$.
We denote the vector of particles by
\begin{equation}\label{eq:defineU}
U_{1:T}:=\left( X_{1}^{1:N},\ldots ,X_{T}^{1:N},A_{1}^{1:N},\ldots
,A_{T-1}^{1:N}\right)
\end{equation}
where $a_{t}^{j}$ is the value of the random variable $A_{t}^{j}$ and its sample space by $\mathcal{U}:=\mathcal{X}^{TN}\times \mathbb{N}^{(T-1)N}$.

The Sequential Monte Carlo (SMC) algorithm used here is the same one as in Section 4.1 of \cite{andrieuetal2010}, and is defined in Section~\ref{s:algorithms} and Algorithm~\ref{alg:smc} in the supplementary material.
The algorithm provides an unbiased estimate
\begin{align*}\widehat{Z}_{T}\left(\theta\right)=Z(u_{1:T},\theta ):=\prod_{t=1}^{T}\left(N^{-1}\sum_{i=1}^{N}w_{t}^{i}\right),
 \end{align*}
of the likelihood, where
\begin{align*}
w_1^i & = \frac{f_1^\theta(x_1^i)g_1^\theta(y_1|x_1^i) }{m_1^\theta(x_1^i)} , w_t^i = \frac{g_t^\theta(y_t|x_t^i)f_t^\theta(x_t^i|x_{t-1}^{a_{t-1}^i})} {m_t^\theta(x_t^i|x_{t-1}^{a_{t-1}^i})}
\,\,\,\text{for} \,\,\, t =2,\dots, ,T, \,\,\, \text{and} \,\,\, {\overline w}_t^i = \frac{w_t^i}{\sum_{j=1}^N w_t^j}.
\end{align*}

The joint distribution of the particles given the parameters is
\begin{eqnarray}
{\psi \left( u_{1:T}|\theta \right) :=\prod_{i=1}^{N}m_{1}^{\theta} \left( x_{1}^{i}\right)
\prod_{t=2}^{T}\left\{ \mathcal{M}(a_{t-1}^{1:N}|\bar{w}_{t-1}^{1:N})%
\prod_{i=1}^{N}m_{t}^{\theta }\left( x_{t}^{i}|x_{t-1}^{a_{t-1}^{i}}\right)
\right\}}.  \label{eq:pfdens}
\end{eqnarray}

The key idea of particle MCMC methods is to construct a target distribution on an augmented space that includes the particles $U_{1:T}$ and has a marginal distribution equal to $p(x_{1:T},\theta |y_{1:T})$.
This section describes the target distribution from \cite{andrieuetal2010}. Later sections describe particle MCMC methods to sample from this distribution and hence sample from $p(x_{1:T},\theta |y_{1:T})$.
Section~\ref{appendixbsi} of the supplementary material describes other choices of target distribution and how it is straightforward to modify our results to apply to them.

The simplest way of sampling from the particle approximation of $p(x_{1:T}|y_{1:T},\theta)$ is called \textit{ancestral tracing}. It was introduced in \cite{kitagawa1996} and used in \cite{andrieuetal2010}
and consists of sampling one particle from the final particle filter.
The method is equivalent to sampling an index $J=j$ with probability $\bar w_T^{j}$, tracing back its ancestral lineage $b_{1:T}^j$ ($b_T^j = j$ and $b_{t-1}^j=a_{t-1}^{b_t^j}$)
and choosing the particle $x_{1:T}^j = (x_1^{b_1^j},\dots,x_T^{b_T^j})$.

With some abuse of notation, for a vector $a_{t}$, denote $a_{t}^{(-k)}=\left( a_{t}^{1},\dots ,a_{t}^{k-1},a_{t}^{k+1},\dots,a_{t}^{N}\right) $, with obvious changes for $k\in \{1,N\}$, and denote
\begin{equation*}
u_{1:T}^{(-j)}=\left\{ x_{1}^{(-b_{1}^{j})},\ldots ,x_{T-1}^{(-b_{T-1}^{j})},x_{T}^{(-j)},a_{1}^{(-b_{1}^{1})},\ldots,a_{T-1}^{(-b_{T-1}^{j})}\right\} .
\end{equation*}%
It simplifies the notation to sometimes use the following one-to-one transformation
\begin{equation*}
\left( u_{1:T},j\right) \leftrightarrow \left\{x_{1:T}^{j},b_{1:T-1}^{j},j,u_{1:T}^{(-j)}\right\} ,
\end{equation*}%
and switch between the two representations and use whichever is more convenient. Note that the right hand expression will sometimes be written as $\left\{ x_{1:T},b_{1:T-1},j,u_{1:T}^{(-j)}\right\} $ without ambiguity.

We now assume Assumptions \ref{assu:propstatespace} and \ref{assu:resampling}, given in Section~\ref{s:algorithms} of the online supplement. The target distribution from \cite{andrieuetal2010} is
\begin{align}
\tilde{\pi}^{N}\left( x_{1:T},b_{1:T-1},j,u_{1:T}^{(-j)},\theta
\right) \mathrel{:=}
\frac{p(x_{1:T},\theta |y_{1:T})}{N^{T}}\frac{\psi \left( u_{1:T}|\theta
\right) }{m_{1}^{\theta }\left( x_{1}^{b_{1}}\right) \text{ }\prod_{t=2}^{T}%
\bar{w}_{t-1}^{a_{t-1}^{b_{t}}}m_{t}^{\theta }\left(
x_{t}^{b_{t}}|x_{t-1}^{a_{t-1}^{b_{t}}}\right) },  \label{eq:targetdist}
\end{align}%
where $u_{1:T}$ is given in Eq. \eqref{eq:defineU}. Assumption \ref{assu:propstatespace} ensures that $\tilde{\pi}^{N}\left(u_{1:T}|\theta \right) $ is absolutely continuous with respect to $\psi\left( u_{1:T}|\theta \right) $,
so that $\psi \left( u_{1:T}|\theta \right) $ can be used as a Metropolis-Hastings proposal
density for generating from $\tilde{\pi}^{N}\left( u_{1:T}|\theta \right)$.

From Assumption \ref{assu:resampling}, Eq. (\ref{eq:targetdist}) has the following marginal distribution
\begin{equation}
\tilde{\pi}^{N}\left( x_{1:T},b_{1:T-1},j,\theta \right) =\frac{p(x_{1:T},\theta |y_{1:T})}{N^{T}}, \label{eq:margdist}
\end{equation}%
and hence $\tilde{\pi}^{N}\left( x_{1:T},\theta \right) =p(x_{1:T},\theta |y_{1:T})$. The online supplement gives further details.

\subsection{Conditional sequential Monte Carlo (CSMC)\label{s:CSMC}}

The particle Gibbs algorithm in \cite{andrieuetal2010} uses exact conditional distributions to construct a Gibbs sampler.
If we use the \textit{ancestral tracing }augmented distribution given in (\ref{eq:targetdist}),
then this includes the conditional distribution given by $\tilde{\pi}^{N}\left( u_{1:T}^{(-j)}|x_{1:T}^{j},b_{1:T-1}^{j},j,\theta \right)$,
which involves constructing the particle approximation conditional on a pre-specified path.
The \textit{conditional sequential Monte Carlo} algorithm, introduced in \cite{andrieuetal2010},
is a sequential Monte Carlo algorithm in which a particle $X_{1:T}^{J}=(X_{1}^{B_{1}^{J}},\dots,X_{T}^{B_{T}^{J}})$, and the associated sequence of ancestral indices $B_{1:T-1}^{J}$ are kept unchanged.
In other words, the conditional sequential Monte Carlo algorithm is a procedure that resamples all the particles and indices except for
$U_{1:T}^{J}=(X_{1:T}^{J},A_{1:T-1}^{J})=(X_{1}^{B_{1}^{J}},\dots,X_{T}^{B_{T}^{J}},B_{1}^{J},\dots ,B_{T-1}^{J})$.
Algorithm~\ref{alg:condsmc} of the supplementary material describes the conditional sequential Monte Carlo algorithm (as in \cite{andrieuetal2010}), consistent with $(x_{1:T}^{j},a_{1:T-1}^{j},j)$.

\subsection{Flexible sampling scheme for state space models\label{s:pmwg}}

This section introduces a sampling scheme that is suitable for the state space form given in Section~\ref{s:SSM},
where some of the parameters can be generated exactly conditional on the state vectors using PG step,
but other parameters must be generated using PMMH step. For simplicity,
let $\theta :=(\theta _{1},\theta _{2})$ be a partition of the parameter vector into $2$ components where each component may be a vector.
Let $\Theta =\Theta_{1}\times\Theta _{2}$ be the corresponding partition of the parameter space.
The following sampling scheme generates the vector of parameter $\theta _{1}$ using PMMH step and the vector of parameter $\theta _{2}$ using PG step.
We call this a PMMH+PG sampler.
It is important to note that the components in the parameter vector $\theta _{1}$ can be sampled separately in multiple PMMH steps and the components in the parameter vector $\theta _{2}$
can be sampled separately in multiple Gibbs steps.
Details are given in Section~\ref{s:theory} in the online supplement.


\begin{sscheme}[PMMH+PG Sampler]
\label{ssch:pmwg}
Given initial values for $U_{1:T}$, $J$ and $\theta$, one iteration of the MCMC involves the following steps.

\begin{enumerate}
\item (PMMH sampling) 

\begin{enumerate}
\item Sample $\theta _{1}^{\ast }\sim q_{1,1}(\cdot |U_{1:T}, J, \theta_{2},\theta _{1}).$
\item Sample $U_{1:T}^{\ast }\sim \psi(\cdot |\theta_{2},\theta _{1}^{\ast }).$
\item Sample $J^{\ast } \sim \tilde{\pi}^N (\cdot |U_{1:T}^{\ast}, \theta _{2},\theta _{1}^{\ast }).$
\item Set $(\theta _{1}, U_{1:T}, J ) \leftarrow (\theta _{1}^{\ast }, U_{1:T}^{\ast},J^{\ast })$ with probability
\begin{align}
\alpha _{1} & \left( U_{1:T}, J, \theta _{1};U_{1:T}^{\ast },J^{\ast},\theta _{1}^{\ast }|\theta _{2}\right)  =  1\wedge \nonumber \\
& \frac{\tilde{\pi}^{N}\left( U_{1:T}^{\ast } , \theta _{1}^{\ast}|\theta _{2}\right) }{\tilde{\pi}^{N}\left( U_{1:T}, \theta _{1}|\theta_{2}\right) }\,
\frac{q_{1}(U_{1:T}, \theta _{1}|U_{1:T}^{\ast }, J^{\ast}, \theta _{2},\theta _{1}^{\ast })}{q_{1}(U_{1:T}^{\ast },  \theta _{1}^{\ast}|U_{1:T}, J,\theta _{2},\theta _{1})} ,  \label{eq:PMwGaccprob}
\end{align}%
where%
\begin{eqnarray*}
q_{1}( U_{1:T}^{\ast },\theta _{1}^{\ast } | U_{1:T}, J, \theta
_{2},\theta _{1}) & = & q_{1,1}(\theta _{1}^{\ast }|U_{1:T}, J, \theta _{2},\theta
_{1})
 \psi(U_{1:T}^\ast|\theta _{2},\theta _{1}^{\ast}).
\end{eqnarray*}

\end{enumerate}

\item (PG sampling) 

\begin{enumerate}
\item Sample $\theta _{2}^{\ast }\sim q_{2}(\cdot
|X_{1:T}^{J},B_{1:T-1}^{J},J,\theta _{2},\theta _{1}).$

\item Set $\theta_2 \leftarrow \theta_{2}^{\ast }$ with probability
\begin{eqnarray}
\lefteqn{\alpha_{2}\left(\theta_{2};\theta_{2}^{\ast}|X_{1:T}^{J},B_{1:T-1}^{J},J,\theta _{1}\right) =}  \notag \\
&&1\wedge \frac{\tilde{\pi}^{N}\left( \theta_{2}^{\ast}|X_{1:T}^{J},B_{1:T-1}^{J},J,\theta _{1}\right) }{\tilde{\pi}^{N}\left(\theta_{2}|X_{1:T}^{J},B_{1:T-1}^{J},J,\theta_{1}\right) }
\times \frac{q_{2}(\theta _{2}|X_{1:T}^{J},B_{1:T-1}^{J},J,\theta_{1},\theta_{2}^{\ast })}{q_{2}(\theta_{2}^{\ast }|X_{1:T}^{J},B_{1:T-1}^{J},J,\theta_{1},\theta_{2}) } .
\label{eq:PMwGaccproba}
\end{eqnarray}
\end{enumerate}

\item Sample $U_{1:T}^{(-J)}\sim \tilde{\pi}^{N}(\cdot|X_{1:T}^{J},B_{1:T-1}^{J},J,\theta )$ using the conditional sequential Monte Carlo algorithm (CSMC) discussed in Section ~\ref{s:CSMC}.

\item Sample $J\sim \tilde{\pi}^{N}\left( \cdot |U_{1:T},\theta \right)$.
\end{enumerate}
\end{sscheme}

The generalization of the sampling scheme to the case where the components in the parameter vector $\theta _{1}$ are sampled separately in multiple PMMH steps and the components in the parameter vector $\theta _{2}$
are sampled separately in multiple Gibbs steps is straighforward and involves repeated steps of the same form as given in Part 1 and Part 2 respectively.

Note that Parts 2 to 4 are the same as the particle Gibbs sampler described in \cite{andrieuetal2010} or the particle Metropolis within Gibbs sampler described in \cite{lindstenschon2012}.
Part 1 differs from the particle Marginal Metropolis-Hastings approach discussed in \cite{andrieuetal2010} by generating the variable $J$ which selects the trajectory.
This is necessary since $J$ is used in Part 2.

A major computational cost of the algorithm is generating the particles $p^{\ast}$ times in Part 1, where $p^{\ast}$ is the number of PMMH steps, as well as running the CSMC algorithm in Part 3.
Hence there is a computational cost in using the PMMH+PG sampler compared to a particle Gibbs sampler. Similar comments apply to a blocked PMMH sampler.

Section~\ref{s:theory} of the supplementary material discusses the convergence of Sampling Scheme~\ref{ssch:pmwg}
to its target distribution.

\begin{remark}
 \cite{andrieuetal2010} show that%
\begin{equation}
\frac{\tilde{\pi}^{N}\left( U_{1:T},\theta _{1}|\theta _{2}\right) }{\psi\left( U_{1:T}|\theta _{2},\theta _{1}\right) }=\frac{Z(U_{1:T},\theta)p(\theta _{1}|\theta _{2})}{p\left( y_{1:T}|\theta _{2}\right) },
\label{eq:accprobsimplify1}
\end{equation}%
and hence the Metropolis-Hastings acceptance probability in Eq. (\ref{eq:PMwGaccprob}) simplifies to%
\begin{equation}
1\wedge \frac{Z(\theta _{1}^{\ast },\theta _{2},U_{1:T}^{\ast })}{Z(\theta_{1},\theta _{2},U_{1:T})}\,\frac{q_{1,1}(\theta _{1}|U_{1:T}^{\ast},J^{\ast},\theta _{2},\theta _{1}^{\ast })p(\theta _{1}^{\ast }|\theta
_{2})}{q_{1,1}(\theta _{1}^{\ast }|U_{1:T},J,\theta _{2},\theta_{1})p(\theta _{1}|\theta _{2})}.
\label{eq:accprobsimplify1a}
\end{equation}
Equation~\eqref{eq:accprobsimplify1a} shows the PMMH steps can be viewed as involving a particle approximation to an \textit{ideal} sampler which we  use to estimate the likelihood of the model.
This version of the PMMH algorithm can also be viewed as a Metropolis-Hastings algorithm using an unbiased estimate of the likelihood.
\end{remark}

\begin{remark}
Part 1 of the sampling scheme is a good choice for parameter vector $\theta _{1}$
which is highly correlated with the state vector $X_{1:T}$.
Part 2 of the sampling scheme is a good choice if the parameter vector $\theta_{2}$ is not highly correlated with the states
and it is possible to sample exactly from the distribution $\tilde{\pi}^{N}\left( \theta _{2}|X_{1:T}^{J},B_{1:T-1}^{J},J,\theta _{1}\right)$%
or a good approximation is available as a Metropolis-Hastings proposal. Using Eq. \eqref{eq:margdist}, the Metropolis-Hastings acceptance probability in Eq. \eqref{eq:PMwGaccproba} simplifies to
\begin{equation}
\frac{p\left(y_{1:T}|X_{1:T}^{J},\theta_{2}^{*},\theta_{1}\right)p\left(X_{1:T}^{J}|\theta_{2}^{*},\theta_{1}\right)p\left(\theta_{2}^{*}|\theta_{1}\right)}{p\left(y_{1:T}|X_{1:T}^{J}\theta_{2},\theta_{1}\right)p\left(X_{1:T}^{J}|\theta_{2},\theta_{1}\right)p\left(\theta_{2}|\theta_{1}\right)}\times\frac{q_{2}\left(\theta_{2}|X_{1:T}^{J},B_{1:T-1}^{J},J,\theta_{1},\theta_{2}^{*}\right)}{q_{2}\left(\theta_{2}^{*}|X_{1:T}^{J},B_{1:T-1}^{J},J,\theta_{1},\theta_{2}\right)}.
\end{equation} 
See \cite{lindstenschon2012} for more discussion about the particle
Metropolis-Hastings within Gibbs proposals in Part 2. \end{remark}



\section{Univariate Example: The univariate continuous time Ornstein-Uhlenbeck process\label{SSS: cts time OU process}}
This section applies the PMMH + PG sampler defined in Section \ref{s:pmwg} to the univariate continuous time Ornstein-Uhlenbeck SV model with covariates in the mean.

\subsection{Definition of inefficiency} \label{SS: preliminaries of examples}
To define our measure of the inefficiency of a sampler that takes computing time into account, we first define the integrated autocorrelation time (IACT) for a univariate parameter $\theta$,
\begin{equation}
\textrm{IACT}_\theta:= 1+2\sum_{j=1}^{\infty}\rho_{j,\theta}
\end{equation}
where $\rho_{j,\theta}$ is the correlation of the iterates of $\theta$ in the MCMC after the chain has converged.
A large value of IACT for one or more of the parameters indicates that the chain does not mix well.

We estimate $\textrm{IACT}_\theta$ based on $M$ iterates $\theta^{\left[1\right]},...,\theta^{\left[M\right]}$ (after convergence) as
\begin{align*}
{\wh {\rm IACT}}_{\theta,M} &=1+2\sum_{j=1}^{L_{M}}\wh {\rho}_{j,\theta},
\end{align*}
where $\widehat{\rho}_{j,\theta}$ is the estimate of $\rho_{j,\theta}$, $L_{M}=\min(1000,L)$ and $L=\min_{j\leq M}|\widehat{\rho}_{j,\theta}|<2/\sqrt{M}$ because $1/\sqrt M$
is approximately the standard error of the autocorrelation estimates when the series is white noise.
Let $\widehat{\textrm{IACT}}_{\textrm{MAX}}$ and $\widehat{\textrm{IACT}}_{\textrm{MEAN}}$ be the maximum and mean of the estimated IACT values over all the parameters in the model, respectively.
Our measure of the inefficiency of a sampler based on $\wh {\rm IACT}_{\rm MAX}$ is the time normalized variance (TNV),
\begin{equation}
\textrm{TNV}_{\textrm{MAX}}=\widehat{\textrm{IACT}}_{\textrm{MAX}}\times {\rm CT},
\end{equation}
where $\rm CT$ is the computing time in seconds per iteration; we define the inefficiency of a sampler based on $\wh {\rm IACT}_{\rm MEAN}$  similarly.
The relative time normalized variance (RTNV) shows the TNV relative to our method.

\subsection{The univariate continuous time Ornstein-Uhlenbeck process\label{SSS: subsection cts time OU process}}
 We consider the model
\begin{align}\label{eq:obseqnOU}
y_{t}=z_{t}^{'}\beta+\textrm{exp}\left(h_{t}/2\right)\varepsilon_{t},\qquad\textrm{where}\quad\varepsilon_{t}\sim N\left(0,1\right),
\end{align}
with the log-volatility $h_t$ generated by the continuous time Ornstein-Uhlenbeck (OU) process $\{h_{t}\}_{t\ge 1}$,
introduced by \cite{Stein1991}. This process satisfies,
\begin{equation}
    d h_{t}=\alpha\left(\mu-h_{t}\right) d t+\tau d W_{t},\label{eq:transitiondensityuniv}
\end{equation}
where $W_t$ is a Wiener process.
The transition densities for $h_{t}$ have the closed form \citep[][p. 7]{Lunde2015}
\begin{align}
h_{t}|h_{t-1} & \sim N\left(\mu+\exp\left(-\alpha\right)\left(h_{t-1}-\mu\right),\frac{1-\exp\left(-2\alpha\right)}{2\alpha}\tau^{2}\right),
\label{eq:exact transition}
\end{align}
with $h_{1} \sim N\( \mu , \frac{\tau^2}{2 \alpha}\)$.
This is a state space model of the form given in Section~\ref{s:SSM}  with $x_{1:T}=h_{1:T}$ and whose parameters are $\alpha>0$, $\mu$, $\tau^{2}>0$, and $\left(m_{\beta}\times1\right)$
vector $\beta$. This is a general time series model that allows for
a scalar dependent variable $y_{t}$ with possible dependence on covariates
in the mean as well as stochastic variance terms. Thus, $E\left(y_{t}|z_{t},h_{t},\theta\right)=z_{t}^{'}\beta$,
where $z_{t}$ can consist of lags of $y_{t}$; $\textrm{Var}\left(y_{t}|z_{t},h_{t},\theta\right)=\exp\left(h_{t}\right)$.
The model can be applied to many time series and has been extensively
used in the financial econometrics literature. It is straightforward
to generalise this model in a number of ways: for example, by allowing
for covariates in the conditional variance and including conditional
variance term in the mean. See \citet[pp. 216-221]{durbinkoopman2012},
who discuss the basic stochastic volatility model and some extensions.

Many stochastic volatility diffusion models do not have a closed form transition density,
e.g., the continuous time  GARCH diffusion process \cite{Chib2004,  Kleppe2010} discussed in Section  \ref{S: factor SV model explanation},  and it is then
 necessary to estimate such state space models using  an approximation such as the Euler discretization.
It is therefore  informative to study the relative performance of the PG+PMMH sampler for the OU process using  both the closed form transition equation in
Eq.~\eqref{eq:exact transition} as well as the OU with the Euler approximation in Eq.~\eqref{eq: OU with euler univ}, to see the relative loss due to the approximation.

The Euler scheme approximates the evolution of the log-volatilities ${h}_{t}$ in equation \eqref{eq:transitiondensityuniv} by placing $M-1$ evenly spaced points between times $t$ and $t+1$.
We denote the intermediate volatility components by $h_{t,1},...,h_{t,M-1}$, and it is convenient to set $h_{t,0}=h_{t}$ and $h_{t,M}=h_{t+1}$.
The equation for the Euler evolution, starting at $h_{t,0}$ is (see, for example, \cite{Stramer2011}, pg. 234)

\begin{align}\label{eq: OU with euler univ}
h_{t,j}|h_{t,j-1}\sim N\( h_{t,j-1}+\alpha\left(\mu-h_{t,j-1}\right)\delta, \tau^2\delta \),
\end{align}
for $j=1,...,M$, where $\delta = 1/M$.



\subsection{Empirical results\label{SS: Empirical results for OU}}
We use the following notation to describe the algorithm used in this
example. The basic samplers, as used in Sampling Scheme 1, are $\textrm{PMMH}\left(\cdotp\right)$
and $\textrm{PG}\left(\cdotp\right)$. These samplers can be used
alone or in combination. For example, $\textrm{PMMH}\left(\theta\right)$
means using a PMMH step to sample the parameter vector $\theta$;
$\textrm{PMMH}\left(\theta_{1}\right)+\textrm{PG}\left(\theta_{2}\right)$
means sampling $\theta_{1}$ in the PMMH step and $\theta_{2}$ in
the PG step; and $\textrm{PG}\left(\theta\right)$ means sampling
$\theta$ using the PG sampler. Our general procedure to determine an efficient
sampling scheme is to first run a PG algorithm to identify
which parameters have large IACT, or, in some cases, require a large amount of computational time to generate in the PG step. We then generate these parameters
in the PMMH step.

\subsubsection*{Univariate OU model with exact transition density and no covariate}
In this section, we consider the univariate OU model with exact transition density and
no covariate $\left(m_{\beta}=0\right)$. We compare the performance
of the following samplers: (I) $\textrm{PMMH}\left(\alpha,\tau^{2}\right)+\textrm{PG}\left(\mu\right)$,
(II) the particle Gibbs with ancestral tracing approach of \citet{andrieuetal2010}
$\left(\textrm{PGAT}\left(\mu,\tau^{2},\alpha\right)\right)$, (III)
the particle Gibbs with backward simulation approach of \citet{Lindsten2013} $\left(\textrm{PGBS}\left(\mu,\tau^{2},\alpha\right)\right)$,
(IV) PMMH with an adaptive random walk as the proposal density for
the parameters $\left(\textrm{PMMH-RW}\left(\mu,\tau^{2},\alpha\right)\right)$,
(V) PMMH with the Metropolis adjusted Langevin algorithm (MALA) of
\citet{Nemeth2016} for the proposal for the parameters $\left(\textrm{PMMH-MALA}\left(\mu,\tau^{2},\alpha\right)\right)$,
(VI) the correlated PMMH approach of \citet{Deligiannidis2018} with an adaptive
random walk as the proposal density for the parameters $\left(\textrm{Corr. PMMH-RW}\left(\mu,\tau^{2},\alpha\right)\right)$,
(VII) the correlated PMMH approach of \citet{Deligiannidis2018} with the Metropolis
adjusted Langevin algorithm of \citet{Nemeth2016} as the proposal
for the parameters $\left(\textrm{Corr. PMMH-MALA}\left(\mu,\tau^{2},\alpha\right)\right)$, and
(VIII) the particle Gibbs with data augmentation approach of \citet{Fearnhead2016}
$\left(\textrm{PGDA}\left(\mu,\tau^{2},\alpha\right)\right)$. The
score vector required for the MALA algorithm is estimated efficiently
using methods described in \citet{Nemeth2016a}.
The tuning parameters of the PGDA sampler
are set optimally according to the approach described in \citet{Fearnhead2016}.
The correlated
PMMH proposed by \citet{Deligiannidis2018} correlates the random
vectors $\boldsymbol{u}$ and $\boldsymbol{u}^{'}$ used to construct
the estimators of the likelihood at the current and proposed values
of the parameters ($\theta$ and $\theta^{'}$ respectively).
This is done to reduce the variance
of the difference between $\log\left(Z_{1:T}\left(\theta^{'},\boldsymbol{u}^{'}\right)\right)-\log\left(Z_{1:T}\left(\theta,\boldsymbol{u}\right)\right)$
which appears in the PMMH acceptance ratio.
The correlated PMMH significantly
reduces the number of particles required by the standard pseudo marginal
method proposed by \citet{andrieuetal2010}. We use $N=500$ particles
for the PMMH+PG, PGAT, PGBS, PMMH and PGDA samplers, and $N=50$ for
the correlated PMMH sampler. In this example, we
use the bootstrap particle filter to sample the particles for all samplers and the adaptive
random walk in \citet{robertsrosenthal2009} for the PMMH step in the PMMH+PG sampler as the proposal density
for the parameters. The particle filter and the parameter samplers
are implemented in Matlab.

We apply the methods to a sample of daily US steel industry stock
returns data obtained from the Kenneth French website\footnote{http://mba.tuck.dartmouth.edu/pages/faculty/ken.french/datalibrary.html},
using a sample from January 3rd, 2001 to the 24th of December, 2003,
a total of 1,000 observations. The priors for the OU parameters are
$\alpha\sim IG\left(\frac{v_{0}}{2},\frac{s_{0}}{2}\right)$, $\tau^{2}\sim IG\left(\frac{v_{0}}{2},\frac{s_{0}}{2}\right)$,
where $v_{0}=10$ and $s_{0}=1$, $p\left(\mu\right)\propto1$, and
$p\left(\beta\right)\propto1$. These prior densities cover most possible
values in practice. We ran all the sampling schemes for 11,000 iterations
and discarded the initial 1,000 iterations as warmup for all the methods.

Table \ref{tab:Univariate-OU-model with no covariates} shows the
IACT, TNV, and RTNV values for the parameters in the univariate OU
model with an exact transition density and no covariate estimated using the 8 different samplers described above.
The table shows the following points.
(1) Both the PGAT and PGBS samplers have large IACT values for both parameters
$\alpha$ and $\tau^{2}$,
and we show that putting those two parameters
in the PMMH step improves the mixing significantly.
We show later in this section and in Section \ref{s:simulations} that it is also beneficial to use a PMMH step for at least the $\alpha$
and $\tau^{2}$ parameters for the stochastic volatility diffusion
models that use an approximation such as the Euler discretization.
(2) In terms of $\textrm{TNV}_{\textrm{MEAN}}$, the PMMH+PG sampler
is 3.18, 3.12, 1.08, and 1.51 times better than the PGAT, PGBS, $\textrm{Corr. PMMH-MALA}$,
and PGDA samplers respectively, and the PMMH-RW, PMMH-MALA, and correlated
PMMH-RW methods are 1.33, 2.56, and 1.88 times better than the PMMH+PG sampler, respectively.
Similar conclusions can be made based on $\textrm{TNV}_{\textrm{MAX}}$.
(3) The best sampler for this example is the correlated PMMH-RW. (4)
The PMMH-MALA sampler has lower IACT values for all the parameters
compared to the PMMH-RW sampler, but the correlated PMMH-RW sampler
is better than the correlated PMMH-MALA sampler. This shows that there is no advantage of using particle
MALA over the random walk proposal.
It is therefore important to note that although the correlated PMMH can significantly reduce the number of particles required compared to standard PMMH, the variance of the estimate
of the gradient of the log-posterior is not sufficiently small with the choice of $N=50$ particles used by the correlated PMMH sampler.
This confirms the observation made by \cite{Nemeth2016} who write ``Our results show that the behaviour of particle MALA depends on how accurately we can estimate the gradient of the log-posterior.
If the error in the estimate of the gradient is not controlled sufficiently well as we increase dimension,
then asymptotically there will be no advantage in using particle MALA over a particle MCMC algorithm using a random-walk proposal''.
(5) The PGDA sampler has lower IACT values for both $\alpha$ and $\tau^{2}$ parameters compared to the PGBS and PGAT samplers, but it has higher IACT value for $\mu$.
This shows that the PGDA sampler is useful to improve the mixing of the parameters that are highly correlated with the states.
 

\begin{table}[H]
\caption{Inefficiency factors of $\alpha$, $\tau^{2}$, and $\mu$ for the Univariate OU model with an exact transition density and without covariates for the US steel industry stock
returns data with $T=1000$. Sampler I: $\textrm{PMMH}\left(\alpha,\tau^{2}\right)+\textrm{PG}\left(\mu\right)$,
Sampler II: $\textrm{PGAT}\left(\mu,\tau^{2},\alpha\right)$, Sampler
III: $\textrm{PGBS}\left(\mu,\tau^{2},\alpha\right)$, Sampler IV:
$\textrm{PMMH-RW}\left(\mu,\tau^{2},\alpha\right)$, Sampler V: $\textrm{PMMH-MALA}\left(\mu,\tau^{2},\alpha\right)$,
Sampler VI: $\textrm{Correlated PMMH-RW}\left(\mu,\tau^{2},\alpha\right)$,
Sampler VII: $\textrm{Correlated PMMH-MALA}\left(\mu,\tau^{2},\alpha\right)$,
and Sampler VIII: $\textrm{PGDA}\left(\mu,\tau^{2},\alpha\right)$.
\label{tab:Univariate-OU-model with no covariates}}

\centering{}%
\begin{tabular}{ccccccccc}
\hline 
Param & I & II & III & IV & V & VI & VII & VIII\tabularnewline
\hline 
$\alpha$ & 12.01 & 50.21 & 40.12 & 15.02 & 4.62 & 13.00 & 12.38 & 18.06\tabularnewline
$\mu$ & 1.56 & 1.65 & 1.48 & 12.81 & 4.59 & 14.17 & 28.77 & 9.16\tabularnewline
$\tau^{2}$ & 13.49 & 85.46 & 70.98 & 12.64 & 4.74 & 11.18 & 17.20 & 19.42\tabularnewline
\hline 
$\widehat{\textrm{IACT}}_{\textrm{MAX}}$ & 13.49 & 85.46 & 70.98 & 15.02 & 4.74 & 14.17 & 28.77 & 19.42\tabularnewline
$\widehat{\textrm{TNV}}_{\textrm{MAX}}$ & 2.16 & 8.55 & 8.52 & 1.20 & 0.57 & 0.85 & 2.30 & 2.72\tabularnewline
$\widehat{\textrm{RTNV}}_{\textrm{MAX}}$ & 1 & 3.95 & 3.94 & 0.56 & 0.26 & 0.39 & 1.06 & 1.25\tabularnewline
\hline 
$\widehat{\textrm{IACT}}_{\textrm{MEAN}}$ & 9.02 & 45.77 & 37.53 & 13.49 & 4.65 & 12.78 & 19.45 & 15.55\tabularnewline
$\widehat{\textrm{TNV}}_{\textrm{MEAN}}$ & 1.44 & 4.58 & 4.50 & 1.08 & 0.56 & 0.77 & 1.56 & 2.17\tabularnewline
$\widehat{\textrm{RTNV}}_{\textrm{MEAN}}$ & 1 & 3.18 & 3.12 & 0.75 & 0.39 & 0.53 & 1.08 & 1.51\tabularnewline
\hline 
Time & 0.16 & 0.10 & 0.12 & 0.08 & 0.12 & 0.05 & 0.08 & 0.14\tabularnewline
\hline 
\end{tabular}
\end{table}

\subsubsection*{Univariate OU model with exact transition density and 50 covariates}
We now consider the univariate OU model with an exact transition density and $m_{\beta}=50$
covariates. We compare the performance of the following samplers:
(1) $\textrm{PMMH}\left(\alpha,\tau^{2}\right)+\textrm{PG}\left(\mu,\beta\right)$,
(2) $\textrm{PGAT}\left(\mu,\tau^{2},\alpha,\beta\right)$, (3) $\textrm{PGBS}\left(\mu,\tau^{2},\alpha,\beta\right)$,
(4) $\textrm{PMMH-RW}\left(\mu,\tau^{2},\alpha,\beta\right)$, (5)
$\textrm{PMMH-MALA}\left(\mu,\tau^{2},\alpha,\beta\right)$, (6) $\textrm{Corr. PMMH-RW}\left(\mu,\tau^{2},\alpha,\beta\right)$,
(7) $\textrm{Corr. PMMH-MALA}\left(\mu,\tau^{2},\alpha,\beta\right)$,
and (8) $\textrm{PGDA}\left(\mu,\tau^{2},\alpha,\beta\right)$. We
use $N=500$ particles for the PMMH+PG, PGAT, PGBS, PMMH,
and PGDA samplers, and $N=50$ for the correlated PMMH sampler.
We simulated data with
$T=1000$ and set $\alpha=0.09$, $\mu=0.38$, $\tau^{2}=0.08$, and
$\beta_{i}=0.1$ for $i=1,...,m_{\beta}$. The covariates are $z_{t}\sim N\left(0,I_{50}\right)$. 

Table \ref{tab:Univariate-OU-model with covariate} shows the IACT,
TNV, and RTNV values for the parameters in the univariate OU model
with an exact transition density and 50 covariates estimated using the 8 different samplers listed above.
The table shows the following points.
(1) The best sampler for this example is the PMMH+PG sampler. This example shows how the PMMH and PG samplers can be combined in a flexible manner
to obtain good results. In this example, the vector of parameters
$\beta$ are high dimensional and not highly correlated with the states,
so it is important to generate them in a PG step.
Both $\alpha$ and $\tau^{2}$ are generated in a
PMMH step because they are highly correlated with the states. (2)
The standard and correlated PMMH with adaptive random walks are much
worse than the PMMH+PG sampler because the adaptive random walk proposal is inefficient in high
dimensions.
(3) The correlated PMMH with the MALA proposal is worse than
the correlated PMMH with an adaptive random walk proposal and is the
worst sampler in this example because the variance of the gradient of log-posterior is not sufficiently small with the
number of particles set to $N=50$.
(4) The PGDA sampler has very large IACT
values for all parameters indicating that the PGDA sampler does not
perform well for models with a large number of parameters.

Figure \ref{fig:The-Inefficiency-Factors log-volatilities} shows
the RTNV of the PMMH+PG sampler over other samplers for the log-volatilities
$h_{1:T}$ for all $t$. The figure shows that the PMMH+PG sampler
is much more efficient than the standard and correlated PMMH samplers and the PGDA sampler.
It is only slightly worse than the PGAT and PGBS samplers.

\begin{table}[H]
\caption{Inefficiency factors of $\alpha$, $\tau^{2}$, and $\mu$ for the Univariate OU model with an exact transition density and $m_{\beta}=50$ covariates for the simulated data with $T=1000$.
Sampler I: $\textrm{PMMH}\left(\alpha,\tau^{2}\right)+\textrm{PG}\left(\beta,\mu\right)$,
Sampler II: $\textrm{PGAT}\left(\beta,\mu,\tau^{2},\alpha\right)$, Sampler
III: $\textrm{PGBS}\left(\beta,\mu,\tau^{2},\alpha\right)$, Sampler IV:
$\textrm{PMMH-RW}\left(\beta,\mu,\tau^{2},\alpha\right)$, Sampler V: $\textrm{PMMH-MALA}\left(\beta,\mu,\tau^{2},\alpha\right)$,
Sampler VI: $\textrm{Correlated PMMH-RW}\left(\beta,\mu,\tau^{2},\alpha\right)$,
Sampler VII: $\textrm{Correlated PMMH-MALA}\left(\beta,\mu,\tau^{2},\alpha\right)$,
and Sampler VIII: $\textrm{PGDA}\left(\beta,\mu,\tau^{2},\alpha\right)$.
\label{tab:Univariate-OU-model with covariate}}

\centering{}%
\begin{tabular}{ccccccccc}
\hline 
Param & I & II & III & IV & V & VI & VII & VIII\tabularnewline
\hline 
$\alpha$ & 11.15 & 47.14 & 40.94 & 281.68 & 33.15 & 135.17 & 561.44 & 356.24\tabularnewline
$\mu$ & 1.58 & 1.73 & 1.81 & 377.59 & 17.79 & 84.31 & 931.89 & 211.48\tabularnewline
$\tau^{2}$ & 14.50 & 95.55 & 71.83 & 341.19 & 20.43 & 81.17 & 1368.65 & 296.52\tabularnewline
$\textrm{mean}\left(\beta\right)$ & 1.52 & 1.57 & 1.46 & 165.50 & 14.43 & 131.88 & 958.51 & 276.13\tabularnewline
$\max\left(\beta\right)$ & 1.80 & 1.95 & 1.71 & 545.26 & 21.76 & 434.50 & 1445.25 & 690.57\tabularnewline
\hline 
$\widehat{\textrm{IACT}}_{\textrm{MAX}}$ & 14.50 & 95.55 & 71.83 & 545.26 & 33.15 & 434.50 & 1445.25 & 690.57\tabularnewline
$\widehat{\textrm{TNV}}_{\textrm{MAX}}$ & 2.47 & 9.55 & 9.34 & 43.62 & 7.96 & 26.07 & 130.07 & 227.89\tabularnewline
$\widehat{\textrm{RTNV}}_{\textrm{MAX}}$ & 1 & 3.87 & 3.78 & 17.66 & 3.22 & 10.55 & 52.66 & 92.26\tabularnewline
\hline 
$\widehat{\textrm{IACT}}_{\textrm{MEAN}}$ & 1.95 & 4.21 & 3.53 & 175.00 & 14.96 & 130.10 & 958.26 & 276.81\tabularnewline
$\widehat{\textrm{TNV}}_{\textrm{MEAN}}$ & 0.33 & 0.42 & 0.46 & 14.00 & 3.59 & 7.81 & 86.24 & 91.35\tabularnewline
$\widehat{\textrm{RTNV}}_{\textrm{MEAN}}$ & 1 & 1.27 & 1.39 & 42.42 & 10.88 & 23.67 & 261.33 & 276.82\tabularnewline
\hline 
Time & 0.17 & 0.10 & 0.13 & 0.08 & 0.24 & 0.06 & 0.09 & 0.33\tabularnewline
\hline 
\end{tabular}
\end{table}

\begin{figure}[H]

\caption{The Inefficiency Factors for the log-volatilities $h_{1:T}$ for the
univariate OU model with 50 covariates for simulated data with $T=1000$.
The relative Time Normalised Variance (RTNV) is computed relative
to the PMMH+PG sampler\label{fig:The-Inefficiency-Factors log-volatilities}}

\centering{}\includegraphics[width=15cm,height=10cm]{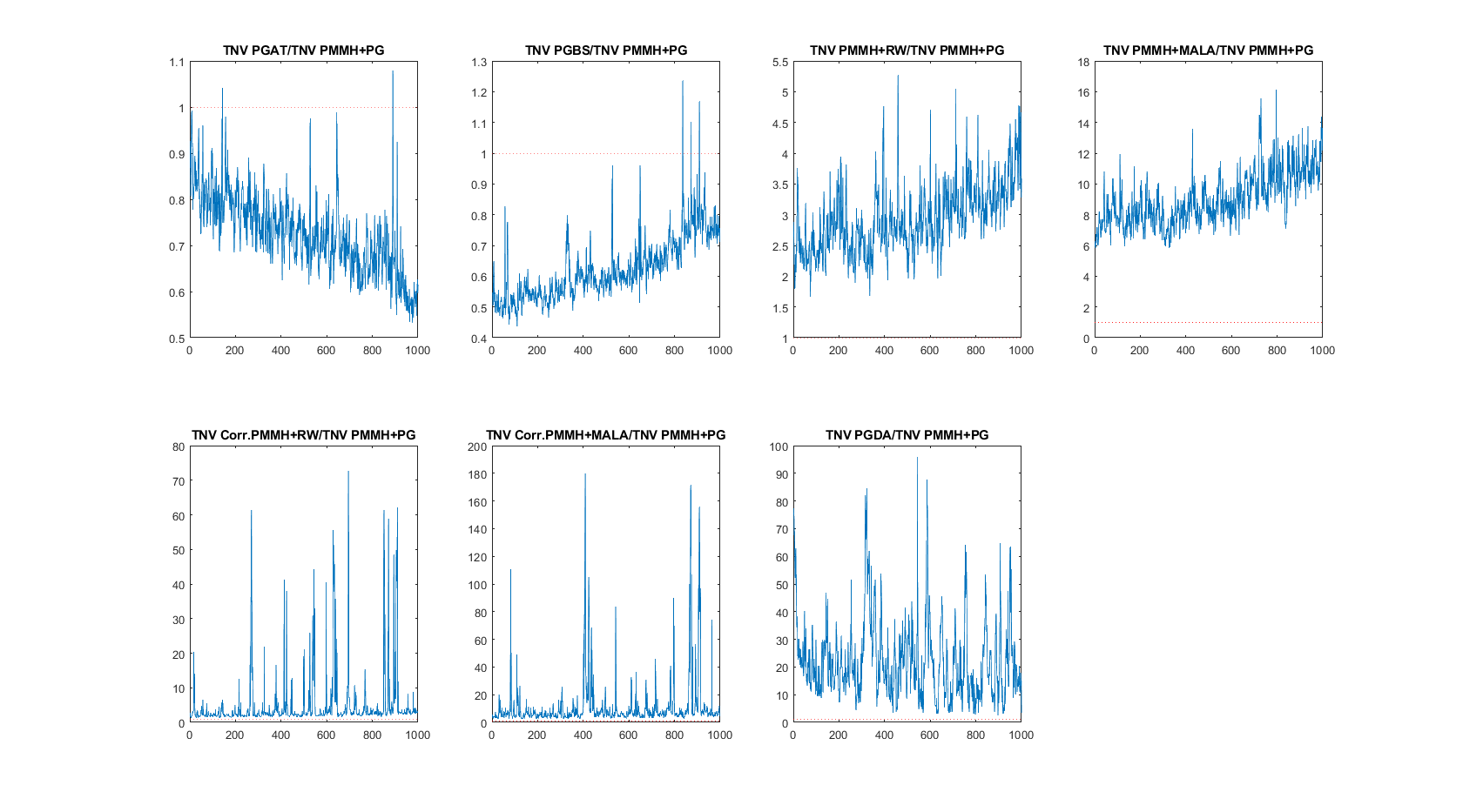}
\end{figure}

\subsubsection*{Univariate OU model with Euler approximation for the state transition density and 50 covariates}
Lastly, we consider the univariate OU model with an Euler approximation for the state transition density and $m_{\beta}=50$
covariates. We compare the performance of the following samplers:
(1) $\textrm{PMMH}\left(\mu,\alpha,\tau^{2}\right)+\textrm{PG}\left(\beta\right)$,
(2) $\textrm{PGAT}\left(\mu,\tau^{2},\alpha,\beta\right)$, (3) $\textrm{PGBS}\left(\mu,\tau^{2},\alpha,\beta\right)$.
We used $N=500$ particles for all samplers and $M=10$ latent points for the Euler approximation of the state transition density.

Table \ref{tab:Univariate-OU-model with covariate-Euler} shows the IACT, TNV, and RTNV values for the parameters in the univariate OU model with an Euler approximation for the state transition density and 50 covariates.
The table shows the following points.
(1) The PMMH+PG samplers with exact and approximate state transition densities have very similar IACT values suggesting
that the inefficiency of the PMMH+PG sampler does not deteriorate when the Euler approximation is used. However, both the PGAT and PGBS samplers using the Euler approximation are significantly worse than the
PGAT and PGBS samplers with exact transition densities.
(2) The best sampler is the PMMH+PG sampler.
(3)  It is interesting to see that when we use an Euler approximation for the diffusion the  PMMH+PG, PGAT,
and PGBT samplers all take approximately the same computing time. This is because the PGAT and PGBT samplers need to store and trace back all the latent log-volatilities $h_{t}$
and the $M$ latent data points between $t$ and $t+1$ for all  $t=1,...,T$, whereas the PMMH+PG sampler only needs to store and trace back the latent log-volatilities $h_{t}$ for all $t=1,...,T$.
Therefore, the PMMH+PG sampler is also more efficient in terms of memory usage if it is necessary to use an Euler approximation.

In summary, in this univariate example, we show the following points.
(1) The inefficiency of the PMMH+PG sampler does not deteriorate when the Euler approximation is used, whereas both the PGAS and PGAT samplers are significantly worse.
(2) PGDA is useful to improve the mixing of the parameters that are highly correlated with the states, but it does not work for models with many parameters.
(3) The PMMH+PG sampler is much more efficient than the standard and correlated PMMH samplers with adaptive random walk proposals because the random walk proposals are inefficient in high dimensions.
(4) There is no advantage of using particle
MALA over the random walk proposal when the variance of the estimate of the gradient of the log-posterior is not sufficiently small.
(5) It is desirable to generate
parameters that are highly correlated with the states using a PMMH step that
does not condition on the states. Conversely, if there is a subset of
parameters that is not highly correlated with the states, then it is
preferable to generate them using a particle Gibbs step, or a particle
Metropolis within Gibbs step, that conditions on the states, especially when
the subset is large. In general, using PG may be preferred to PMMH whenever possible, because it may be easier to obtain better proposals within a PG framework. 
(6) Our PMMH + PG approach can be further refined by using the data augmented PMMH and PG sampling schemes proposed by \citet{Fearnhead2016}
and the refined proposals for the PMMH sampling scheme by \citet{dahlin2015} and \citet{Nemeth2016}. 


\begin{table}[H]
\caption{Univariate OU model with $m_{\beta}=50$ covariates and Euler approximation
for the state transition density for the simulated data with $T=1000$.
Sampler I: $\textrm{PMMH}\left(\alpha,\tau^{2},\mu\right)+\textrm{PG}\left(\beta\right)$,
Sampler II: $\textrm{PGAT}\left(\beta,\mu,\tau^{2},\alpha\right)$,
Sampler III: $\textrm{PGBS}\left(\beta,\mu,\tau^{2},\alpha\right)$.
\label{tab:Univariate-OU-model with covariate-Euler}}

\centering{}%
\begin{tabular}{cccc}
\hline 
Param & I & II & III\tabularnewline
\hline 
$\alpha$ & 12.23 & 175.33 & 130.71\tabularnewline
$\mu$ & 13.56 & 18.09 & 15.22\tabularnewline
$\tau^{2}$ & 10.99 & 403.72 & 347.64\tabularnewline
$\textrm{mean}\left(\beta\right)$ & 1.52 & 1.55 & 1.46\tabularnewline
$\max\left(\beta\right)$ & 1.72 & 1.87 & 1.72\tabularnewline
\hline 
$\widehat{\textrm{IACT}}_{\textrm{MAX}}$ & 13.56 & 403.72 & 347.64\tabularnewline
$\widehat{\textrm{TNV}}_{\textrm{MAX}}$ & 3.53 & 117.08 & 111.24\tabularnewline
$\widehat{\textrm{RTNV}}_{\textrm{MAX}}$ & 1 & 33.17 & 31.51\tabularnewline
\hline 
$\widehat{\textrm{IACT}}_{\textrm{MEAN}}$ & 2.13 & 12.73 & 10.69\tabularnewline
$\widehat{\textrm{TNV}}_{\textrm{MEAN}}$ & 0.55 & 3.69 & 3.42\tabularnewline
$\widehat{\textrm{RTNV}}_{\textrm{MEAN}}$ & 1 & 6.71 & 6.22\tabularnewline
\hline 
Time & 0.26 & 0.29 & 0.32\tabularnewline
\hline 
\end{tabular}
\end{table}

\section{Multivariate Example\label{PMMH+PG SV factor}}
This section applies the ideas in this paper to the multivariate factor stochastic volatility model, which is a serious complex example.
It also shows how a complex particle MCMC scheme can be built from the basic PMMH + PG sampler in Section~\ref{s:pmwg}.
Section \ref{S: factor SV model explanation} discusses the multivariate factor stochastic volatility model.
Section \ref{s:simulations} compares the performance of the PMMH+PG sampler to other competing PMCMC methods to estimate multivariate factor SV models using both simulated and real datasets.

\subsection{The factor stochastic volatility model \label{S: factor SV model explanation}}
Factor stochastic volatility (SV) models are a popular approach to jointly model many co-varying financial time series,
as they are able to capture their common features using only a small number of latent factors (see, e.g., \cite{Chib2006} and \cite{Kastner:2017}).
However, estimating time-varying multivariate factor SV models can be very challenging because the likelihood involves calculating an integral over a very high-dimensional latent state space,
and the number of parameters in the model can be large. 

We consider a factor SV model with the volatilities of the factors following a traditional SV model \citep{Chib2006, Kastner:2017},
while the log volatilities of the idiosyncratic errors follow continuous time Ornstein-Uhlenbeck (OU) processes \citep{Stein1991} or GARCH diffusion processes \citep{Chib2004,  Kleppe2010}.
The log volatility of an OU process admits a closed form state transition density,
see Section~\ref{SSS: subsection cts time OU process}, whereas the GARCH diffusion process does not.
Our estimation methods are applied to Euler approximations of the diffusion process driving the log volatilities, and
hence can handle diffusions that do not admit closed form transition densities; see \cite{Ignatieva2015} for other diffusions whose transition equations need an Euler approximation
because they cannot be expressed in closed form.
It is informative to study the closed form and Euler approximation for the state transition density for the OU process in the multivariate case to see the relative loss due to the approximation.


%


Suppose that $\boldsymbol{P}_{t}$ is a $S\times1$ vector of daily stock prices and define $\boldsymbol{y}_{t}:=\log\boldsymbol{P}_{t}-\log\boldsymbol{P}_{t-1}$ as the log-return of the stocks.
We model ${\bs y_t}$ as the factor SV  model
\begin{equation}
\boldsymbol{y}_{t}=\boldsymbol{\beta}\boldsymbol{f}_{t}+\boldsymbol{V}_{t}^{\frac{1}{2}}\boldsymbol{\epsilon}_{t} \quad (t = 1, \ldots, T),\label{eq:discretisationfactor}
\end{equation}
where $\boldsymbol{f}_{t}$ is a $K\times1$ vector of latent factors (with $K\ll S$), $\boldsymbol{\beta}$ is a $S\times K$ factor loading matrix of unknown parameters.
Appendix \ref{SS: sampling loading matrix} gives further details on the restrictions on $\boldsymbol{\beta}$.
We model the latent factors as $\boldsymbol{f}_{t}\sim N\left(0,\boldsymbol{D}_{t}\right)$ and $\boldsymbol{\epsilon}_{t}\sim N\left(0,I\right)$,
so that $\boldsymbol{y}_{t}|(\boldsymbol{f}_{t},\boldsymbol{h}_{t})\sim N\left(\boldsymbol{\beta}\boldsymbol{f}_{t},\boldsymbol{V}_{t}\right)$.
The time-varying variance matrices $\boldsymbol{D}_{t}$ and $\boldsymbol{V}_{t}$ depend on unobserved random variables
$\boldsymbol{\lambda}_{t}=\left(\lambda_{1,t},...,\lambda_{K,t}\right)$ and $\boldsymbol{h}_{t}=\left(h_{1,t},...,h_{S,t}\right)$ such that
\[
\boldsymbol{D}_{t}  := \textrm{diag}\left(\exp\left(\lambda_{1,t}\right),...,\exp\left(\lambda_{K,t}\right)\right), \quad
\boldsymbol{V}_{t}  :=\textrm{diag}\left(\exp\left(h_{1,t}\right),...,\exp\left(h_{S,t}\right)\right).
\]
Each $\lambda_{k,t}$ is assumed to follow an independent autoregressive process
\begin{equation}
\lambda_{k,t}=\phi_{k}\lambda_{k,t-1}+\tau_{f,k}\eta_{k,t}, \quad k=1,...,K,\label{eq:SVtransition}
\end{equation}
with $\eta_{k,t}\sim N\left(0,1\right)$.
The log volatilities $h_{s,t}$ follow a either a Gaussian OU continuous time volatility process or a GARCH diffusion continuous time volatility process.

The continuous time Ornstein-Uhlenbeck (OU) process $\{h_{s,t}\}_{t\ge 1}$ discussed in Section~\ref{SSS: subsection cts time OU process} satisfies
\begin{equation}
    d h_{s,t}=\alpha_{s}\left(\mu_{s}-h_{s,t}\right) d t+\tau_{\epsilon,s} d W_{s,t},\quad \textrm{for}\quad \;s=1,...,S,\label{eq:transitiondensity}
\end{equation}
where $W_{s,t}$ is a Wiener process. The transition distribution for each $h_{s,t}$ is \citep[][p. 7]{Lunde2015}
\begin{align}
h_{s,t}|h_{s,t-1} & \sim N\left(\mu_{s}+\exp\left(-\alpha_{s}\right)\left(h_{s,t-1}-\mu_{s}\right),\frac{1-\exp\left(-2\alpha_{s}\right)}{2\alpha_{s}}\tau_{\epsilon,s}^{2}\right), \quad s = 1, \dots, S.\label{eq:exact transition}
\end{align}
with $h_{s,1} \sim N\( \mu_s , \frac{\tau_{\epsilon,s}^2}{2 \alpha_s}\)$. The parameters are $\alpha_s>0$, $\mu_s$ and $\tau_{\epsilon,s}^2>0$.

The Euler scheme approximates the evolution of the log-volatilities ${h}_{s,t}$ in equation \eqref{eq:transitiondensity}.
We use the approach in Section~\ref{SSS: subsection cts time OU process} by placing $M-1$ evenly spaced points between times $t$ and $t+1$.
The intermediate volatility components are denoted by $h_{s,t,1},...,h_{s,t,M-1}$, and it is convenient to set $h_{s,t,0}=h_{s,t}$ and $h_{s,t,M}=h_{s,t+1}$.
The equation for the Euler evolution, starting at $h_{s,t,0}$ is (see, for example, \cite{Stramer2011}, pg. 234)
\begin{align}
h_{s,t,j}|h_{s,t,j-1}\sim N\( h_{s,t,j-1}+\alpha_s\left(\mu_s-h_{s,t,j-1}\right)\delta, \tau_{\epsilon,s}^2\delta \),\label{eq:ou approximatetransition multiple}
\end{align}
for $j=1,...,M$, where $\delta = 1/M$.

The continuous time  GARCH diffusion process $\{h_{s,t}\}_{t\ge 1}$ \citep{Chib2004,  Kleppe2010} satisfies
\begin{equation}
dh_{s,t}=\left\{ \alpha_{s}\left(\mu_{s}-\exp\left(h_{s,t}\right)\right)\exp\left(-h_{s,t}\right)-\frac{\tau_{\epsilon,s}^{2}}{2}\right\} dt+\tau_{\epsilon,s}dW_{s,t},\;\;\textrm{for}\;s=1,...,S\label{eq:GARCHtransitiondensity},
\end{equation}
where the $W_{s,t}$ are independent Wiener processes.
The Euler approximation of the state transition density of equation \eqref{eq:GARCHtransitiondensity} yields the transition density between steps (see for example, \cite{Wuetal2018}, pg. 21)
\begin{align}
h_{s,t,j+1}|h_{s,t,j}\sim N\left(h_{s,t,j}+\left\{ \alpha_{s}\left(\mu_{s}-\exp\left(h_{s,t,j}\right)\right)\exp\left(-h_{s,t,j}\right)-\frac{\tau_{\epsilon,s}^{2}}{2}\right\} \delta,\tau_{\epsilon,s}^{2}\delta\right)
\label{eq:GARCH Euler transitiondensity}
\end{align}
for $j=0,...,M-1$, where $\delta = 1/M$.

We denote the parameter vector for the factor stochastic volatility model given by equations \eqref{eq:discretisationfactor}, \eqref{eq:SVtransition}
and either \eqref{eq:exact transition}, \eqref{eq:ou approximatetransition multiple} or \eqref{eq:GARCH Euler transitiondensity} by
\begin{align*}
\boldsymbol{\omega} = \left( \boldsymbol{\beta}; (\phi_k, \tau_{f,k}), k = 1, \ldots, K; (\alpha_s, \mu_s, \tau_{\epsilon,s}), s = 1, \ldots, S  \right).
\end{align*}


Although the factor SV model can be written in state space form as in Section~\ref{s:SSM},
it is more efficient to take advantage of the extra structure in the model and base the sampling scheme on multiple independent univariate state space models.
The next section outlines the conditional independence structure in the factor SV model.
Sections \ref{sec:Sampling-Schemes-Factor model} and \ref{ss: sampling scheme for the factor SV model} of the supplement give
the more complex target density and sampling schemes required for estimating the posterior distribution of the factor SV model.

\subsection*{Conditional independence in the factor SV model \label{sss: Conditional Independence}}

The key to making the estimation of the factor SV model tractable is that the factor SV model in equation \eqref{eq:discretisationfactor} separates into
independent components consisting of $K$ univariate SV models for the latent
factors and $S$ univariate state space models for the idiosyncratic errors
given the values of $\left(\boldsymbol{y}_{1:T},\boldsymbol{f}_{1:T},\boldsymbol{\omega}\right)$ and the conditional
independence of the innovations of the returns.
The sampling scheme generates the latent factors and factor loading matrix in PG steps and then,
conditioning on the them, estimates a series of univariate state space models.
For $k=1,...,K$, we have that
\begin{align}
f_{k,t}|\lambda_{k,t} \sim N\left(0,\exp\left(\lambda_{k,t}\right)\right), \label{eq:PMMH+PG SV obs density}
\end{align}
with the transition density in equation \eqref{eq:SVtransition}. For
$s=1,...,S$, we have
\begin{align}
y_{s,t}|\boldsymbol{f}_{t},h_{s,t}\sim N\left(\boldsymbol{\beta}_{s}\boldsymbol{f}_{t},\exp\left(h_{s,t}\right)\right),\label{eq:PMMH+PG OU obs density}
\end{align}
with the exact and approximate transition densities given in equations
\eqref{eq:exact transition}, \eqref{eq:ou approximatetransition multiple} or \eqref{eq:GARCH Euler transitiondensity}.

Section \ref{s:simulations} shows on both simulated and real data that the PMMH+PG sampler works well.
We note that our example merely illustrates our methods which can naturally handle multiple factors and most types of log-volatilites for both the factors and idiosyncratic errors.


\subsection{Empirical Studies\label{s:simulations}}

This section presents empirical results  for the factor SV model described
in Section \ref{S: factor SV model explanation} to illustrate the flexibility of the  sampling approach given in our article.
Section \ref{ss:simulationandapplication} presents a simulation study for the factor SV model with the idiosyncratic log-volatilities following Gaussian OU processes with exact and approximate transition densities.
Section \ref{SS: US stock returns} presents empirical results for the factor SV model with the idiosyncratic log-volatilities following
Gaussian OU processes and GARCH diffusion processes using a sample of daily US industry stock returns data.   

We use the same notation as Section \ref{SS: Empirical results for OU} to describe the algorithms in this study.
For example, the basic sampler, as used in Sampling Scheme 1, is
$\textrm{PMMH}\left(\theta_{1}\right)+\textrm{PG}\left(\theta_{2}\right)$
sampling the parameter vector $\theta_{1}$ in the PMMH step and $\theta_{2}$ in
the PG step.
Our general procedure to determine an efficient
sampling scheme is to first run a PG algorithm to identify
which parameters have large IACTs, or, in some cases, require a large amount of computational time to generate in the PG step. We then generate these parameters
in the PMMH step. 

\subsubsection{Simulation Study \label{ss:simulationandapplication}}
We conducted a simulation study for the factor SV model with the idiosyncratic
log-volatilities following Gaussian OU continuous time volatility processes
with exact and approximate transition densities. 

We compare the performance
of the samplers listed below.
Section~\ref{SS: Empirical results for OU} gives the notation for the samplers.
The samplers are: (I) $\textrm{PMMH}\left(\boldsymbol{\alpha},\boldsymbol{\tau}_{\epsilon}^{2},\boldsymbol{\tau}_{f}^{2}\right)+\textrm{PG}\left(\boldsymbol{f}_{1:T},\boldsymbol{\beta},\boldsymbol{\mu},\phi\right)$
for the Gaussian OU model with exact transition densities and $\textrm{PMMH}\left(\boldsymbol{\alpha},\boldsymbol{\tau}_{\epsilon}^{2},\boldsymbol{\tau}_{f}^{2},\boldsymbol{\mu}\right)+\textrm{PG}\left(\boldsymbol{f}_{1:T},\boldsymbol{\beta},\phi\right)$
for the Gaussian OU model with approximate transition densities, (II)
$\textrm{PGAT}\left(\boldsymbol{f}_{1:T},\boldsymbol{\beta},\boldsymbol{\alpha},\boldsymbol{\tau}_{\epsilon}^{2},\boldsymbol{\mu},\phi,\boldsymbol{\tau}_{f}^{2}\right)$,
(III) $\textrm{PGBS}\left(\boldsymbol{f}_{1:T},\boldsymbol{\beta},\boldsymbol{\alpha},\boldsymbol{\tau}_{\epsilon}^{2},\boldsymbol{\mu},\phi,\boldsymbol{\tau}_{f}^{2}\right)$,
(IV) $\textrm{PMMH-RW}\left(\boldsymbol{f}_{1:T},\boldsymbol{\beta},\boldsymbol{\alpha},\boldsymbol{\tau}_{\epsilon}^{2},\boldsymbol{\mu},\phi,\boldsymbol{\tau}_{f}^{2}\right)$,
(V) $\textrm{PMMH-MALA}\left(\boldsymbol{f}_{1:T},\boldsymbol{\beta},\boldsymbol{\alpha},\boldsymbol{\tau}_{\epsilon}^{2},\boldsymbol{\mu},\phi,\boldsymbol{\tau}_{f}^{2}\right)$,
(VI) $\textrm{Corr. PMMH-RW}\left(\boldsymbol{f}_{1:T},\boldsymbol{\beta},\boldsymbol{\alpha},\boldsymbol{\tau}_{\epsilon}^{2},\boldsymbol{\mu},\phi,\boldsymbol{\tau}_{f}^{2}\right)$,
(VII) $\textrm{Corr. PMMH-MALA}\left(\boldsymbol{f}_{1:T},\boldsymbol{\beta},\boldsymbol{\alpha},\boldsymbol{\tau}_{\epsilon}^{2},\boldsymbol{\mu},\phi,\boldsymbol{\tau}_{f}^{2}\right)$,
(VIII) $\textrm{PGDA}\left(\boldsymbol{f}_{1:T},\boldsymbol{\beta},\boldsymbol{\alpha},\boldsymbol{\tau}_{\epsilon}^{2},\boldsymbol{\mu},\phi,\boldsymbol{\tau}_{f}^{2}\right)$.
We first compare the three samplers PMMH+PG, PGAT, and PGBS and then discuss the PMMH and PGDA sampling schemes for the factor SV model.

We simulated data with $T=1,000$ observations, $S=20$ stocks, and $K=1$ factors
from the factor SV model in equation~\eqref{eq:discretisationfactor},
setting $\alpha_{s}=0.06$, and $\tau_{\epsilon,s}^{2}=0.1$ for all $s$, $\phi_{1}=0.98$, $\tau_{f,1}^{2}=0.1$ and $\beta_{s}=0.8$ for all $s$.
We chose independent Gaussian priors for every unrestricted element of the factor loading matrix $\boldsymbol{\beta}$,
i.e. $\beta_{s,k}\sim N\left(0,1\right)$.
The priors for the state transition density parameters are
$\alpha_{s}\sim IG\left(\frac{v_{0}}{2},\frac{s_{0}}{2}\right)$, $\tau_{\epsilon,s}^{2}\sim IG\left(\frac{v_{0}}{2},\frac{s_{0}}{2}\right)$, $\tau_{f,k}^{2}\sim IG\left(\frac{v_{0}}{2},\frac{s_{0}}{2}\right)$,
where $v_{0}=10, s_{0}=1$, and $\phi_{k}\sim U\left(-1,1\right)$. These prior densities cover most possible values in practice.
The initial state of $\lambda_{k,t}$ is assumed normally distributed $N\left(0,\frac{\tau_{f,k}^{2}}{1-\phi_{k}^{2}}\right)$, for $k=1,...,K$.
The initial state of $h_{s,t}$ is also assumed normally distributed $N\left(\mu_{s},\frac{\tau_{\epsilon,s}^{2}}{2\alpha_{s}}\right)$, for $s=1,...,S$.
We ran all the sampling schemes for $11,000$ iterations and discarded the initial $1,000$ iterates as warmup. We used $M=10$ latent points for the Euler approximations to the state transition densities.

\subsection*{Gaussian OU process with exact transition density}
Table \ref{tab:Inefficiency-factor-of simulation} in Section~\ref{S:FSV tables and figures} of the supplement shows the IACT estimates for the parameters in the factor SV model estimated
for three different samplers using the exact transition density, (I)
$\textrm{PMMH}\left(\boldsymbol{\alpha}, \boldsymbol{\tau}_{\epsilon}^{2},\boldsymbol{\tau}_{f}^{2}\right)+\textrm{PG}\left(\boldsymbol{\mu},\boldsymbol{\beta},\boldsymbol{f}_{1:T},
\phi\right)$,  (II) $\textrm{PGAT}\left(\boldsymbol{f}_{1:T},\boldsymbol{\beta},\boldsymbol{\alpha},\boldsymbol{\tau}_{\epsilon}^{2},\boldsymbol{\tau}_{f}^{2},\phi\right)$
and (III) $\textrm{PGBS}\left(\boldsymbol{f}_{1:T},\boldsymbol{\beta},\boldsymbol{\alpha},\boldsymbol{\tau}_{\epsilon}^{2},\boldsymbol{\tau}_{f}^{2}, \phi\right)$.
All three samplers estimate the factor loading matrix $\boldsymbol{\beta}$ and $\boldsymbol{\mu}$ with comparable IACT values.
The PMMH+PG sampler always has lower IACT values than both PG samplers for the parameters $\boldsymbol{\alpha}$, $\boldsymbol{\tau}_{\epsilon}^{2}$, $\boldsymbol{\tau}_{f}^{2}$, and $\phi$.
There are some improvements in terms of IACT obtained by using PGBS compared to PGAT.
Table \ref{tab:Comparison-between-different simulationexact} summarises the estimation results when the exact transition density is used and shows that in terms of $\textrm{TNV}_{\textrm{MAX}}$,
the PMMH+PG sampler is 9.25 and 4.19 times better than PGAT and PGBS, respectively, and in terms of $\textrm{TNV}_{\textrm{MEAN}}$, the PMMH+PG is 2.69 and 2.55 times better than PGAT and PGBS, respectively.

\begin{table}[H]
\caption{Comparing different samplers in terms of Time Normalised
Variance (TNV) with the exact transition density used for the Gaussian OU model: Sampler I: $\textrm{PMMH}\left(\boldsymbol{\alpha},\boldsymbol{\tau}_{\epsilon}^{2},\boldsymbol{\tau}_{f}^{2}\right)+\textrm{PG}\left(\boldsymbol{f}_{1:T},\boldsymbol{\beta},\boldsymbol{\mu},\boldsymbol{\phi}\right)$,
Sampler $\textrm{II}$: $\textrm{PGAT}\left(\boldsymbol{f}_{1:T},\boldsymbol{\beta},\boldsymbol{\alpha},\boldsymbol{\tau}_{\epsilon}^{2},\boldsymbol{\mu},\boldsymbol{\phi},\boldsymbol{\tau}_{f}^{2}\right)$,
sampler III: $\textrm{PGBS}\left(\boldsymbol{f}_{1:T},\boldsymbol{\beta},\boldsymbol{\alpha},\boldsymbol{\tau}_{\epsilon}^{2},\boldsymbol{\mu},\boldsymbol{\phi},\boldsymbol{\tau}_{f}^{2}\right)$.
The data was simulated with $T=1000$, $S=20$, and $K=1$, and number
of particles $N=500$. Time denotes the time taken in seconds for one iteration of the method.
\label{tab:Comparison-between-different simulationexact}}

\centering{}%
\begin{tabular}{cccc}
\hline
 & $I$ & $II$ & $III$\tabularnewline
\hline
{\footnotesize{}$\widehat{\textrm{IACT}}_{\textrm{MAX}}$} & {\footnotesize{}$18.07$} & {\footnotesize{}$283.23$} & {\footnotesize{}$101.64$}\tabularnewline
{\footnotesize{}$\textrm{TNV}_{\max}$} & {\footnotesize{}$33.97$} & {\footnotesize{}$314.39$} & {\footnotesize{}$142.30$}\tabularnewline
{\footnotesize{}$\textrm{RTNV}_{\max}$} & {\footnotesize{}$1$} & {\footnotesize{}$9.25$} & {\footnotesize{}$4.19$}\tabularnewline
\hline
{\footnotesize{}$\widehat{\textrm{IACT}}_{\textrm{MEAN}}$} & {\footnotesize{}$8.54$} & {\footnotesize{}$38.96$} & {\footnotesize{}$29.26$}\tabularnewline
{\footnotesize{}$\textrm{TNV}_{\textrm{MEAN}}$} & {\footnotesize{}$16.06$} & {\footnotesize{}$43.25$} & {\footnotesize{}$40.96$}\tabularnewline
{\footnotesize{}$\textrm{RTNV}_{MEAN}$} & {\footnotesize{}$1$} & {\footnotesize{}$2.69$} & {\footnotesize{}$2.55$}\tabularnewline
\hline
{\footnotesize{}$\textrm{Time}$} & {\footnotesize{}$1.88$} & {\footnotesize{}$1.11$} & {\footnotesize{}$1.40$}\tabularnewline
\hline
\end{tabular}
\end{table}

\subsubsection*{Gaussian OU process with an Euler evolution transition density}
Table \ref{tab:Inefficiency-factor-of simulation-1}  in Section~\ref{S:FSV tables and figures} of the supplement shows the IACT values for all the parameters in
the model for the three samplers, (I) $\textrm{PMMH}\left(\boldsymbol{\mu},\boldsymbol{\alpha}, \boldsymbol{\tau}_{\epsilon}^{2},\boldsymbol{\tau}_{f}^{2}\right)+\textrm{PG}\left(\boldsymbol{\beta},\boldsymbol{f}_{1:T},
\phi\right)$, (II) $\textrm{PGAT}\left(\boldsymbol{f}_{1:T},\boldsymbol{\beta},\boldsymbol{\alpha},\boldsymbol{\tau}_{\epsilon}^{2},\boldsymbol{\tau}_{f}^{2},\phi\right)$
and (III) $\textrm{PGBS}\left(\boldsymbol{f}_{1:T},\boldsymbol{\beta},\boldsymbol{\alpha},\boldsymbol{\tau}_{\epsilon}^{2},\boldsymbol{\tau}_{f}^{2}, \phi\right)$,
using the Euler approximation scheme for the transition density.
The table shows that the PMMH+PG samplers with the exact and approximate state transition densities have very similar IACT values for all the parameters suggesting that the inefficiency of the PMMH+PG sampler
does not deteriorate when the Euler approximation is used.
However, both PG samplers, PGAT and PGBS, using the Euler approximation are significantly worse than the PGAT and PGBS samplers with the exact transition density.
For example, the IACT of $\tau_{4}^{2}$ in PGAT with the exact transition density is 283.23, compared to 977.93 for PGAT with the Euler approximation.

Table \ref{tab:Comparison-between-different simulationapproximate} summarises the estimation results with the Euler approximation of the transition density and shows that in terms of $\textrm{TNV}_{\textrm{MAX}}$,
the PMMH+PG sampler is 60.57 and 50.72 times better than PGAT and PGBS, respectively, and in terms of $\textrm{TNV}_{\textrm{MEAN}}$, the PMMH+PG sampler is 14.67 and 12.95 times better than the PGAT and PGBS samplers,
respectively. Similarly to the univariate case in Section \ref{SS: Empirical results for OU}, we note that if Euler approximations are used for the state transition densities then all three samplers PMMH+PG, PGAT,
and PGBT take approximately the same computing time because the PG samplers need to store and trace back all the latent log-volatilities $h_{s,t}$
and the $M$ latent data points between $t$ and $t+1$ for all $s=1,...,S$ and $t=1,...,T$, whereas the PMMH+PG sampler only needs to store and trace back the latent log-volatilities $h_{s,t}$ for all $s=1,...,S$
and $t=1,...,T$. 

\begin{table}[H]
\caption{Comparing different samplers in terms of Time Normalised
Variance using an Euler approximation for the state transition density for the Gaussian OU model: Sampler
I: $\textrm{PMMH}\left(\boldsymbol{\alpha},\boldsymbol{\tau}_{\epsilon}^{2},\boldsymbol{\mu},\boldsymbol{\tau}_{f}^{2}\right)+\textrm{PG}\left(\boldsymbol{f}_{1:T},\boldsymbol{\beta},\boldsymbol{\phi}\right)$,
Sampler $\textrm{II}$: $\textrm{PGAT}\left(\boldsymbol{f}_{1:T},\boldsymbol{\beta},\boldsymbol{\alpha},\boldsymbol{\tau}_{\epsilon}^{2},\boldsymbol{\mu},\boldsymbol{\phi},\boldsymbol{\tau}_{f}^{2}\right)$,
sampler III: $\textrm{PGBS}\left(\boldsymbol{f}_{1:T},\boldsymbol{\beta},\boldsymbol{\alpha},\boldsymbol{\tau}_{\epsilon}^{2},\boldsymbol{\mu},\boldsymbol{\phi},\boldsymbol{\tau}_{f}^{2}\right)$
for the simulated data with $T=1,000$, $S=20$, and $K=1$, and the number
of particles $N=1,000$. Time denotes the time taken in seconds for  one iteration of the method.\label{tab:Comparison-between-different simulationapproximate}}

\centering{}%
\begin{tabular}{cccc}
\hline
 & $I$ & $II$ & $III$\tabularnewline
\hline
{\footnotesize{}$\widehat{\textrm{IACT}}_{\textrm{MAX}}$} & {\footnotesize{}$17.57$} & {\footnotesize{}$977.93$} & {\footnotesize{}$792.88$}\tabularnewline
{\footnotesize{}$\textrm{TNV}_{\max}$} & {\footnotesize{}$113.50$} & {\footnotesize{}$6874.85$} & {\footnotesize{}$5756.31$}\tabularnewline
{\footnotesize{}$\textrm{RTNV}_{\max}$} & {\footnotesize{}$1$} & {\footnotesize{}$60.57$} & {\footnotesize{}$50.72$}\tabularnewline
\hline
{\footnotesize{}$\widehat{\textrm{IACT}}_{\textrm{MEAN}}$} & {\footnotesize{}$14.17$} & {\footnotesize{}$191.04$} & {\footnotesize{}$163.26$}\tabularnewline
{\footnotesize{}$\textrm{TNV}_{\textrm{MEAN}}$} & {\footnotesize{}$91.54$} & {\footnotesize{}$1343.01$} & {\footnotesize{}$1185.27$}\tabularnewline
{\footnotesize{}$\textrm{RTNV}_{MEAN}$} & {\footnotesize{}$1$} & {\footnotesize{}$14.67$} & {\footnotesize{}$12.95$}\tabularnewline
\hline
{\footnotesize{}$\textrm{Time}$} & {\footnotesize{}$6.46$} & {\footnotesize{}$7.03$} & {\footnotesize{}$7.26$}\tabularnewline
\hline
\end{tabular}
\end{table}

\subsection*{The PMMH and PGDA Sampling Schemes for the Factor SV Model}
This section discusses the PMMH samplers, both the standard and correlated PMMH, and the PGDA sampler of \citet{Fearnhead2016} to estimate the factor SV model which are denoted by sampling schemes IV to VIII.
The PMMH method generates the parameters by integrating out all the latent factors, so that the observation equation is given by
\begin{equation}
\boldsymbol{y}_{t}|\boldsymbol{\lambda}_{t}, \boldsymbol{h}_{t}, \boldsymbol{\omega} \sim N\left(\boldsymbol{0},\boldsymbol{\beta}\boldsymbol{D}_{t}\boldsymbol{\beta}^{'}+\boldsymbol{V}_{t}\right).\label{eq:PMMHmeasurement density}
\end{equation}
The state transition equations are given by equations~\eqref{eq:SVtransition} and either equation \eqref{eq:exact transition} for the closed form case or equation \eqref{eq:ou approximatetransition multiple}
for the Euler scheme for the OU model and equation \eqref{eq:GARCH Euler transitiondensity} for the Euler scheme for the GARCH model.
The PMMH method uses the observation density, which includes all $(K+S)$ dimensional latent log-volatilities simultaneously.
This becomes a high dimensional (21 dimensional) state space model.
The performance of the standard PMMH sampler depends critically on the number of particles $N$ used to estimate the likelihood.
\cite{pittetal2012} suggest selecting the number of particles $N$ such that the variance of the log of the estimated likelihood is around 1 to obtain an optimal tradeoff between computing time and statistical efficiency.
Table \ref{tab:The-Variance-of log-likelihood} gives the variance of the log of the estimated likelihood for different numbers of particles using the bootstrap filter and shows that even with 5,000 particles,
the log of the estimated likelihood still has a large variance and the Markov chain for the standard PMMH approach (sampling schemes IV and V) would get stuck.
We therefore do not report results for the standard PMMH method as it is computationally very expensive and its TNV would be significantly higher than the PG and PMMH+PG methods.

From Section \ref{SS: Empirical results for OU}, we need $\log \left(Z_{1:T}\left(\boldsymbol{\theta}^{'},\boldsymbol{u}^{'}\right)\right)$ and $\log \left(Z_{1:T}\left(\boldsymbol{\theta},\boldsymbol{u}\right)\right)$
to be highly correlated to reduce the variance of the difference between them for the correlated PMMH method.
We now set the correlation between the individual elements of $\bs u$ and $\bs u^{'}$ to $\textrm{corr}\left(u_{i},u_{i}^{'}\right)=0.999999$.
We then obtained $1,000$ independent estimates of $\log \left(Z_{1:T}\left(\boldsymbol{\theta},\boldsymbol{u}^{'}\right)\right)$ and $\log \left(Z_{1:T}\left(\boldsymbol{\theta},\boldsymbol{u}\right)\right)$
at the true value of $\boldsymbol{\theta}$ and computed their sample correlation. The sample correlation was $0.06$, showing that it is difficult to preserve the correlation in such a high dimensional state space model and
that the correlated PMMH Markov chain would still get stuck unless enough particles are used to ensure that the variance of the log of the  estimator of the likelihood is close to 1.

A second problem with the PMMH approach is the large number of parameters to be estimated. Constructing proposals in high dimensions is remarkably difficult,
and often requires estimating gradients and Hessian matrices. On the other hand, simpler approaches such as the adaptive random walk are very inefficient in large dimensions,
 as we showed in Section \ref{SS: Empirical results for OU}.
Hence, it is natural to use a parameter splitting strategy and hybrid samplers.

Finally, we do not report results for the PGDA method applied to the factor stochastic volatility model as it is very clear that its TNV would be significantly higher than the PMMH+PG method.
This sampler updates pseudo observations of the parameters by MCMC and updates the latent states and parameters jointly using a particle filter.
Section \ref{SS: Empirical results for OU} shows that this sampler does not work well when the model has many parameters.
Note that \citet{Fearnhead2016} only apply their method to a simple univariate SV model.
The factor SV model considered in this section is more complex with a large number of parameters and high dimensional latent states.



\begin{table}[H]
\caption{The Variance of the log of the estimated likelihood for the PMMH method with the exact transition
density for different numbers of particles for the simulated dataset
with $T=1,000$, $S=20$, and $K=1$ evaluated at the true values of
the parameters. CPU time to  estimate the
likelihood is in seconds .\label{tab:The-Variance-of log-likelihood}}

\centering{}%
\begin{tabular}{ccc}
\hline
Number of Particles & Variance of log-likelihood & CPU time\tabularnewline
\hline
\hline
250 & 1672.07 & 4.39\tabularnewline
500 & 766.38 & 8.57\tabularnewline
2500 & 331.65 & 45.03\tabularnewline
5000 & 243.82 & 130.53\tabularnewline
\hline
\end{tabular}
\end{table}

\subsubsection{Application to US stock returns\label{SS: US stock returns}}
We now apply our methods to a sample of daily US industry stock returns
data. The data, obtained from the Kenneth French website\footnote{ http://mba.tuck.dartmouth.edu/pages/faculty/ken.french/datalibrary.html}
consists of daily returns for $S=20$ value-weighted industry portfolios,
using a sample from  January 3rd, 2001 to the 24th of December, 2003,
a total of 1,000 observations. 


We  compare the PMMH+PG, PGAT, and PGBS samplers for the factor SV model with the idiosyncratic log-volatilities following Gaussian OU processes with exact and approximate transition densities
and GARCH diffusion processes and show that the performance of the PMMH+PG sampler does not deteriorate for the real
data, whereas both PGAT and PGBS samplers get worse in terms of the IACT values
of the parameters, especially with the Euler approximation.
This section does not compare the PMMH+PG sampler with either of the standard or correlated PMMH samplers or the PGDA sampler because of the problems discussed in Section \ref{ss:simulationandapplication}.

\subsubsection*{Gaussian OU process with exact and Euler evolution transition densities{\label{GaussianOUrealdata}}}
This section compares the following samplers: (I) $\textrm{PMMH}\left(\boldsymbol{\alpha},\boldsymbol{\tau}_{\epsilon}^{2},\boldsymbol{\tau}_{f}^{2}\right)+\textrm{PG}\left(\boldsymbol{f}_{1:T},\boldsymbol{\beta},\boldsymbol{\mu},\phi\right)$
for the Gaussian OU model with exact transition densities and $\textrm{PMMH}\left(\boldsymbol{\alpha},\boldsymbol{\tau}_{\epsilon}^{2},\boldsymbol{\tau}_{f}^{2},\boldsymbol{\mu}\right)+\textrm{PG}\left(\boldsymbol{f}_{1:T},\boldsymbol{\beta},\phi\right)$
for the Gaussian OU model with approximate transition densities, (II)
$\textrm{PGAT}\left(\boldsymbol{f}_{1:T},\boldsymbol{\beta},\boldsymbol{\alpha},\boldsymbol{\tau}_{\epsilon}^{2},\boldsymbol{\mu},\phi,\boldsymbol{\tau}_{f}^{2}\right)$, and
(III) $\textrm{PGBS}\left(\boldsymbol{f}_{1:T},\boldsymbol{\beta},\boldsymbol{\alpha},\boldsymbol{\tau}_{\epsilon}^{2},\boldsymbol{\mu},\phi,\boldsymbol{\tau}_{f}^{2}\right)$
for the factor SV model with the idiosyncratic log-volatilities following Gaussian OU processes with exact and approximate transition densities.
Tables \ref{tab:Inefficiency-factor-of real data} and \ref{tab:Inefficiency-factor-of real data-1} in Section~\ref{S:FSV tables and figures} of the supplement show the IACT
estimates for all the parameters in the factor SV model estimated with exact transition densities for the Gaussian OU model and Euler
approximations for the transition densities for the Gaussian OU processes.
As for the simulated data, all three samplers estimate the factor loading matrix $\boldsymbol{\beta}$
and $\boldsymbol{\mu}$ efficiently and with comparable IACT values.
The performance of the PMMH+PG sampler does not deteriorate for the real
data, whereas both PGAT and PGBS samplers get worse in terms of the IACT values
of the parameters, especially for the Euler approximation model.
Overall, the PMMH+PG samplers always have smaller IACT values than both the PGAT and PGBS
samplers for all the state transition parameters.

Tables \ref{tab:Comparison-between-different real data-exact}
and \ref{tab:Comparison-between-different real dataapproximate} summarise
the estimation results for the Gaussian OU model and show that in terms of $\textrm{TNV}_{\textrm{MAX}}$, the PMMH+PG
sampler is 20.87 and 13.91 times better than the PGAT and PGBS samplers
with the exact transition density, respectively, and the  PMMH+PG sampler
is 53.94 and 58.71 times, respectively, better than the PGAT and PGBS with the Euler approximation.
In terms of $\textrm{TNV}_{\textrm{MEAN}}$, the PMMH+PG
sampler is 5.61 and 4.73 times better than the PGAT and PGBS samplers
with the exact transition density, respectively, and the PMMH+PG sampler
is 22.17 and 22.40 times, respectively, better than the PGAT and PGBS samplers when using the
Euler approximation.

Figures~\ref{fig:The-kernel-density alpha real data} and
\ref{fig:The-kernel-density tau real data}
in Section~\ref{S:FSV tables and figures} of the supplement present
the kernel density estimates of marginal posterior densities of four representative $\alpha$ and $\tau_{\epsilon}^2$ parameters, respectively,
for the US stock returns data. The density estimates are for PMMH+PG using exact and approximate transition densities
and PG with  approximate transition densities using ancestral tracing and  backward simulation for the Gaussian OU model.
The figures show that both PMMH+PG samplers
produce estimates that are close to each other, whereas the PG samplers are much less reliable and suggest that the PG estimators did not converge.
This confirms the usefulness of the PMMH+PG samplers
for this class of model.


\begin{table}[H]
\caption{Comparing different samplers in terms of Time Normalised
Variance with the exact transition density for the Gaussian OU model: Sampler I: $\textrm{PMMH}\left(\boldsymbol{\alpha},\boldsymbol{\tau}_{\epsilon}^{2},\boldsymbol{\tau}_{f}^{2}\right)+\textrm{PG}\left(\boldsymbol{f}_{1:T},\boldsymbol{\beta},\boldsymbol{\mu},\boldsymbol{\phi}\right)$,
Sampler $\textrm{II}$: $\textrm{PGAT}\left(\boldsymbol{f}_{1:T},\boldsymbol{\beta},\boldsymbol{\alpha},\boldsymbol{\tau}_{\epsilon}^{2},\boldsymbol{\mu},\boldsymbol{\phi},\boldsymbol{\tau}_{f}^{2}\right)$,
sampler III: $\textrm{PGBS}\left(\boldsymbol{f}_{1:T},\boldsymbol{\beta},\boldsymbol{\alpha},\boldsymbol{\tau}_{\epsilon}^{2},\boldsymbol{\mu},\boldsymbol{\phi},\boldsymbol{\tau}_{f}^{2}\right)$
for US stock returns data with $T=1,000$, $S=20$, and $K=1$, and
number of particles $N=500$. Time denotes the time taken in seconds for one iteration of the method.\label{tab:Comparison-between-different real data-exact}}

\centering{}%
\begin{tabular}{cccc}
\hline
 & $I$ & $II$ & $III$\tabularnewline
\hline
{\footnotesize{}$\widehat{\textrm{IACT}}_{\textrm{MAX}}$} & {\footnotesize{}$20.57$} & {\footnotesize{}$682.49$} & {\footnotesize{}$382.86$}\tabularnewline
{\footnotesize{}$\textrm{TNV}_{\max}$} & {\footnotesize{}$38.26$} & {\footnotesize{}$798.51$} & {\footnotesize{}$532.18$}\tabularnewline
{\footnotesize{}$\textrm{RTNV}_{\max}$} & {\footnotesize{}$1$} & {\footnotesize{}$20.87$} & {\footnotesize{}$13.91$}\tabularnewline
\hline
{\footnotesize{}$\widehat{\textrm{IACT}}_{\textrm{MEAN}}$} & {\footnotesize{}$8.54$} & {\footnotesize{}$76.19$} & {\footnotesize{}$54.06$}\tabularnewline
{\footnotesize{}$\textrm{TNV}_{\textrm{MEAN}}$} & {\footnotesize{}$15.88$} & {\footnotesize{}$89.14$} & {\footnotesize{}$75.14$}\tabularnewline
{\footnotesize{}$\textrm{RTNV}_{MEAN}$} & {\footnotesize{}$1$} & {\footnotesize{}$5.61$} & {\footnotesize{}$4.73$}\tabularnewline
\hline
{\footnotesize{}$\textrm{Time}$} & {\footnotesize{}$1.86$} & {\footnotesize{}$1.17$} & {\footnotesize{}$1.39$}\tabularnewline
\hline
\end{tabular}
\end{table}

\begin{table}[H]
\caption{Comparing different samplers in terms of Time Normalised
Variance with the Euler approximation for state transition density for the Gaussian OU model: Sampler
I: $\textrm{PMMH}\left(\boldsymbol{\alpha},\boldsymbol{\tau}_{\epsilon}^{2},\boldsymbol{\mu},\boldsymbol{\tau}_{f}^{2}\right)+\textrm{PG}\left(\boldsymbol{f}_{1:T},\boldsymbol{\beta},\boldsymbol{\phi}\right)$,
Sampler $\textrm{II}$: $\textrm{PGAT}\left(\boldsymbol{f}_{1:T},\boldsymbol{\beta},\boldsymbol{\alpha},\boldsymbol{\tau}_{\epsilon}^{2},\boldsymbol{\mu},\boldsymbol{\phi},\boldsymbol{\tau}_{f}^{2}\right)$,
sampler III: $\textrm{PGBS}\left(\boldsymbol{f}_{1:T},\boldsymbol{\beta},\boldsymbol{\alpha},\boldsymbol{\tau}_{\epsilon}^{2},\boldsymbol{\mu},\boldsymbol{\phi},\boldsymbol{\tau}_{f}^{2}\right)$
with backward simulation for US stock returns data with $T=1,000$,
$S=20$, and $K=1$, and number of particles $N=1,000$. Time denotes the time taken in seconds for one iteration of the method.\label{tab:Comparison-between-different real dataapproximate}}

\centering{}%
\begin{tabular}{cccc}
\hline
 & $I$ & $II$ & $III$\tabularnewline
\hline
{\footnotesize{}$\widehat{\textrm{IACT}}_{\textrm{MAX}}$} & {\footnotesize{}$23.99$} & {\footnotesize{}$1215.77$} & {\footnotesize{}$1228.99$}\tabularnewline
{\footnotesize{}$\textrm{TNV}_{\max}$} & {\footnotesize{}$152.82$} & {\footnotesize{}$8242.92$} & {\footnotesize{}$8971.63$}\tabularnewline
{\footnotesize{}$\textrm{RTNV}_{\max}$} & {\footnotesize{}$1$} & {\footnotesize{}$53.94$} & {\footnotesize{}$58.71$}\tabularnewline
\hline
{\footnotesize{}$\widehat{\textrm{IACT}}_{\textrm{MEAN}}$} & {\footnotesize{}$12.99$} & {\footnotesize{}$270.58$} & {\footnotesize{}$253.90$}\tabularnewline
{\footnotesize{}$\textrm{TNV}_{\textrm{MEAN}}$} & {\footnotesize{}$82.75$} & {\footnotesize{}$1834.53$} & {\footnotesize{}$1853.47$}\tabularnewline
{\footnotesize{}$\textrm{RTNV}_{MEAN}$} & {\footnotesize{}$1$} & {\footnotesize{}$22.17$} & {\footnotesize{}$22.40$}\tabularnewline
\hline
{\footnotesize{}$\textrm{Time}$} & {\footnotesize{}$6.37$} & {\footnotesize{}$6.78$} & {\footnotesize{}$7.30$}\tabularnewline
\hline
\end{tabular}
\end{table}

\subsubsection*{GARCH diffusion process with an Euler evolution transition density {\label{GARCHrealdata}}}
This section compares the following samplers: (I) $\textrm{PMMH}\left(\boldsymbol{\alpha},\boldsymbol{\tau}_{\epsilon}^{2},\boldsymbol{\tau}_{f}^{2},\boldsymbol{\mu}\right)+\textrm{PG}\left(\boldsymbol{f}_{1:T},\boldsymbol{\beta},\phi\right)$, (II)
$\textrm{PGAT}\left(\boldsymbol{f}_{1:T},\boldsymbol{\beta},\boldsymbol{\alpha},\boldsymbol{\tau}_{\epsilon}^{2},\boldsymbol{\mu},\phi,\boldsymbol{\tau}_{f}^{2}\right)$, and
(III) $\textrm{PGBS}\left(\boldsymbol{f}_{1:T},\boldsymbol{\beta},\boldsymbol{\alpha},\boldsymbol{\tau}_{\epsilon}^{2},\boldsymbol{\mu},\phi,\boldsymbol{\tau}_{f}^{2}\right)$
for the factor SV model with the idiosyncratic log-volatilities following GARCH diffusion processes which do not have closed form state transition densities.

Table \ref{tab:Inefficiency-factor-of real data-GARCH} in Section~\ref{S:FSV tables and figures} of the supplement shows the IACT
estimates for all the parameters 
for the factor SV model with the idiosyncratic log-volatilities following GARCH diffusion processes which do not have closed form state transition densities.
As for the models with Gaussian OU processes, all three samplers estimate the factor loading matrix $\boldsymbol{\beta}$ efficiently and with comparable IACT values.
The performance of the PMMH+PG sampler does not deteriorate for the real
data, whereas both the PGAT and PGBS samplers get worse in terms of the IACT values
for the remaining parameters.
Overall, the PMMH+PG sampler always has smaller IACT values than both the PGAT and PGBS
samplers for all the state transition parameters.

Table \ref{tab:Comparison-between-different real dataapproximateGARCH} summarises the estimation results for the GARCH diffusion model
and shows that in terms of  $\textrm{TNV}_{\textrm{MAX}}$, the PMMH+PG  is 19.56 and 22.11 times better than PGAT and PGBS samplers.
In terms of $\textrm{TNV}_{\textrm{MEAN}}$, the PMMH+PG is 25.84 and 28.01 times better than PGAT and PGBS, respectively.
This confirms the usefulness of the PMMH+PG samplers
for this class of the model.


\begin{table}[H]
\caption{Comparing different samplers in terms of Time Normalised Variance
with the Euler approximation for the state transition density for the GARCH diffusion
model. Sampler I: $\textrm{PMMH}\left(\boldsymbol{\alpha},\boldsymbol{\tau}_{\epsilon}^{2},\boldsymbol{\mu},\boldsymbol{\tau}_{f}^{2}\right)+\textrm{PG}\left(\boldsymbol{f}_{1:T},\boldsymbol{\beta},\boldsymbol{\phi}\right)$,
Sampler II: $\textrm{PGAT}\left(\boldsymbol{f}_{1:T},\boldsymbol{\beta},\boldsymbol{\alpha},\boldsymbol{\tau}_{\epsilon}^{2},\boldsymbol{\mu},\boldsymbol{\phi},\boldsymbol{\tau}_{f}^{2}\right)$,
Sampler III: $\textrm{PGBS}\left(\boldsymbol{f}_{1:T},\boldsymbol{\beta},\boldsymbol{\alpha},\boldsymbol{\tau}_{\epsilon}^{2},\boldsymbol{\mu},\boldsymbol{\phi},\boldsymbol{\tau}_{f}^{2}\right)$
for US stock returns data with $T=1000$, $S=20$, and $K=1$, and
number of particles $N=1000$. Time denotes the time taken in seconds
for one iteration of the method. \label{tab:Comparison-between-different real dataapproximateGARCH}}

\centering{}%
\begin{tabular}{cccc}
\hline 
 & {\small{}$I$} & {\small{}$II$} & {\small{}$III$}\tabularnewline
\hline 
{\small{}$\widehat{\textrm{IACT}}_{\textrm{MAX}}$} & {\small{}$147.16$} & {\small{}$3098.27$} & {\small{}$3257.52$}\tabularnewline
{\small{}$\textrm{TNV}_{\textrm{MAX}}$} & {\small{}$1392.13$} & {\small{}$27233.79$} & {\small{}$30783.56$}\tabularnewline
{\small{}$\textrm{RTNV}_{\textrm{MAX}}$} & {\small{}$1$} & {\small{}$19.56$} & {\small{}$22.11$}\tabularnewline
\hline 
{\small{}$\widehat{\textrm{IACT}}_{\textrm{MEAN}}$} & {\small{}$17.38$} & {\small{}$483.37$} & {\small{}$487.28$}\tabularnewline
{\small{}$\textrm{TNV}_{\textrm{MEAN}}$} & {\small{}$164.41$} & {\small{}$4248.82$} & {\small{}$4604.80$}\tabularnewline
{\small{}$\textrm{RTNV}_{\textrm{MEAN}}$} & {\small{}$1$} & {\small{}$25.84$} & {\small{}$28.01$}\tabularnewline
\hline 
{\small{}Time} & {\small{}$9.46$} & {\small{}$8.79$} & {\small{}$9.45$}\tabularnewline
\hline 
\end{tabular}
\end{table}

\section{Discussion\label{s:discussion}}

Our article introduces a flexible particle Markov chain Monte Carlo sampling
scheme for state space models where some parameters are generated without
conditioning on the states (PMMH) while other parameters are generated
conditional on the states (PG). Previous sampling schemes
 used PMMH or PG exclusively without combining both strategies. The
technical contribution of our article is to set out the required particle
framework for the flexible sampler and to obtain uniform ergodicity under
given assumptions. Our examples demonstrate that it is advantageous to use
this flexible sampling scheme to generate the parameters that are highly
correlated with the states without conditioning on the states (the PMMH
component) while the other parameters are generated by particle Gibbs (PG).

As we note in the introduction, in general, there are likely to be a number of
different sampling schemes that can solve the same problems addressed in our article, and which
sampler is best depends on a number of factors such as the model, the data set and  the number of observations.
We also note that our PMMH + PG approach can be further refined by using the data augmented PMMH and PG sampling schemes proposed
by \citet{Fearnhead2016} and the refined proposals for the PMMH sampling scheme by \citet{dahlin2015} and \citet{Nemeth2016}.

\section*{Acknowledgement}

The work of the authors was partially supported by an ARC Research Council
Grant DP120104014. The work of Robert Kohn and David Gunawan was also
partially supported by the ARC Center of Excellence grant CE140100049


\bibliographystyle{abbrvnat}
\bibliography{pfmcmc}
\pagebreak
\renewcommand{\thesscheme}{S\arabic{sscheme}}
\renewcommand{\thealgorithm}{S\arabic{algorithm}}
\renewcommand{\theremark}{S\arabic{remark}}
\renewcommand{\theequation}{S\arabic{equation}}
\renewcommand{\thetheorem}{S\arabic{theorem}}
\renewcommand{\thesection}{S\arabic{section}}
\renewcommand{\thepage}{S\arabic{page}}
\renewcommand{\thetable}{S\arabic{table}}
\renewcommand{\thefigure}{S\arabic{figure}}
\renewcommand{\theassumption}{S\arabic{assumption}}
\setcounter{page}{1}
\setcounter{section}{0}
\setcounter{equation}{0}
\setcounter{algorithm}{0}
\setcounter{table}{0}
\setcounter{figure}{0}

\section*{Online Supplement for \textquotedblleft A Flexible Particle Markov chain Monte Carlo method\textquotedblright}

We use the following notation in the supplement. Equation (1), Algorithm 1,
and Sampling Scheme 1, etc, refer to the main paper, while equation (S1),
Algorithm S1, and Sampling Scheme S1, etc, refer to the supplement. Section %
\ref{s:algorithms} lists some of the algorithms used in the main paper.
These algorithms are used in \cite{andrieuetal2010} and are included here
for notational consistency.  Section~\ref{s:theory} discusses the
convergence of Sampling Scheme~\ref{ssch:pmwg} to its target distribution.
Section~\ref{appendixbsi} discusses other choices of target distribution and
how it is straightforward to modify the results in the main paper to apply
to these distributions. Section \ref{sec:Sampling-Schemes-Factor model} discusses the target density of the PMMH+PG sampler for the multivariate factor SV model. Section \ref{ss: sampling scheme for the factor SV model} discusses the PMMH+PG sampling schemes for the factor SV model.
Section~\ref{S:FSV tables and figures} presents some additional tables and plots based on the analysis reported
in Sections~\ref{ss:simulationandapplication} and \ref{SS: US stock returns}.

\section{Algorithms\label{s:algorithms}}

The Sequential Monte Carlo algorithm used here is the same one as in \cite%
{andrieuetal2010} and is defined as follows.

\begin{algorithm}[Sequential Monte Carlo]
\label{alg:smc}\ \newline

\begin{enumerate}
\item For $t=1$:

\begin{enumerate}
\item Sample $X_1^i$ from $m_1^\theta(x)$, for $i=1,\dots,N$

\item Calculate the importance weights
\begin{equation*}
w_1^i = \frac{f_1^\theta(x_1^i)\,g_{\theta}(y_1|x_1^i)}{ m_1^\theta(x_1^i)}
\quad(i=1,\dots,N),
\end{equation*}
and normalize them to obtain $\bar w_1^{1:N}$.
\end{enumerate}

\item For $t=2,3,\dots$:

\begin{enumerate}
\item Sample the ancestral indices $A_{t-1}^{1:N} \sim \mathcal{M}\left(
a_{t-1}^{1:N}|\bar w_{t-1}^{1:N} \right)$

\item Sample $X_{t}^{i}$ from $m_{t}^{\theta }\left(
x|x_{t-1}^{a_{t-1}^{i}}\right)$, $i=1,\dots ,N$

\item Calculate the importance weights
\begin{equation*}
w_{t}^{i}=\frac{f_{\theta }\left( x_{t}^{i}|x_{t-1}^{a_{t-1}^{i}}\right)
\,g_{\theta }\left( y_{t}|x_{t}^{i}\right) }{m_{t}^{\theta }\left(
x_{t}^{i}|x_{t-1}^{a_{t-1}^{i}}\right) }\quad (i=1,\dots ,N)
\end{equation*}%
and normalize them to obtain $\overline{w}_{t}^{1:N}=w_{t}^{1:N}/\sum_{i=1}^{N}w_{t}^{i}$.
\end{enumerate}
\end{enumerate}
\end{algorithm}

\bigskip

Algorithm \ref{alg:condsmc} is the conditional sequential Monte Carlo
algorithm (as in \cite{andrieuetal2010}), consistent with $%
(x_{1:T}^{j},a_{1:T-1}^{j},j)$.

\begin{algorithm}[Conditional Sequential Monte Carlo]
\label{alg:condsmc}\ \newline
\end{algorithm}

\begin{enumerate}
\item Fix $X_{1:T}^{j}=x_{1:T}^{j}$ and $A_{1:T-1}^{j}=b_{1:T-1}^{j}$.

\item For $t=1$

\begin{enumerate}
\item Sample $X_{1}^{i}$ from $m_{1}^{\theta }(x)\mathrm{d}x$, for $i\in
\{1,\dots ,N\}\setminus \{b_{1}^{j}\}$.

\item Calculate the importance weights
\begin{equation*}
w_{1}^{i}=\frac{f_1^{\theta }(x_{1}^{i})\,g_{\theta }(y_{1}|x_{1}^{i})}{%
m_{1}^{\theta }(x_{1}^{i})}\quad (i=1,\dots ,N),
\end{equation*}%
and normalize them to obtain $\bar{w}_{1}^{1:N}$.
\end{enumerate}

\item For $t=2,\dots ,T$

\begin{enumerate}
\item Sample the ancestral indices
\begin{equation*}
A_{t-1}^{-(b_{t}^{j})}\sim \mathcal{M}\left( a^{(-b_{t}^{j})}|\bar{w}%
_{t-1}^{1:N}\right) .
\end{equation*}

\item Sample $X_{t}^{i}$ from $m_{t}^{\theta }\left(
x|x_{t-1}^{a_{t-1}^{i}}\right) \mathrm{d}x$, $i\in \{1,\dots ,N\}\setminus
\{b_{t}^{j}\}$.

\item Calculate the importance weights
\begin{equation*}
w_{t}^{i}=\frac{f_{\theta }\left( x_{t}^{i}|x_{t-1}^{a_{t-1}^{i}}\right)
\,g_{\theta }\left( y_{t}|x_{t}^{i}\right) }{m_{t}^{\theta }\left(
x_{t}^{i}|x_{t-1}^{a_{t-1}^{i}}\right) }\quad (i=1,\dots ,N)
\end{equation*}%
and normalized them to obtain $\bar{w}_{t}^{1:N}$.
\end{enumerate}
\end{enumerate}

\section{Ergodicity\label{s:theory}}

This section discusses the assumptions required for the particle filter.
We then discuss
convergence of Sampling Scheme~\ref{ssch:pmwg} in total variation norm and then
consider the stronger condition of uniform convergence.

We will use the generalization of Sampling Scheme \ref{ssch:pmwg} to the case where there may be multiple PMMH steps and there may be multiple Gibbs steps.
This was discussed in Section~\ref{s:pmwg}.
Let $\theta :=(\theta _{1},\ldots ,\theta _{p})$ be a partition of the parameter vector into $p$ components where each component may be a vector and
let $0\leq p_{1}\leq p$. Let $\Theta =\Theta_{1}\times \ldots \times \Theta _{p}$ be the corresponding partition of the parameter space.
We use the notation $\theta _{-i}:=(\theta _{1},\ldots ,\theta _{i-1},\theta _{i+1},\ldots,\theta_{p})$.
Sampling Scheme \ref{ssch:pmwg general} generates the parameters $\theta _{1},\ldots,\theta _{p_{1}}$ using PMMH steps and the parameters $\theta _{p_{1}+1},\ldots ,\theta _{p}$ using PG steps.
To simplify the discussion, we assume that both particle marginal Metropolis-Hastings steps and particle Gibbs steps are used, i.e., $0<p_{1}<p $.

\begin{sscheme}[PMMH+PG Sampler]
\label{ssch:pmwg general}
Given initial values for $U_{1:T}$, $J$ and $\theta$, one iteration of the MCMC involves the following steps.

\begin{enumerate}
\item (PMMH sampling) For $i=1,\ldots ,p_{1}$

Step $i$:
\begin{enumerate}
\item Sample $\theta _{i}^{\ast }\sim q_{i,1}(\cdot |U_{1:T}, J, \theta_{-i},\theta _{i}).$
\item Sample $U_{1:T}^{\ast }\sim \psi(\cdot |\theta_{-i},\theta _{i}^{\ast }).$
\item Sample $J^{\ast } \sim \tilde{\pi}^N (\cdot |U_{1:T}^{\ast}, \theta _{-i},\theta _{i}^{\ast }).$
\item Set $(\theta _{i}, U_{1:T}, J ) \leftarrow (\theta _{i}^{\ast }, U_{1:T}^{\ast},J^{\ast })$ with probability
\begin{align}
\alpha _{i} & \left( U_{1:T}, J, \theta _{i};U_{1:T}^{\ast },J^{\ast},\theta _{i}^{\ast }|\theta _{-i}\right)  =  1\wedge \nonumber \\
& \frac{\tilde{\pi}^{N}\left( U_{1:T}^{\ast } , \theta _{i}^{\ast}|\theta _{-i}\right) }{\tilde{\pi}^{N}\left( U_{1:T}, \theta _{i}|\theta_{-i}\right) }\,
\frac{q_{i}(U_{1:T}, \theta _{i}|U_{1:T}^{\ast }, J^{\ast}, \theta _{-i},\theta _{i}^{\ast })}{q_{i}(U_{1:T}^{\ast },  \theta _{i}^{\ast}|U_{1:T}, J,\theta _{-i},\theta _{i})} ,  \label{eq:PMwGaccprob}
\end{align}%
where%
\begin{eqnarray*}
q_{i}( U_{1:T}^{\ast },\theta _{i}^{\ast } | U_{1:T}, J, \theta
_{-i},\theta _{i}) & = & q_{i,1}(\theta _{i}^{\ast }|U_{1:T}, J, \theta _{-i},\theta
_{i})
 \psi(U_{1:T}^\ast|\theta _{-i},\theta _{i}^{\ast}).
\end{eqnarray*}

\end{enumerate}

\item (PG sampling) For $i=p_{1}+1,\ldots ,p$

Step $i$:
\begin{enumerate}
\item Sample $\theta _{i}^{\ast }\sim q_{i}(\cdot
|X_{1:T}^{J},B_{1:T-1}^{J},J,\theta _{-i},\theta _{i}).$

\item Set $\theta_i \leftarrow \theta_{i}^{\ast }$ with probability
\begin{eqnarray}
\lefteqn{\alpha_{i}\left(\theta_{i};\theta_{i}^{\ast}|X_{1:T}^{J},B_{1:T-1}^{J},J,\theta _{-i}\right) =}  \notag \\
&&1\wedge \frac{\tilde{\pi}^{N}\left( \theta_{i}^{\ast}|X_{1:T}^{J},B_{1:T-1}^{J},J,\theta _{-i}\right) }{\tilde{\pi}^{N}\left(\theta_{i}|X_{1:T}^{J},B_{1:T-1}^{J},J,\theta_{-i}\right) }
\times \frac{q_{i}(\theta _{i}|X_{1:T}^{J},B_{1:T-1}^{J},J,\theta_{-i},\theta_{i}^{\ast })}{q_{i}(\theta_{i}^{\ast }|X_{1:T}^{J},B_{1:T-1}^{J},J,\theta_{-i},\theta_{i}) } .
\label{eq:PMwGaccproba}
\end{eqnarray}
\end{enumerate}

\item Sample $U_{1:T}^{(-J)}\sim \tilde{\pi}^{N}(\cdot|X_{1:T}^{J},B_{1:T-1}^{J},J,\theta )$ using the conditional sequential Monte Carlo algorithm (CSMC) discussed in Section ~\ref{s:CSMC}.

\item Sample $J\sim \tilde{\pi}^{N}\left( \cdot |U_{1:T},\theta \right)$.
\end{enumerate}
\end{sscheme}

We now discuss the assumptions required for the particle filter.
For $t\geq1$, we define,
\begin{eqnarray*}
S_{t}^{\boldsymbol{\theta}}=\left(\boldsymbol{x}_{1:t}\in\boldsymbol{\chi}^{t}:\pi\left(\boldsymbol{x}_{1:t}|\boldsymbol{\theta}\right)>0\right) & \textrm{and} & Q_{t}^{\boldsymbol{\theta}}=\left\{ \boldsymbol{x}_{1:t}\in\boldsymbol{\chi}^{t}:\pi\left(\boldsymbol{x}_{1:t-1}|\boldsymbol{\theta}\right)m_{t}^{\boldsymbol{\theta}}\left(\boldsymbol{x}_{t}|\boldsymbol{x}_{1:t-1},\boldsymbol{y}_{1:t}\right)>0\right\} .
\end{eqnarray*}
Assumption \ref{assu:propstatespace} ensures that the proposal densities
 $\pi\left(\boldsymbol{x}_{1:t-1}|\boldsymbol{\theta}\right)m_{t}^{\boldsymbol{\theta}}\left(\boldsymbol{x}_{t}|\boldsymbol{x}_{1:t-1},\boldsymbol{y}_{1:t}\right)$
can be used to approximate $\pi\left(\boldsymbol{x}_{1:t}|\boldsymbol{\theta}\right)$
for $t\geq1$.
\begin{assumption}
\citep{andrieuetal2010} We assume that $S_{t}^{\boldsymbol{\theta}}\subseteq Q_{t}^{\boldsymbol{\theta}}$
for any $\boldsymbol{\theta}\in\boldsymbol{\Theta}$ and $t=1,...,T$
\label{assu:propstatespace}
\end{assumption}
Assumption \ref{assu:propstatespace} is always satisfied in our implementation
because we use the bootstrap filter with $p\left(\boldsymbol{x}_{t}|\boldsymbol{x}_{t-1},\boldsymbol{\theta}\right)$
as a proposal density which are positive everywhere.

We also require Assumption \ref{assu:resampling} given below.
\begin{assumption}
\citep{andrieuetal2010} For any $k=1,...,N$ and $t=1,..,T$, the resampling
scheme $\mathcal{M}\left(a_{t-1}^{1:N}|\bar{w}_{t-1}^{1:N}\right)$
satisfies $\mathcal{M}\left(a_{t-1}^{k}=j|\bar{w}_{t-1}^{1:N}\right) = \bar{w}_{t-1}^{j}$.\label{assu:resampling}
\end{assumption}
Assumption \ref{assu:resampling} is satisfied by the popular resampling schemes, such
as multinomial, systematic, residual resampling.

Under Assumption \ref{assu:resampling}, it is straightforward to show
that the algorithm samples from the target density of the random variable $%
U_{1:T}^{\left( -J\right) }=\left( X_{1}^{(-B_{1}^{J})},\ldots
,X_{T}^{(-B_{T}^{J})},A_{1}^{(-B_{2}^{J})},\ldots
,A_{T-1}^{(-B_{T}^{J})}\right) ,$ conditional on $U_{1:T}^{J}$ and index $J$
given by
\begin{eqnarray*}
\lefteqn{\tilde{\pi}^{N}\left( u_{1:T}^{(-j)}|x_{1:T},b_{1:T-1},j,\theta
\right) =} \\
&&\frac{\psi \left( u_{1:T}|\theta \right) }{m_{1}^{\theta }\left(
x_{1}^{b_{1}}\right) \text{ }\prod_{t=2}^{T}\bar{w}%
_{t-1}^{a_{t-1}^{i}}m_{t}^{\theta }\left(
x_{t}^{b_{t}}|x_{t-1}^{a_{t-1}^{b_{t}}}\right) };
\end{eqnarray*}%
see \cite{andrieuetal2010} for details.

We now discuss convergence of Sampling Scheme~\ref{ssch:pmwg general} in total variation norm and then
consider the stronger condition of uniform convergence.
Note that, by construction, Sampling Scheme~\ref{ssch:pmwg general} has the
stationary distribution
\begin{equation*}
\tilde{\pi}^{N}\left( x_{1:T},b_{1:T-1},j,u_{1:T}^{(-j)},\theta \right)
\end{equation*}%
defined in (\ref{eq:targetdist}). From \cite{robertsrosenthal2004} Theorem
4, irreducibility and aperiodicity are sufficient conditions for the Markov
chain obtained using Sampling Scheme~\ref{ssch:pmwg general} to converge to its
stationary distribution in total variation norm for $\tilde{\pi}^{N}$-almost
all starting values. These conditions must be checked for a particular
sampler and it is often straightforward to do so. We will relate Sampling
Scheme \ref{ssch:pmwg general} to the particle Metropolis within Gibbs sampling
scheme defined below.

\begin{sscheme}[Ideal]
\label{ssch:idealpmwg}

\begin{description}
\item Given initial values for $U_{1:T}$, $J$ and $\theta $, one iteration
of the MCMC sampling scheme involves the following steps

\item[1.] (PMMH sampling) For $i=1,\ldots ,p_{1}$

Step $i$:

\begin{description}
\item[(a)] Sample $\theta _{i}^{\ast }\sim q_{i,1}(\cdot |U_{1:T},J,\theta
_{-i},\theta _{i}).$

\item[(b)] Sample $\left( J^{\ast },U_{1:T}^{\ast }\right) \sim \tilde{\pi}%
^{N}\left( \cdot |\theta _{-i},\theta _{i}^{\ast }\right) $.

\item[(c)] Set $\left( \theta _{i}, U_{1:T}, J\right) \leftarrow \left( \theta _{i}^{\ast }, U_{1:T}^{\ast}, J^{\ast }\right)$ with probability
\begin{eqnarray}
\lefteqn{\widetilde{\alpha _{i}}\left( U_{1:T},J,\theta _{i};U_{1:T}^{\ast
},J^{\ast },\theta _{i}^{\ast }|\theta _{-i}\right) =}  \notag \\
& & 1\wedge \frac{\tilde{\pi}^{N}\left( \theta _{i}^{\ast }|\theta
_{-i}\right) }{\tilde{\pi}^{N}\left( \theta _{i}|\theta _{-i}\right) }\,%
\frac{q_{i,1}(\theta _{i}|U_{1:T}^{\ast },J^{\ast },\theta _{-i},\theta
_{i}^{\ast })}{q_{i,1}(\theta _{i}^{\ast }|U_{1:T},J,\theta _{-i},\theta
_{i})}  \label{eq:idealPMwGaccprob}
\end{eqnarray}
\end{description}

\item[2.] (PG sampling) For $i=p_{1}+1,\ldots ,p$

Step $i$:

\begin{description}
\item[(a)] Sample $\theta _{i}^{\ast }\sim q_{i}(\cdot
|X_{1:T}^{J},B_{1:T-1}^{J},J,\theta _{-i},\theta _{i}).$

\item[(b)] Set $\theta_{i} \leftarrow \theta_{i}^{\ast }$ with probability
\begin{eqnarray}
\lefteqn{\alpha _{i}\left\{ \theta _{i};\theta _{i}^{\ast
}|X_{1:T}^{J},B_{1:T-1}^{J},J,\theta _{-i}\right\} =}  \notag \\
&&1\wedge \frac{\tilde{\pi}^{N}\left( \theta _{i}^{\ast
}|X_{1:T}^{J},B_{1:T-1}^{J},J,\theta _{-i}\right) }{\tilde{\pi}^{N}\left(
\theta _{i}|X_{1:T}^{J},B_{1:T-1}^{J},J,\theta _{-i}\right) }  \notag \\
&&\frac{q_{i}(\theta _{i}|X_{1:T}^{J},B_{1:T-1}^{J},J,\theta _{-i},\theta
_{i}^{\ast })}{q_{i}(\theta _{i}^{\ast }|X_{1:T}^{J},B_{1:T-1}^{J},J,\theta
_{-i},\theta _{i})}.  \label{eq:idealPMwGaccproba}
\end{eqnarray}
\end{description}

\item[3.]

\begin{description}
\item Sample $U_{1:T}^{(-J)}\sim \tilde{\pi}^{N}(\cdot
|X_{1:T}^{J},B_{1:T-1}^{J},J,\theta )$ using Algorithm~\ref{alg:condsmc}.
\end{description}

\item[4.]

\begin{description}
\item Sample $J\sim \tilde{\pi}^{N}\left( \cdot |U_{1:T},\theta \right) $.
\end{description}
\end{description}
\end{sscheme}

We call Sampling Scheme~\ref{ssch:idealpmwg} an \textit{ideal} particle
sampling scheme because in Part 1 Step $i$(b) it generates the particles $%
U_{1:T}^{\ast }$ from their conditional distribution $\tilde{\pi}^{N}\left(
\cdot |\theta _{-i},\theta _{i}^{\ast }\right) $ instead of using a
Metropolis-Hastings proposal. Thus comparing Sampling Schemes \ref{ssch:pmwg general}
and \ref{ssch:idealpmwg} allows us to concentrate on the effect of the
Metropolis-Hastings proposal for the particles on the convergence of the
sampler.

\begin{remark}
\cite{andrieuroberts2009} and \cite{andrieuvihola2012} discuss the
relationship between PMMH sampling schemes with one block of parameters and
an \textit{ideal} Metropolis-Hastings sampling scheme not involving the
particles. Sampling Schemes \ref{ssch:pmwg general} and \ref{ssch:idealpmwg} are
more general. Our approach is similar to,  but generalizes, the
results in \cite{andrieuroberts2009} and \cite{andrieuvihola2012} to more
complex sampling schemes.
\end{remark}

To develop the theory of Sampling Schemes \ref{ssch:pmwg general} and \ref%
{ssch:idealpmwg} we require the following definitions. Let $\left\{
V^{\left( n\right) },n=1,2,\ldots \right\} $ be the iterates of a Markov
chain defined on the state space $\mathcal{V}:=\mathcal{U}\times \mathbb{N}%
\times \Theta $. For $i=1,\ldots ,p$, let $K_{i}(v;\cdot )$ be the
substochastic transition kernel of the $i$th step of Sampling Scheme~\ref%
{ssch:pmwg general} that defines the probabilities for accepted Metropolis-Hastings
moves and define%
\begin{equation*}
K:=K_{1}K_{2}\ldots K_{p}
\end{equation*}%
to be the substochastic transition kernel that defines the probabilities for
accepted Metropolis-Hastings moves. Note that probabilities involving the
substochastic kernels provide lower bounds on the probabilities for the
transition kernel of the corresponding Markov chain.

\bigskip

For $i=1,\ldots ,p_{1}$%
\begin{eqnarray*}
\lefteqn{K_{i}\left( U_{1:T},J,\theta _{-i},\theta _{i};U_{1:T}^{\ast
},J_{i}^{\ast },\theta _{-i},\theta _{i}^{\ast }\right) =} \\
&&\tilde{\pi}^{N}(J^{\ast }|U_{1:T}^{\ast },\theta _{-i},\theta _{i}^{\ast
})q_{i}(U_{1:T}^{\ast },\theta _{i}^{\ast }|U_{1:T},J,\theta _{-i},\theta
_{i})\times \alpha _{i}\left( U_{1:T},J,\theta _{i};U_{1:T}^{\ast },J^{\ast },\theta
_{i}^{\ast }|\theta _{-i}\right) .
\end{eqnarray*}%
Similarly, for $i=1,\ldots ,p$, let $\widetilde{K}_{i}(v;\cdot )$ be the
substochastic transition kernel of the $i$th step of Sampling Scheme~\ref%
{ssch:idealpmwg} that defines the probabilities for accepted
Metropolis-Hastings moves and define%
\begin{equation*}
\widetilde{K}=\widetilde{K}_{1}\widetilde{K}_{2}\ldots \widetilde{K}_{p},
\end{equation*}%
where the kernels $K_{i}$ and $\widetilde{K}_{i}$ only differ for $%
i=1,\ldots ,p_{1}$.

The next theorem gives a sufficient condition for Sampling Scheme~\ref{ssch:pmwg general} to be irreducible and aperiodic and
is similar to Theorem 1 of \cite{andrieuroberts2009}).
\begin{theorem}
\label{th:irrandaperiodic} If $\widetilde{K}$ is irreducible and
aperiodic then $K$ is irreducible and aperiodic.
\begin{proof}
For $i=1,\ldots ,p_{1}$,
$\tilde{\pi}^{N}\left( \cdot |\theta _{-i},\theta _{i}^{\ast }\right)
 \ll \psi\left (\cdot|\theta _{-i},\theta _{i}^{\ast }
\right )
$
and the result now follows from Assumption 1 of \cite{andrieuetal2010}.
\end{proof}
\end{theorem}

We now follow the approach in \cite{andrieuroberts2009} and show the uniform
erdogicity of the sampling schemes by giving sufficient conditions for the
existence of minorization conditions for Sampling Scheme~\ref{ssch:pmwg general}.
These minorization conditions are equivalent to uniform ergodicity by
Theorem 8 of \cite{robertsrosenthal2004}. The results use the following
technical lemmas.

\begin{lemma}
\label{l:ergtechnical1}For $i=1,\ldots ,p_{1},$%
\begin{eqnarray*}
\alpha _{i}\left( U_{1:T},J,\theta _{i};U_{1:T}^{\ast },J^{\ast
},\theta _{i}^{\ast }|\theta _{-i}\right) \geq
\left\{ 1\wedge \frac{\tilde{\pi}^{N}\left( U_{1:T}^{\ast }|\theta
_{-i},\theta _{i}^{\ast }\right) \psi(U_{1:T}|
\theta _{-i},\theta _{i})}{\tilde{\pi}^{N}\left( U_{1:T}|\theta
_{-i},\theta _{i}\right) \psi(U_{1:T}^{\ast }|\theta
_{-i},\theta _{i}^{\ast })}\right\}
 \times
\widetilde{\alpha _{i}}\left( U_{1:T},J,\theta _{i};U_{1:T}^{\ast
},J^{\ast },\theta _{i}^{\ast }|\theta _{-i}\right)
\end{eqnarray*}
\end{lemma}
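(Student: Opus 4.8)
The plan is to reduce the statement to an elementary inequality about minima after an algebraic factorization of the acceptance ratio. First I would write out $\alpha_i$ from its definition in (\ref{eq:PMwGaccprob}), substituting the product form of the proposal, $q_i(U_{1:T}^*,\theta_i^*|U_{1:T},J,\theta_{-i},\theta_i) = q_{i,1}(\theta_i^*|U_{1:T},J,\theta_{-i},\theta_i)\,q_{i,2}(U_{1:T}^*|U_{1:T},J,\theta_{-i},\theta_i^*)$, and similarly for the reverse proposal appearing in the numerator.

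Second, I would split the target ratio using the conditioning identity $\tilde{\pi}^N(U_{1:T},\theta_i|\theta_{-i}) = \tilde{\pi}^N(U_{1:T}|\theta_{-i},\theta_i)\,\tilde{\pi}^N(\theta_i|\theta_{-i})$, and the analogous identity for the starred variables. This rewrites the ratio inside $\alpha_i$ as a product $A\cdot B$, where $A$ collects the $U_{1:T}$-conditional target factors together with the $q_{i,2}$ proposal ratio, and $B$ collects the $\theta_i$-marginal target factors together with the $q_{i,1}$ proposal ratio, so that $\alpha_i = 1 \wedge (AB)$ with $A,B\geq 0$.

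Third, I would identify the two pieces with the objects on the right-hand side of the claim. By inspection of (\ref{eq:idealPMwGaccprob}), the factor $B$ is precisely the ratio defining the ideal acceptance probability, so $\widetilde{\alpha_i} = 1 \wedge B$; and $A$ is exactly the ratio appearing in the first bracketed factor of the stated bound, so that factor equals $1 \wedge A$.

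Finally, the result follows from the pointwise inequality $1 \wedge (AB) \geq (1 \wedge A)(1 \wedge B)$, valid for all $A,B \geq 0$, which I would verify by a four-way case split according to whether each of $A$ and $B$ exceeds $1$ (one checks easily that the inequality holds with equality when both factors lie on the same side of $1$, and holds strictly otherwise). The only genuine work is the bookkeeping in the factorization step—keeping the conditioning arguments of $q_{i,1}$ and $q_{i,2}$ correctly aligned between numerator and denominator—since the final inequality is immediate; I anticipate no real obstacle beyond this clerical care.
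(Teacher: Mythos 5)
Your proposal is correct and follows essentially the same route as the paper's own proof: factor the acceptance ratio into the $q_{i,2}$/$U_{1:T}$ part $A$ and the $q_{i,1}$/$\theta_i$ part $B$ via the conditioning identity $\tilde{\pi}^{N}(U_{1:T},\theta_i|\theta_{-i})=\tilde{\pi}^{N}(U_{1:T}|\theta_{-i},\theta_i)\,\tilde{\pi}^{N}(\theta_i|\theta_{-i})$, then apply $1\wedge(AB)\geq(1\wedge A)(1\wedge B)$ and recognize $1\wedge B$ as $\widetilde{\alpha_i}$ from (\ref{eq:idealPMwGaccprob}). (One trivial nit: your claim of strict inequality in the mixed cases fails on the boundary, e.g.\ $A=1$, $B<1$ gives equality, but only the non-strict inequality is needed.)
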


\begin{proof}
From (\ref{eq:PMwGaccprob}),%
\begin{eqnarray*}
\lefteqn{\alpha _{i}\left( U_{1:T},J,\theta _{i};U_{1:T}^{\ast },J^{\ast
},\theta _{i}^{\ast }|\theta _{-i}\right) } \\
&=&1\wedge \frac{\tilde{\pi}^{N}\left( U_{1:T}^{\ast },\theta _{i}^{\ast
}|\theta _{-i}\right) }{\tilde{\pi}^{N}\left( U_{1:T},\theta _{i}|\theta
_{-i}\right) }\,\frac{q_{i}(U_{1:T},\theta _{i}|U_{1:T}^{\ast },J^{\ast
},\theta _{-i},\theta _{i}^{\ast })}{q_{i}(U_{1:T}^{\ast },\theta _{i}^{\ast
}|U_{1:T},J,\theta _{-i},\theta _{i})} \\
&=&  1\wedge \frac{\tilde{\pi}^{N}\left( U_{1:T}^{\ast }|\theta
_{-i},\theta _{i}^{\ast }\right) \psi(U_{1:T}|
\theta _{-i},\theta _{i})}{\tilde{\pi}^{N}\left( U_{1:T}|\theta
_{-i},\theta _{i}\right) \psi(U_{1:T}^{\ast }|\theta
_{-i},\theta _{i}^{\ast })}
\times
\frac{\tilde{\pi}^{N}\left( \theta _{i}^{\ast }|\theta _{-i}\right)
q_{i,1}(\theta _{i}|U_{1:T}^{\ast },J^{\ast },\theta _{-i},\theta _{i}^{\ast
})}{\tilde{\pi}^{N}\left( \theta _{i}|\theta _{-i}\right) q_{i,1}(\theta
_{i}^{\ast }|U_{1:T},J,\theta _{-i},\theta _{i})}\\
&\geq & 1\wedge \frac{\tilde{\pi}^{N}\left( U_{1:T}^{\ast }|\theta
_{-i},\theta _{i}^{\ast }\right) \psi(U_{1:T}|
\theta _{-i},\theta _{i})}{\tilde{\pi}^{N}\left( U_{1:T}|\theta
_{-i},\theta _{i}\right) \psi(U_{1:T}^{\ast }|\theta
_{-i},\theta _{i}^{\ast })}
\times
1\wedge \frac{\tilde{\pi}^{N}\left( \theta _{i}^{\ast }|\theta _{-i}\right)
q_{i,1}(\theta _{i}|U_{1:T}^{\ast },J^{\ast },\theta _{-i},\theta _{i}^{\ast
})}{\tilde{\pi}^{N}\left( \theta _{i}|\theta _{-i}\right) q_{i,1}(\theta
_{i}^{\ast }|U_{1:T},J,\theta _{-i},\theta _{i})}\\
&=&
\left\{ 1\wedge \frac{\tilde{\pi}^{N}\left( U_{1:T}^{\ast }|\theta
_{-i},\theta _{i}^{\ast }\right) \psi(U_{1:T}|
\theta _{-i},\theta _{i})}{\tilde{\pi}^{N}\left( U_{1:T}|\theta
_{-i},\theta _{i}\right) \psi(U_{1:T}^{\ast }|\theta
_{-i},\theta _{i}^{\ast })}
\right\} \times
\widetilde{\alpha _{i}}\left( U_{1:T},J,\theta _{i};U_{1:T}^{\ast
},J^{\ast },\theta _{i}^{\ast }|\theta _{-i}\right)
\end{eqnarray*}
\end{proof}

\begin{lemma}
\label{l:ergtechnical2}Suppose that
\begin{equation}
\frac{\tilde{\pi}^{N}\left( U_{1:T}^{\ast }|\theta \right) }{%
\psi(U_{1:T}^{\ast }|\theta )} \leq \gamma<\infty
\label{eq:convcond}
\end{equation}%
for all $U_{1:T}^{\ast }\in \mathcal{U}, \theta  \in
\mathcal{S}$. Then, for $i=1,\ldots ,p_{1}$, each Markov transition kernel $%
K_i$ satisfies
\begin{equation}
K_{i}\geq \gamma^{-1}\widetilde{K}_{i}  \label{eq:kernelbound1}
\end{equation}%
and hence%
\begin{equation}
K\geq \gamma ^{-p_1}\widetilde{K}.
\label{eq:kernelbound2}
\end{equation}
\end{lemma}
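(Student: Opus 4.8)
The plan is to compare the two substochastic kernels factor by factor, reduce the claim to a single pointwise inequality involving $q_{i,2}$ and $\tilde{\pi}^{N}$, and then compose the resulting bounds across the $p$ steps. First I would write $K_{i}$ and $\widetilde{K}_{i}$ explicitly from the descriptions of Sampling Schemes~\ref{ssch:pmwg} and~\ref{ssch:idealpmwg}. Both kernels carry the common factors $q_{i,1}(\theta_{i}^{*}|U_{1:T},J,\theta_{-i},\theta_{i})$ and $\tilde{\pi}^{N}(J^{*}|U_{1:T}^{*},\theta_{-i},\theta_{i}^{*})$; the only differences are that $K_{i}$ proposes the particles through $q_{i,2}(U_{1:T}^{*}|\cdots)$ and uses the acceptance probability $\alpha_{i}$, whereas $\widetilde{K}_{i}$ proposes them through $\tilde{\pi}^{N}(U_{1:T}^{*}|\theta_{-i},\theta_{i}^{*})$ and uses $\widetilde{\alpha_{i}}$. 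After cancelling the common factors, proving (\ref{eq:kernelbound1}) reduces to showing
\[
q_{i,2}(U_{1:T}^{*}|\cdots)\,\alpha_{i} \ \geq\ \gamma_{i}^{-1}\,\tilde{\pi}^{N}(U_{1:T}^{*}|\theta_{-i},\theta_{i}^{*})\,\widetilde{\alpha_{i}}.
\]

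Next I would apply Lemma~\ref{l:ergtechnical1} to replace $\alpha_{i}$ by its lower bound, which factors as $\widetilde{\alpha_{i}}$ times $1\wedge\{\tilde{\pi}^{N}(U_{1:T}^{*}|\cdots)q_{i,2}(U_{1:T}|\cdots)\}/\{\tilde{\pi}^{N}(U_{1:T}|\cdots)q_{i,2}(U_{1:T}^{*}|\cdots)\}$. Cancelling $\widetilde{\alpha_{i}}$ from both sides, the task becomes the pointwise inequality
\[
q_{i,2}(U_{1:T}^{*}|\cdots)\left\{1\wedge\frac{\tilde{\pi}^{N}(U_{1:T}^{*}|\cdots)\,q_{i,2}(U_{1:T}|\cdots)}{\tilde{\pi}^{N}(U_{1:T}|\cdots)\,q_{i,2}(U_{1:T}^{*}|\cdots)}\right\} \ \geq\ \gamma_{i}^{-1}\,\tilde{\pi}^{N}(U_{1:T}^{*}|\theta_{-i},\theta_{i}^{*}).
\]
I would verify this by splitting into the two cases of the minimum. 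If the minimum equals $1$, the left side is $q_{i,2}(U_{1:T}^{*}|\cdots)$, which dominates the right side immediately by hypothesis (\ref{eq:convcond}) applied to the forward particle proposal. If the minimum equals the ratio, the two $q_{i,2}(U_{1:T}^{*}|\cdots)$ factors cancel and the left side equals $\tilde{\pi}^{N}(U_{1:T}^{*}|\cdots)\,q_{i,2}(U_{1:T}|\cdots)/\tilde{\pi}^{N}(U_{1:T}|\cdots)$; here I would invoke (\ref{eq:convcond}) a second time, now for the reverse proposal, giving $q_{i,2}(U_{1:T}|\cdots)/\tilde{\pi}^{N}(U_{1:T}|\cdots)\geq\gamma_{i}^{-1}$, which again yields the bound.

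Finally, to obtain (\ref{eq:kernelbound2}) from (\ref{eq:kernelbound1}), I would use $K=K_{1}\cdots K_{p}$, $\widetilde{K}=\widetilde{K}_{1}\cdots\widetilde{K}_{p}$, and the fact that $K_{i}=\widetilde{K}_{i}$ for $i=p_{1}+1,\ldots,p$. Because every $K_{i}$ and $\widetilde{K}_{i}$ is a non-negative kernel, composing kernels preserves pointwise domination (integration against non-negative measures is monotone), so the factorwise bounds multiply and an induction over the $p$ steps gives $K\geq(\prod_{i=1}^{p_{1}}\gamma_{i})^{-1}\widetilde{K}$.

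The step I expect to be the main obstacle is the second case of the minimum, where (\ref{eq:convcond}) must be applied to the \emph{reverse} particle move: there the roles are swapped, with $U_{1:T}$ playing the part of the sampled particle and $(U_{1:T}^{*},J^{*})$ together with parameter component $\theta_{i}$ (rather than $\theta_{i}^{*}$) playing the part of the conditioning configuration. Since (\ref{eq:convcond}) is assumed to hold for all arguments in $\mathcal{V}$, this application is legitimate, but keeping track of which configuration is the conditioning variable is the delicate point.
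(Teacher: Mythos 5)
Your proof is correct and follows essentially the same route as the paper: write out $K_{i}$ and $\widetilde{K}_{i}$, cancel the common factors $q_{i,1}$ and $\tilde{\pi}^{N}(J^{\ast}|U_{1:T}^{\ast},\theta_{-i},\theta_{i}^{\ast})$, lower-bound $\alpha_{i}$ via Lemma~\ref{l:ergtechnical1}, apply (\ref{eq:convcond}) to absorb the remaining factor, and compose the per-step bounds. Your explicit two-case analysis of the minimum (applying (\ref{eq:convcond}) once to the forward proposal and once to the reverse proposal) is exactly the justification the paper leaves implicit in its single displayed inequality, so you have if anything given a more complete account of the key step.
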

\begin{proof}
Fix $i\in \left\{ 1,\ldots ,p_{1}\right\} $ and let $A\in \mathcal{B}\left(
\mathcal{U}\right) $, $J,J^{\ast }\in \left\{ 1,\ldots ,N\right\} $ and $%
B\in \mathcal{B}\left( \Theta _{i}\right) $. Then%
\begin{eqnarray*}
\lefteqn{K_{i}\left( U_{1:T},J,\theta _{-i},\theta _{i};A,J^{\ast },\theta
_{-i},B\right) } \\
&=&\dint_{A\times B}\tilde{\pi}^{N}(J^{\ast }|U_{1:T}^{\ast },\theta
_{-i},\theta _{i}^{\ast })q_{i}(U_{1:T}^{\ast },\theta _{i}^{\ast
}|U_{1:T},J,\theta _{-i},\theta _{i})\times \\
&&\alpha _{i}\left( U_{1:T},J,\theta _{i};U_{1:T}^{\ast },J^{\ast },\theta
_{i}^{\ast }|\theta _{-i}\right) dU_{1:T}^{\ast }d\theta _{i}^{\ast
} \\
&\geq &\dint_{A\times B}\tilde{\pi}^{N}(J^{\ast }|U_{1:T}^{\ast },\theta
_{-i},\theta _{i}^{\ast })q_{i}(U_{1:T}^{\ast },\theta _{i}^{\ast
}|U_{1:T},J,\theta _{-i},\theta _{i})\times \\
&&\left\{ 1\wedge \frac{\tilde{\pi}^{N}\left( U_{1:T}^{\ast }|\theta
_{-i},\theta _{i}^{\ast }\right) \psi(U_{1:T}|
\theta _{-i},\theta _{i})}{\tilde{\pi}^{N}\left( U_{1:T}|\theta
_{-i},\theta _{i}\right) \psi(U_{1:T}^{\ast }|\theta
_{-i},\theta _{i}^{\ast })}\right\} \times
\widetilde{\alpha _{i}}\left( U_{1:T},J,\theta _{i};U_{1:T}^{\ast
},J^{\ast },\theta _{i}^{\ast }|\theta _{-i}\right) dU_{1:T}^{\ast
}d\theta _{i}^{\ast } \\
&\geq &\gamma^{-1}\dint_{A\times B}\tilde{\pi}^{N}\left( U_{1:T}^{\ast
},J^{\ast }|\theta _{-i},\theta _{i}^{\ast }\right) q_{i,1}(\theta
_{i}^{\ast }|U_{1:T},J,\theta _{-i},\theta _{i})\times \widetilde{\alpha _{i}}\left( U_{1:T},J,\theta _{i};U_{1:T}^{\ast
},J^{\ast },\theta _{i}^{\ast }|\theta _{-i}\right) d U_{1:T}^{\ast
}d\theta _{i}^{\ast }\\
&=&\gamma^{-1}\widetilde{K}_{i}\left( U_{1:T},J,\theta _{-i},\theta
_{i};A,J^{\ast },\theta _{-i},B\right) \text{,}
\end{eqnarray*}
which proves (\ref{eq:kernelbound1}). Apply (\ref{eq:kernelbound1}) for each
$i$ to get (\ref{eq:kernelbound2})
\end{proof}
Lemma \ref{l:ergtechnical2} can be used to find sufficient conditions for
the existence of minorization conditions for Sampling Scheme~\ref{ssch:pmwg general}
as given in the theorem below, which is similar to \cite{andrieuroberts2009}
, Theorem 8. Let $\mathcal{L}_{N}\{V^{\left( n\right) }\in \cdot \}$ denote
the sequence of distribution functions of the random variables $\{V^{\left(
n\right) }:\,n=1,2,\dots \}$, generated by Sampling Scheme~\ref{ssch:pmwg general},
and let $|\cdot |_{TV}$ be total variation norm.

\begin{theorem}
\label{th:minorization}Suppose that Sampling Scheme~\ref{ssch:idealpmwg}
satisfies the following minorization condition: there exists a constant $%
\epsilon >0$, a number $n_{0}\geq 1$, and a probability measure $\nu $ on $%
\mathcal{V}$ such that $\widetilde{K}^{n_{0}}(v;A)\geq \epsilon \,\nu (A)$
for all $v\in \mathcal{V},A\in \mathcal{B}\left( \mathcal{V}\right) $.
Suppose also that the conditions of Lemma \ref{l:ergtechnical2} are
satisfied. Then Sampling Scheme~\ref{ssch:pmwg general} satisfies the minorization
condition%
\begin{equation*}
K^{n_{0}}(v;A)\geq \gamma
^{-p_1 n_{0}}\epsilon \nu (A)
\end{equation*}%
and for all starting values for the Markov Chain%
\begin{equation*}
\left\vert \mathcal{L}_{N}\{V^{\left( n\right) }\in \cdot \}-\tilde{\pi}%
^{N}\left\{ V^{\left( n\right) }\in \cdot \right\} \right\vert _{TV}\leq
\left( 1-\delta \right) ^{\left\lfloor n/n_{0}\right\rfloor },
\end{equation*}%
where $0<\delta <1$ and $\left\lfloor n/n_{0}\right\rfloor $ is the greatest
integer not exceeding $n/n_{0}$.
\end{theorem}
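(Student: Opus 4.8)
The plan is to assemble the result from the kernel bound already established in Lemma~\ref{l:ergtechnical2}, together with the assumed minorization for the ideal scheme and the standard equivalence between a whole-space minorization and uniform ergodicity from \cite{robertsrosenthal2004}, Theorem~8. Writing $\Gamma := \prod_{i=1}^{p_1}\gamma_i$, the hypotheses of Lemma~\ref{l:ergtechnical2} are in force, so (\ref{eq:kernelbound2}) supplies the one-sweep bound $K \geq \Gamma^{-1}\widetilde{K}$, understood as $K(v;A) \geq \Gamma^{-1}\widetilde{K}(v;A)$ for every $v \in \mathcal{V}$ and $A \in \mathcal{B}(\mathcal{V})$.

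The first substantive step is to iterate this bound up to the $n_0$-step kernel. The key observation is that composition of nonnegative kernels is monotone: if $L \geq L'$ and $M \geq M'$ pointwise, then $LM \geq L'M'$, since integrating the nonnegative function $M(\cdot;A)$ against the larger measure $L(v;\cdot)$ only increases the value, and similarly for the second factor. Applying this $n_0$ times to $K \geq \Gamma^{-1}\widetilde{K}$, and factoring the scalar $\Gamma^{-1}$ out of each composition, yields $K^{n_0} \geq \Gamma^{-n_0}\widetilde{K}^{n_0}$.

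Next I would substitute the assumed minorization for the ideal scheme. Since $\widetilde{K}^{n_0}(v;A) \geq \epsilon\,\nu(A)$ for all $v$ and $A$, combining with the previous bound gives
\begin{equation*}
K^{n_0}(v;A) \;\geq\; \Gamma^{-n_0}\,\widetilde{K}^{n_0}(v;A) \;\geq\; \Gamma^{-n_0}\,\epsilon\,\nu(A),
\end{equation*}
which is exactly the claimed minorization with $\delta := \Gamma^{-n_0}\epsilon$. Here $\delta>0$; moreover each $\gamma_i\geq 1$, because it bounds a supremum of the ratio of the two probability densities $\tilde{\pi}^{N}(\cdot|\theta)$ and $q_{i,2}(\cdot|\cdots)$, so $\Gamma^{-n_0}\leq 1$, while substochasticity of $\widetilde{K}$ forces $\epsilon\leq 1$; hence $\delta\leq 1$, with $\delta<1$ in the non-degenerate case where the particle proposal differs from the exact conditional. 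Finally, because this minorization holds uniformly over all $v\in\mathcal{V}$—so that the entire state space is a small set for the $n_0$-skeleton of $K$—the standard Doeblin argument, equivalently \cite{robertsrosenthal2004}, Theorem~8, delivers uniform ergodicity of Sampling Scheme~\ref{ssch:pmwg} and the geometric total-variation bound $(1-\delta)^{\lfloor n/n_0\rfloor}$, valid for every starting value.

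The genuinely new content is all contained in Lemma~\ref{l:ergtechnical2}, so the only points here that demand care are the monotone iteration of the kernel inequality from one sweep to $n_0$ sweeps and the verification that the constant $\delta$ lies strictly in $(0,1)$; the passage from a whole-space minorization to the total-variation rate is a direct citation. I would therefore devote most of the write-up to making the kernel-composition monotonicity explicit and confirming the range of $\delta$, treating the ergodicity conclusion as an immediate application of the cited theorem.
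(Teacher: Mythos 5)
Your proposal is correct and takes essentially the same route as the paper's proof: iterate the one-sweep bound $K\geq\left(\prod_{i=1}^{p_{1}}\gamma_{i}\right)^{-1}\widetilde{K}$ from Lemma~\ref{l:ergtechnical2} to get $K^{n_{0}}\geq\left(\prod_{i=1}^{p_{1}}\gamma_{i}\right)^{-n_{0}}\widetilde{K}^{n_{0}}$, insert the assumed minorization for $\widetilde{K}^{n_{0}}$, and cite \cite{robertsrosenthal2004}, Theorem~8, for the total-variation rate. The paper compresses the iteration into ``applying Lemma~\ref{l:ergtechnical2} repeatedly,'' whereas you spell out the kernel-composition monotonicity and the range of $\delta$; these are harmless elaborations of the same argument.
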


\begin{proof}
To show the first part, suppose $\widetilde{K}^{n_{0}}(v;A)\geq \epsilon
\,\nu (A)$ for all $v\in \mathcal{V},A\in \mathcal{B}\left( \mathcal{V}
\right) $. Fix $v\in \mathcal{V},A\in \mathcal{B}\left( \mathcal{V}\right) $%
. Applying Lemma \ref{l:ergtechnical2} repeatedly gives%
\begin{eqnarray*}
K^{n_{0}}(v;A) &\geq &\gamma^{-p_1 n_{0}}
\widetilde{K}^{n_{0}}(v;A)
\geq \,\gamma
^{-p_1 n_{0}}\epsilon \nu
(A)
\end{eqnarray*}%
as required. The second part follows from the first part and \cite%
{robertsrosenthal2004}, Theorem 8.
\end{proof}

Lemma \ref{l:boundedlikelihood} gives sufficient conditions for Lemma \ref%
{l:ergtechnical2} to hold.
The first condition is from \cite{andrieuetal2010}.

\begin{lemma}
\label{l:boundedlikelihood}Suppose

\begin{description}
\item[(i)] There is a sequence of finite, positive constants $%
\{c_{t}:t=1,\dots ,T\}$ such that for any $x_{1:t}\in \mathcal{S}_{t}(\theta
)$ and all $\theta \in \mathcal{S}$, $f_{\theta }(x_{t}|x_{t-1})g_{\theta
}(y_{t}|x_{t})\leq c_{t}\,m_{t}^{\theta }(x_{t}|x_{t-1})$.

\item[(ii)] There exists an $\varepsilon >0$ such that for all $\theta \in
\mathcal{S}$, $p\left( y_{1:T}|\theta \right) >\varepsilon $.
\end{description}

If (i)\ and (ii)\ hold, then the conditions in Lemma \ref{l:ergtechnical2}
are satisfied.
\end{lemma}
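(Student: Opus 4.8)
The plan is to verify hypothesis (\ref{eq:convcond}) of Lemma~\ref{l:ergtechnical2} directly, in the stated case $q_{i,2}(\cdot|U_{1:T},J,\theta)=\psi(\cdot|\theta)$. For this choice the identity (\ref{eq:accprobsimplify2}) rewrites the ratio in (\ref{eq:convcond}) as
\begin{equation*}
\frac{\tilde{\pi}^{N}(U_{1:T}^{\ast}|\theta)}{q_{i,2}(U_{1:T}^{\ast}|U_{1:T},J,\theta)}=\frac{\tilde{\pi}^{N}(U_{1:T}^{\ast}|\theta)}{\psi(U_{1:T}^{\ast}|\theta)}=\frac{Z(U_{1:T}^{\ast},\theta)}{p(y_{1:T}|\theta)},
\end{equation*}
wherever $\psi(U_{1:T}^{\ast}|\theta)>0$; on the complement, Lemma~\ref{l:pfsupport} gives $\tilde{\pi}^{N}(\cdot|\theta)\ll\psi(\cdot|\theta)$, so the relevant Radon--Nikodym derivative vanishes and nothing needs to be bounded. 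Hence it suffices to bound $Z(U_{1:T}^{\ast},\theta)/p(y_{1:T}|\theta)$ by a finite constant, uniformly over $U_{1:T}^{\ast}\in\mathcal{U}$ and $\theta\in\mathcal{S}$; because $q_{i,2}=\psi(\cdot|\theta)$ does not depend on $(U_{1:T},J)$, the bound will automatically be free of the conditioning variables and the same for every $i=1,\dots,p_{1}$.

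For the numerator I would bound the likelihood estimate $Z(U_{1:T}^{\ast},\theta)=\prod_{t=1}^{T}\bigl(N^{-1}\sum_{i=1}^{N}w_{t}^{i}\bigr)$ factor by factor, by showing $w_{t}^{i}\leq c_{t}$ for every particle on the set $\{\psi(U_{1:T}^{\ast}|\theta)>0\}$, which is exactly where the bound is needed. The key observation is that the resampling marginals $P(A_{t-1}^{i}=k)=\bar{w}_{t-1}^{k}$ of Assumption~\ref{a:pfresamp} force $\mathcal{M}(a_{t-1}^{1:N}|\bar{w}_{t-1}^{1:N})=0$ whenever some selected ancestor has zero weight; consequently, on the support of $\psi$ every chosen ancestral lineage carries strictly positive weights at all earlier times, and hence lies in $\mathcal{S}_{t-1}(\theta)$. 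Appending the current particle then gives a dichotomy: if $f_{\theta}(x_{t}^{i}|x_{t-1}^{a_{t-1}^{i}})g_{\theta}(y_{t}|x_{t}^{i})>0$ the extended path lies in $\mathcal{S}_{t}(\theta)$ and condition~(i) yields $w_{t}^{i}\leq c_{t}$, while if the numerator density is zero then $w_{t}^{i}=0\leq c_{t}$; the initial weights are handled identically with $\mu_{\theta}$ in place of the transition density. Averaging over $i$ and multiplying over $t$ then gives the uniform bound $Z(U_{1:T}^{\ast},\theta)\leq\prod_{t=1}^{T}c_{t}$.

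For the denominator, condition~(ii) supplies $p(y_{1:T}|\theta)>\varepsilon$ for all $\theta\in\mathcal{S}$, so combining the two estimates yields
\begin{equation*}
\frac{\tilde{\pi}^{N}(U_{1:T}^{\ast}|\theta)}{q_{i,2}(U_{1:T}^{\ast}|U_{1:T},J,\theta)}\leq\frac{\prod_{t=1}^{T}c_{t}}{\varepsilon}=:\gamma_{i}<\infty,
\end{equation*}
which is precisely (\ref{eq:convcond}), so the hypotheses of Lemma~\ref{l:ergtechnical2} hold. I expect the main obstacle to be the uniform weight bound $w_{t}^{i}\leq c_{t}$: condition~(i) is stated only for paths $x_{1:t}\in\mathcal{S}_{t}(\theta)$, whereas $Z$ sums over all $N$ particles, some of whose lineages could a priori leave the support. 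Making rigorous the claim that this cannot happen on $\{\psi>0\}$ --- via the resampling argument above and Assumption~\ref{a:pfsupport} guaranteeing $\mathcal{S}_{t}(\theta)\subseteq\mathcal{Q}_{t}(\theta)$ so that the relevant densities are well defined --- is the step I would write out most carefully; the remaining manipulations are elementary.
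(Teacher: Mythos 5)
Your proposal is correct and takes essentially the same route as the paper's proof: rewrite the ratio via (\ref{eq:accprobsimplify2}), bound $Z(U_{1:T}^{\ast},\theta)$ above by $\prod_{t=1}^{T}c_{t}$ using (i), and bound the denominator below by $\varepsilon$ using (ii). The only difference is one of detail: the paper asserts in a single line that (i) gives $Z(U_{1:T},\theta)\leq\prod_{t=1}^{T}c_{t}$ for all $U_{1:T}\in\mathcal{U}$, whereas your resampling/support argument (ancestors selected under Assumption~\ref{a:pfresamp} must carry positive weights on $\{\psi>0\}$, and weights vanish otherwise) supplies the justification that the paper omits --- a point worth spelling out, since (i) only constrains paths lying in $\mathcal{S}_{t}(\theta)$.
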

\begin{proof}
Part (i) implies that for all $\theta \in \mathcal{S}$ and all $U_{1:T}\in
\mathcal{U}$, $Z(U_{1:T},\theta )\leq \dprod_{t=1}^{T}c_{t}$. Hence Part
(ii) implies that%
\begin{equation*}
\frac{Z(U_{1:T},\theta )}{p\left( y_{1:T}|\theta \right) }<\frac{%
\dprod_{t=1}^{T}c_{t}}{\varepsilon }.
\end{equation*}%
 From \eqref{eq:accprobsimplify1},
\begin{align*}\frac{\tilde{\pi}^{N}\left(
U_{1:T}^{\ast }|\theta \right) }{\psi \left( U_{1:T}^{\ast }|\theta \right) }
&=\frac{Z(U_{1:T},\theta )}{p\left( y_{1:T}|\theta \right) }
\end{align*}
 giving the result.
\end{proof}

\begin{remark}

The results above can be modified for the factor stochastic volatility model given in Section \ref{PMMH+PG SV factor} in a straightforward way.
Details are
available from the authors on request.
\end{remark}

\begin{remark}
If the states are sampled using backward simulation, similar arguments can be
applied to obtain corresponding results (see Section \ref{appendixbsi}).
The mathematical details of the derivation use the results in \cite%
{olssonryden2011} and \cite{lindstenschon2012}.
\end{remark}

\section{Backward simulation\label{appendixbsi}}

\cite{godsilletal2004} introduce the \textit{backward simulation} algorithm
which samples the indices\linebreak $J_{T},J_{T-1},\dots ,J_{1}$
sequentially, and differs from \textit{ancestral tracing }which samples one
index $J$ and traces back its ancestral lineage. The \textit{backward
simulation} algorithm (Algorithm \ref{alg:bsim} below) is used in the PMCMC
setting by \cite{olssonryden2011} (in the PMMH algorithm) and \cite%
{lindstenschon2012} (in the PG algorithm). \cite{chopinsingh2013} studied
the PG algorithm with \textit{backward simulation} and found that it yields
a smaller autocorrelation than the corresponding algorithm using \textit{\
ancestral tracing}. Moreover, it is more robust to the resampling scheme
(multinomial resampling, systematic resampling, residual resampling or
stratified resampling) used in the resampling step of the algorithm.

\begin{algorithm}[Backward Simulation]
\label{alg:bsim}

\begin{enumerate}
\item Sample $J_{T}=j_{t}$ conditional on $u_{1:T}$, with probability
proportional to $w_{T}^{j_{T}}$, and choose $x_{T}^{j_{T}}$;

\item For $t=T-1,\dots ,1$, sample $J_{t}=j_{t}$ conditional on%
\begin{equation*}
(u_{1:t},j_{t+1:T},x_{t+1}^{j_{t+1}},\dots ,x_{T}^{j_{T}}),
\end{equation*}%
with probability proportional to $w_{t}^{j_{t}}f_{\theta
}(x_{t+1}^{j_{t+1}}|x_{t}^{j_{t}})$, and choose $x_{t}^{j_{t}}$.
\end{enumerate}
\end{algorithm}

We denote the particles selected and the trajectory selected by $%
x_{1:T}^{j_{1:T}}=(x_{1}^{j_{1}},\dots ,x_{T}^{j_{T}})$ and $j_{1:T}$,
respectively. With some abuse of notation, we denote
\begin{equation*}
x_{1:T}^{(-j_{1:T})}=\left\{ x_{1}^{(-j_{1})},\ldots
,x_{T}^{(-j_{T})}\right\} .
\end{equation*}%
It will simplify the notation to sometimes use the following one-to-one
transformation
\begin{equation*}
\left( u_{1:T},j_{1:T}\right) \leftrightarrow \left\{
x_{1:T}^{j_{1:T}},j_{1:T},x_{1:T}^{(-j_{1:T})},a_{1:T-1}\right\} ,
\end{equation*}%
and switch between the two representations and use whichever is more
convenient.

The augmented space in this case consists of the particle filter variables $%
U_{1:T}$ and the sampled trajectory $J_{1:T}$ and PMCMC methods using
\textit{backward simulation} target the following density
\begin{eqnarray}
\lefteqn{\tilde{\pi}_{BSi}^{N}\left(
x_{1:T},j_{1:T},x_{1:T}^{(-j_{1:T})},a_{1:T-1},\theta \right) \mathrel{:=} }
\notag \\
&&\frac{p(x_{1:T},\theta |y_{1:T})}{N^{T}}\frac{\psi \left( u_{1:T}|\theta
\right) }{m_{1}^{\theta }\left( x_{1}^{b_{1}}\right) \text{ }\prod_{t=2}^{T}%
\bar{w}_{t-1}^{a_{t-1}^{i}}m_{t}^{\theta }\left(
x_{t}^{b_{t}}|x_{t-1}^{a_{t-1}^{b_{t}}}\right) } \times  \notag \\
& & \prod_{t=2}^{T}\frac{%
w_{t}^{a_{t-1}^{j_{t}}}f(x_{t}^{j_{t}}|x_{t-1}^{a_{t-1}^{j_{t}}})}{%
\sum_{i=1}^{N}w_{t}^{a_{t-1}^{i}}f(x_{t}^{i}|x_{t-1}^{a_{t-1}^{i}})}.
\label{eq:targetbsi}
\end{eqnarray}

\cite{olssonryden2011} show that, under Assumption 2 of \cite%
{andrieuetal2010},%
\begin{equation*}
\tilde{\pi}_{BSi}^{N}\left(
x_{1:T},j_{1:T},x_{1:T}^{(-j_{1:T})},a_{1:T-1},\theta \right)
\end{equation*}%
has the following marginal distribution
\begin{equation*}
\tilde{\pi}_{BSi}^{N}\left( x_{1:T},j_{1:T},\theta \right) =\frac{%
p(x_{1:T},\theta |y_{1:T})}{N^{T}},
\end{equation*}%
and hence
\begin{equation*}
\tilde{\pi}_{BSi}^{N}\left( x_{1:T},\theta \right) =p(x_{1:T},\theta
|y_{1:T}).
\end{equation*}%

The conditional sequential Monte Carlo algorithm used in the backward
simulation also changes. It is given in \cite{lindstenetal2014} and
generates from the full conditional distribution%
\begin{equation*}
\tilde{\pi}_{BSi}^{N}\left(
x_{1:T}^{(-j_{1:T})},a_{1:T-1}|x_{1:T},j_{1:T},\theta \right) \text{.}
\end{equation*}
The general sampler using \textit{backward simulation} is analogous to the
\textit{ancestral tracing} general sampler, but on an expanded space.

\begin{sscheme}[general-BSi]
\label{ssch:pmwgbsi}
\begin{description}
\item Given initial values for $U_{1:T}$, $J_{1:T}$ and $\theta $, one
iteration of the MCMC involves the following steps

\item[1.] (PMMH sampling) For $i=1,\ldots ,p_{1}$

Step $i$:

\begin{description}
\item[(a)] Sample $\theta _{i}^{\ast }\sim q_{BSi,i,1}(\cdot
|U_{1:T},J_{1:T},\theta _{-i},\theta _{i}).$
\item[(b)] Sample $U_{1:T}^{\ast }\sim \psi(\cdot
|\theta _{-i},\theta _{i}^{\ast }).$
\item[(c)] Sample $J_{1:T}^{\ast }$ from $\tilde{\pi}_{BSi}^{N}(\cdot
|U_{1:T}^{\ast },\theta _{-i},\theta _{i}^{\ast }).$
\item[(d)] Set $\left(\theta _{i}, U_{1:T}, J_{1:T}\right) \leftarrow \left(\theta _{i}^{\ast }, U_{1:T}^{\ast}, J_{1:T}^{\ast }\right)$ with probability
\begin{eqnarray}
\lefteqn{\alpha _{i}\left( U_{1:T},J_{1:T},\theta _{i};U_{1:T}^{\ast
},J_{1:T}^{\ast },\theta _{i}^{\ast }|\theta _{-i}\right) =}
\label{eq:PMwGBSiaccprob} \\
&&1\wedge \frac{\tilde{\pi}_{BSi}^{N}\left( U_{1:T}^{\ast },\theta
_{i}^{\ast }|\theta _{-i}\right) }{\tilde{\pi}_{BSi}^{N}\left(
U_{1:T},\theta _{i}|\theta _{-i}\right) }\,\frac{q_{BSi,i}(U_{1:T},\theta
_{i}|U_{1:T}^{\ast },J_{1:T}^{\ast },\theta _{-i},\theta _{i}^{\ast })}{%
q_{BSi,i}(U_{1:T}^{\ast },\theta _{i}^{\ast }|U_{1:T},J_{1:T},\theta
_{-i},\theta _{i})}  \notag
\end{eqnarray}%
where%
\begin{align*}
q_{Bsi,i}(U_{1:T}^{\ast },\theta _{i}^{\ast
}|U_{1:T},J_{1:T},\theta _{-i},\theta _{i})=
& q_{BSi,i,1}(\theta _{i}^{\ast }|U_{1:T},J_{1:T},\theta _{-i},\theta
_{i})\psi(U_{1:T}^{\ast }|\theta _{-i},\theta
_{i}^{\ast }).
\end{align*}
\end{description}

\item[2.] (PG or PMwG sampling) For $i=p_{1}+1,\ldots ,p$

Step $i$:

\begin{description}
\item[(a)] Sample $\theta _{i}^{\ast }\sim q_{i}(\cdot
|X_{1:T}^{J},B_{1:T-1}^{J},J,\theta _{-i},\theta _{i}).$

\item[(b)] Set $\theta _{i} \leftarrow \theta _{i}^{\ast }$ with probability
\begin{eqnarray*}
&&\alpha _{i}\left( \theta _{i};\theta _{i}^{\ast
}|X_{1:T}^{J},B_{1:T-1}^{J},J,\theta _{-i}\right) \\
&=&1\wedge \frac{\tilde{\pi}^{N}\left( \theta _{i}^{\ast
}|X_{1:T}^{J},B_{1:T-1}^{J},J,\theta _{-i}\right) }{\tilde{\pi}^{N}\left(
\theta _{i}|X_{1:T}^{J},B_{1:T-1}^{J},J,\theta _{-i}\right) }\,\frac{%
q_{i}(\theta _{i}|X_{1:T}^{J},B_{1:T-1}^{J},J,\theta _{-i},\theta _{i}^{\ast
})}{q_{i}(\theta _{i}^{\ast }|X_{1:T}^{J},B_{1:T-1}^{J},J,\theta
_{-i},\theta _{i})}.
\end{eqnarray*}
\end{description}

\item[3.] Sample $U_{1:T}^{(-J),\ast }\sim \tilde{\pi}^{N}(\cdot
|X_{1:T}^{J},B_{1:T-1}^{J},J,\theta _{-i},\theta _{i}^{\ast })$.

\item[4.] Sample $J\sim \tilde{\pi}^{N}\left( \cdot |U_{1:T},\theta \right) $%
\end{description}
\end{sscheme}

The PMMH steps in Sampling Scheme~\ref{ssch:pmwgbsi} simplify similarly to
Sampling Scheme~\ref{ssch:pmwg general}. \cite{olssonryden2011} show that%
\begin{equation*}
\frac{\tilde{\pi}_{BSi}^{N}\left( U_{1:T},\theta _{i}|\theta _{-i}\right) }{%
\psi \left( U_{1:T}|\theta _{-i},\theta _{i}\right) }=\frac{Z(U_{1:T},\theta
)p(\theta _{i}|\theta _{-i})}{p\left( y_{1:T}|\theta _{-i}\right) }\text{,}
\end{equation*}%
which is the same expression as (\ref{eq:accprobsimplify1}).
Hence, the Metropolis-Hastings acceptance probability in (\ref%
{eq:PMwGBSiaccprob}) simplifies to%
\begin{equation*}
1\wedge \frac{Z(\theta _{i}^{\ast },\theta _{-i},U_{1:T}^{\ast })}{Z(\theta
_{i},\theta _{-i},U_{1:T})}\,\frac{q_{BSi,i,1}(\theta _{i}|U_{1:T}^{\ast
},J^{\ast },\theta _{-i},\theta _{i}^{\ast })p(\theta _{i}^{\ast }|\theta
_{-i})}{q_{BSi,i,1}(\theta _{i}^{\ast }|U_{1:T},J,\theta _{-i},\theta
_{i})p(\theta _{i}|\theta _{-i})}.
\end{equation*}

The results in Section \ref{s:theory} can be modified for the distribution $%
\tilde{\pi}_{BSi}^{N}\left( \cdot \right) $, instead of the distribution $%
\tilde{\pi}^{N}\left( \cdot \right) $ in a straightforward way. Details are
available from the authors on request.

\section{Target density for the factor SV model \label{sec:Sampling-Schemes-Factor model}}
This section discusses the target density of the PMMH+PG sampler for the multivariate factor SV model outlined in Section \ref{S: factor SV model explanation}. Section \ref{sss: exact transition density case} discusses an appropriate target density for the closed form density case and Section \ref{sss: Euler scheme case} discusses an appropriate target density for a factor SV model with the Euler approximation.

\subsection{The closed form density case}
\label{sss: exact transition density case}
This section provides an appropriate target density for a factor SV model with the closed form state transition density given in equation~\eqref{eq:exact transition}.
The target density includes all the random variables produced by $K + S$ univariate particle filters that generate the factor log volatilities $\boldsymbol{\lambda}_{k,1:T}$ for $k=1,...,K$
and the idiosyncratic log volatilities $\boldsymbol{h}_{s,1:T}$ for $s=1,...,S$, as well as the factors $\boldsymbol{f}_{1:T}$ and the parameters $\boldsymbol{\omega}$.
It is convenient in the developments below to define $\boldsymbol{\theta} = (\boldsymbol{f}_{1:T}, \boldsymbol{\omega})$.

To specify the univariate particle filters that generate the factor log volatilities $\boldsymbol{\lambda}_{k,1:T}$ for $k=1,...,K$,
 we use equations  \eqref{eq:SVtransition} and \eqref{eq:PMMH+PG SV obs density}
and to generate the idiosyncratic log volatilities $\boldsymbol{h}_{s,1:T},$ for $s=1,...,S,$
 we use equations \eqref{eq:exact transition} and \eqref{eq:PMMH+PG OU obs density}.
We denote the weighted samples by $\left(\boldsymbol{\lambda}_{k,t}^{1:N},\overline{w}_{f,k,t}^{1:N}\right)$
and $\left(\boldsymbol{h}_{s,t}^{1:N},\overline{w}_{\epsilon,s,t}^{1:N}\right)$.
We denote the proposal densities by $m_{f,k,1}^{\boldsymbol{\theta}}\left(\lambda_{k,1}\right)$,
$m_{f,k,t}^{\boldsymbol{\theta}}\left(\lambda_{k,t}|\lambda_{k,t-1}\right)$,
$m_{\epsilon,s,1}^{\boldsymbol{\theta}}\left(h_{s,1}\right)$ and
$m_{\epsilon,s,t}^{\boldsymbol{\theta}}\left(h_{s,t}|h_{s,t-1}\right)$ for $t=2,...,T$.
We denote the resampling schemes by $\mathcal{M}_{f}\left(\boldsymbol{a}_{f,k,t-1}^{1:N}|\overline{w}_{f,k,t-1}^{1:N}\right)$
for $k=1,...,K$, where each $a_{f,k,t-1}^{i}=j$ indexes a particle
in $\left(\boldsymbol{\lambda}_{k,t}^{1:N},\overline{w}_{f,k,t}^{1:N}\right)$ and is chosen
with probability $\overline{w}_{f,k,t}^{j}$; the resampling scheme
$\mathcal{M}_{\epsilon}\left(\boldsymbol{a}_{\epsilon,s,t-1}^{1:N}|\overline{w}_{\epsilon,s,t-1}^{1:N}\right)$
for $s=1,...,S$ is defined similarly. We denote the vector of particles by
\begin{align}
\boldsymbol{U}_{f,1:K,1:T}& =\left(\boldsymbol{\lambda}_{1:K,1:T}^{1:N}, \boldsymbol{A}_{f,1:K,1:T-1}^{1:N}\right),\label{eq:particlefactor}\\
\intertext{and}
\boldsymbol{U}_{\epsilon, 1:S,1:T}&=\left(\boldsymbol{h}_{1:S,1:T}^{1:N},\boldsymbol{A}_{\epsilon, 1:S,1:T-1}^{1:N}\right).\label{eq:particlesidio}
\end{align}
The joint distribution of the particles given the parameters is
\begin{equation}
\psi_{f,k}\left(\boldsymbol{U}_{f,k,1:T}|\boldsymbol{\theta}\right)=\\
\prod_{i=1}^{N}m_{f,k,1}^{\boldsymbol{\theta}}\left(\lambda_{k,1}^{i}\right)\prod_{t=2}^{T}\left\{ \mathcal{M}_{f}\left(\boldsymbol{a}_{f,k,t-1}^{1:N}|\overline{w}_{f,k,t-1}^{1:N}\right)\prod_{i=1}^{N}m_{f,k,t}^{\boldsymbol{\theta}}\left(\lambda_{f,k,t}^{i}|
\lambda_{f,k,t-1}^{a_{f,k,t-1}^{i}}\right)\right\},  \label{eq:factor}
\end{equation}
for $k=1,...,K,$
and
\begin{equation}
\psi_{\epsilon,s}\left(\boldsymbol{U}_{\epsilon,s,1:T}|\boldsymbol{\theta}\right)=
\prod_{i=1}^{N}m_{\epsilon,s,1}^{\boldsymbol{\theta}}\left(h_{s,1}^{i}\right)\prod_{t=2}^{T}\left\{ \mathcal{M}_{\epsilon}\left(\boldsymbol{a}_{\epsilon,s,t-1}^{1:N}|
\overline{w}_{\epsilon,s,t-1}^{1:N}\right)\prod_{i=1}^{N}m_{\epsilon,s,t}^{\boldsymbol{\theta}}\left(h_{s,t}^{i}|h_{s,t-1}^{a_{\epsilon,s,t-1}^{i}}\right)\right\}, \label{eq:idio}
\end{equation}
for $s=1,...,S$.

Next, we define indices $J_{f,k}=j$ for each $k=1,...,K$,
then trace back its ancestral
lineage $b_{f,k,1:T}^{j}$ $\left(b_{f,k,T}^{j}=j,b_{f,k,t-1}^{j}=a_{f,k,t-1}^{b_{f,k,t}^{j}}\right)$,
and select the particle trajectory $\boldsymbol{\lambda}_{k,1:T}^j=\left(\lambda_{k,1}^{b_{f,k,1}^{j}},...,\lambda_{k,T}^{b_{f,k,T}^{j}}\right)$.
Similarly, we define indices $J_{\epsilon s}=j$ for each
$s=1,...,S$, then trace back
its ancestral lineage $b_{\epsilon,s,1:T}^{j}$ $\left(b_{\epsilon,s,T}^{j}=j,b_{\epsilon,s,t-1}^{j}=a_{\epsilon,s,t-1}^{b_{\epsilon,s,t}^{j}}\right)$,
and select the particle trajectory $\boldsymbol{h}_{s,1:T}^j=\left(h_{s,1}^{b_{\epsilon,s,1}^{j}},...,h_{s,T}^{b_{\epsilon,s,T}^{j}}\right)$.

The augmented target density of the factor model is defined as
\begin{multline}
\tilde{\pi}^{N}\left(\boldsymbol{U}_{f,1:K,1:T}, \boldsymbol{U}_{\epsilon,1:S,1:T},
\boldsymbol{J}_{f},\boldsymbol{J}_{\epsilon},\boldsymbol{\theta}\right):=\\
\frac{\pi\left(\boldsymbol{\lambda}_{1:K,1:T}^{\boldsymbol{J}_{f}},\boldsymbol{h}_{1:S,1:T}^{\boldsymbol{J}_{\epsilon}},\boldsymbol{\theta}\right)}{N^{T\left(K+S\right)}}
\prod_{k=1}^{K}
\frac{\psi_{f,k}\left(\boldsymbol{U}_{f,k,1:T}|\boldsymbol{\theta}\right)}
{m_{f,k,1}^{\boldsymbol{\theta}}\left(\lambda_{k,1}^{b_{f,k,1}}\right)\prod_{t=2}^{T}\overline{w}_{f,k,t-1}^{a_{f,k,t-1}^{b_{f,k,t}}}m_{f,k,t}^{\theta}\left(\lambda_{k,t}^{b_{f,k,t}}|\lambda_{k,t-1}^{a_{f,k,t-1}^{b_{f,k,t}}}\right)}\\
\prod_{s=1}^{S}
\frac{\psi_{\epsilon,s}\left(\boldsymbol{U}_{\epsilon,s,1:T}|\boldsymbol{\theta}\right)}
{m_{\epsilon,s,1}^{\boldsymbol{\theta}}\left(h_{s,1}^{b_{\epsilon,s,1}}\right)\prod_{t=2}^{T}\overline{w}_{\epsilon,s,t-1}^{a_{\epsilon,s,t-1}^{b_{\epsilon,s,t}}}m_{\epsilon,s,t}^{\theta}\left(h_{s,t}^{b_{\epsilon,s,t}}|h_{s,t-1}^{a_{\epsilon,s,t-1}^{b_{\epsilon,s,t}}}\right)}. \label{eq:target distribution}
\end{multline}

\subsection{Approximating the transition density by an Euler scheme}
\label{sss: Euler scheme case}

This section provides an appropriate target density for a factor model with the Euler approximation given in Eq. \eqref{eq:ou approximatetransition multiple}
or Eq. \eqref{eq:GARCH Euler transitiondensity}.
We follow the approach in \cite{Lindstenetal2015} and introduce state vectors for $s=1,...,S$ defined as $x_{s,1} = h_{s,1}$ and $x_{s,t} = \( h_{s,t}, h_{s,t-1,M-1}, \dots,h_{s,t-1,1} \)^{\transp}$, for $t=2, \dots, T$. The state transition densities are given by
\begin{align}
f_{s,t}^{\theta}(x_{s,t}|x_{s,t-1}) & = \prod_{j=1}^{M}f_{s,t-1,j}^{\theta} (h_{s,t-1,j}|h_{s,t-1,j-1})\,\, (t=2, \dots, T), \label{eq: factor gen transition eqn}
\end{align}
where the densities $f_{s,t,j}^{\theta} (h_{s,t,j}|h_{s,t,j-1})$ for $j = 1, \ldots, M$, $t = 1, \ldots, T-1$ and $s=1, \ldots, S$
are defined by equation~\eqref{eq:ou approximatetransition multiple} or equation \eqref{eq:GARCH Euler transitiondensity}.
We use the proposal densities
\begin{align*}
m_{\epsilon,s,t}^{\theta} (x_{s,t} | x_{s,t-1}) = f_{s,t}^{\theta}(x_{s,t} | x_{s,t-1}) \quad (t = 2, \ldots, T \mbox{ and } s = 1, \ldots, S)
\end{align*}
which can be generated using equation \eqref{eq:ou approximatetransition multiple} or equation \eqref{eq:GARCH Euler transitiondensity}.
With these modifications, we use the same construction as Section~\ref{sss: exact transition density case}.
The modifications give
\begin{align}
\boldsymbol{U}_{\epsilon,1:S,1:T} = \left(\boldsymbol{x}_{1:S,1:T}^{1:N},\boldsymbol{A}_{\epsilon, 1:S,1:T-1}^{1:N}\right) \label{eq:particlesidio Euler}
\end{align}
\begin{align}
\psi_{\epsilon,s}\left(\boldsymbol{U}_{\epsilon,s,1:T}|\boldsymbol{\theta}\right) =
\prod_{i=1}^{N}m_{\epsilon,s,1}^{\boldsymbol{\theta}}\left(x_{s,1}^{i}\right)\prod_{t=2}^{T}\left\{ \mathcal{M}_{\epsilon}\left(\boldsymbol{a}_{\epsilon,s,t-1}^{1:N}|
\overline{w}_{\epsilon,s,t-1}^{1:N}\right)\prod_{i=1}^{N}m_{\epsilon,s,t}^{\boldsymbol{\theta}}\left(x_{s,t}^{i}|x_{s,t-1}^{a_{\epsilon,s,t-1}^{i}}\right)\right\} \label{eq:idio Euler}
\end{align}
\begin{multline}
\tilde{\pi}^{N}\left(\boldsymbol{U}_{f,1:K,1:T}, \boldsymbol{U}_{\epsilon,1:S,1:T},
\boldsymbol{J}_{f},\boldsymbol{J}_{\epsilon},\boldsymbol{\theta}\right):=\\
\frac{\pi\left(\boldsymbol{\lambda}_{1:K,1:T}^{\boldsymbol{J}_{f}},\boldsymbol{x}_{1:S,1:T}^{\boldsymbol{J}_{\epsilon}},\boldsymbol{\theta}\right)}{N^{T\left(K+S\right)}}
\prod_{k=1}^{K}
\frac{\psi_{f,k}\left(\boldsymbol{U}_{f,k,1:T}|\boldsymbol{\theta}\right)}
{m_{f,k,1}^{\boldsymbol{\theta}}\left(\lambda_{k,1}^{b_{f,k,1}}\right)\prod_{t=2}^{T}\overline{w}_{f,k,t-1}^{a_{f,k,t-1}^{b_{f,k,t}}}m_{f,k,t}^{\theta}\left(\lambda_{k,t}^{b_{f,k,t}}|\lambda_{k,t-1}^{a_{f,k,t-1}^{b_{f,k,t}}}\right)} \\
\prod_{s=1}^{S}
\frac{\psi_{\epsilon,s}\left(\boldsymbol{U}_{\epsilon,s,1:T}|\boldsymbol{\theta}\right)}
{m_{\epsilon,s,1}^{\boldsymbol{\theta}}\left(x_{s,1}^{b_{\epsilon,s,1}}\right)\prod_{t=2}^{T}\overline{w}_{\epsilon,s,t-1}^{a_{\epsilon,s,t-1}^{b_{\epsilon,s,t}}}m_{\epsilon,s,t}^{\theta}\left(x_{s,t}^{b_{\epsilon,s,t}}|x_{s,t-1}^{a_{\epsilon,s,t-1}^{b_{\epsilon,s,t}}}\right)} \label{eq:target distribution Euler}
\end{multline}


\section{PMMH+PG sampling scheme for the factor SV model}
\label{ss: sampling scheme for the factor SV model}
Similarly to Section \ref{SS: Empirical results for OU}, we use the following notation to describe the algorithms used in the examples.
The basic samplers, as used in Sampling Schemes~\ref{ssch:pmwg} or \ref{ssch:factor SV}, are $\textrm{PMMH}(\cdot)$ and $\textrm{PG}(\cdot)$.
These samplers can be used alone or in combination.
For example, $\textrm{PMMH}(\theta )$ means using a PMMH step to sample the parameter vector $\theta $; $\textrm{PMMH}(\theta _{1})+\textrm{PG}(\theta _{2})$
means sampling $\theta _{1}$ in the PMMH step and $\theta _{2}$ in the PG step; and $\textrm{PG}(\theta)$ means sampling $\theta $ using the PG sampler.

We illustrate our methods using the
$PMMH\left(\boldsymbol{\alpha},\boldsymbol{\tau}_f^{2},
\boldsymbol{\tau}_{\epsilon}^{2}\right)+PG\left(\boldsymbol{\beta}, \boldsymbol{f}_{1:T}, \boldsymbol{\phi}, \boldsymbol{\mu}\right)$
sampler, which we found to give good performance in the empirical studies in Section~\ref{s:simulations}.
It is straightforward to modify the sampling scheme for other choices of which parameters to sample with a PMMH step and which to sample with a PG step.
Our procedure to determine an efficient sampling scheme
is to run the PG algorithm first to identify which parameters have large IACT, or, in some cases, require a large amount of computational time to generate in the PG step.
We then generate these parameters in the PMMH step.
See, for example, our discussion of the univariate OU model in Section \ref{SS: Empirical results for OU}.
In particular, we note that if an Euler approximation is used, then generating any parameter in the OU or GARCH model is very
time intensive as it is necessary to determine, store and use the ancestor history of the entire state vector.

The sampling schemes for the factor SV model with the closed form transition density given by equation \eqref{eq:exact transition} and the model with the Euler scheme given
by equation \eqref{eq:ou approximatetransition multiple} or equation \eqref{eq:GARCH Euler transitiondensity} have the same structure, so Sampling Scheme \ref{ssch:factor SV}
is given below in a generic form and the appropriate state space models are used for the different cases;
see Sections~\ref{sss: exact transition density case} and \ref{sss: Euler scheme case} for details.
We have simplified the conditional distributions in Sampling Scheme~\ref{ssch:factor SV} wherever possible using the conditional independence properties discussed in Section~\ref{sss: Conditional Independence}.
The Metropolis-Hastings proposal densities for Sampling scheme \ref{ssch:factor SV} are given in Section \ref{sss:proposal densities}. We use the notation $\theta _{-i}:=(\theta _{1},\ldots ,\theta _{i-1},\theta _{i+1},\ldots,\theta_{p})$, where $p$ is the total number of parameters.

\begin{sscheme}[$PMMH\left(\boldsymbol{\alpha},\boldsymbol{\tau}_f^{2},\boldsymbol{\tau}_{\epsilon}^{2}\right)+PG\left(\boldsymbol{\beta}, \boldsymbol{f}_{1:T}, \boldsymbol{\phi}, \boldsymbol{\mu}\right)$]
\label{ssch:factor SV}
Given initial values for $U_{f,1:T}$, $U_{\epsilon, 1:T}$, $J_f$, $J_{\epsilon}$ and $\theta$, one iteration of
the MCMC involves the following steps.

\begin{enumerate}
\item (PMMH sampling),

\begin{enumerate}
\item For $k=1,...,K$

\begin{enumerate}
\item Sample $\left(\tau_{f,k}^{2*}\right)\sim q_{\tau^2_{f,k}}\left(\cdot|\boldsymbol{U}_{f,k,1:T},\tau_{f,k}^{2},\boldsymbol{\theta}_{-\tau_{f,k}^{2}}\right)$
\item Sample $\boldsymbol{U}_{f,k,1:T}^{*}\sim \psi_{f,k}\left(\cdot|\tau_{f,k}^{2*},\boldsymbol{\theta}_{-\tau_{f,k}^{2}}\right)$
\item Sample $J_{f,k}^{*}$ from $\tilde{\pi}^{N}\left(\cdot|\boldsymbol{U}_{f,k,1:T}^{*},\tau_{f,k}^{2*},\boldsymbol{\theta}_{-\tau_{f,k}^{2}}\right)$
\item Set $\left(\tau_{f,k}^{2}, \boldsymbol{U}_{f,k,1:T}, J_{f,k}\right) \leftarrow \left(\tau_{f,k}^{2*}, \boldsymbol{U}_{f,k,1:T}^{*}, J_{f,k}^{*}\right)$ with probability
\begin{multline*}
\alpha\left(\boldsymbol{U}_{f,k,1:T},J_{f,k},\tau_{f,k}^{2};
\boldsymbol{U}_{f,k,1:T}^{*},J_{f,k}^{*},\tau_{f,k}^{2*} | \boldsymbol{\theta}_{-\tau_{f,k}^{2}} \right)=\\
1\wedge\frac{Z\left(U_{f,k,1:T}^{*},\tau_{f,k}^{2*},
\boldsymbol{\theta}_{-\tau_{f,k}^{2}}\right)p\left(\tau_{f,k}^{2*}\right)}{Z\left(U_{f,k,1:T},\tau_{f,k}^{2},
\boldsymbol{\theta}_{-\tau_{f,k}^{2}}\right)p\left(\tau_{f,k}^{2}\right)}\times\frac{q_{\tau^2_{f,k}}\left(\tau_{f,k}^{2}|
\boldsymbol{U}_{f,k,1:T}^{*},\tau_{f,k}^{2*},\boldsymbol{\theta}_{-\tau_{f,k}^{2}}\right)}
{q_{\tau^2_{f,k}}\left(\tau_{f,k}^{2*}|\boldsymbol{U}_{f,k,1:T},\tau_{f,k}^{2},
\boldsymbol{\theta}_{-\tau_{f,k}^{2}}\right)}.
\end{multline*}

\end{enumerate}

\item For $s=1,...,S$,

\begin{enumerate}
\item Sample $\left(\alpha_{s}^{*},\tau_{\epsilon,s}^{2*}\right)\sim q_{\alpha_s,\tau_{\epsilon,s}^2}\left(\cdot|\boldsymbol{U}_{\epsilon,s,1:T},\alpha_{s},\tau_{\epsilon,s}^{2},
    \boldsymbol{\theta}_{-\alpha_{s},\tau_{\epsilon,s}^{2}}\right)$
\item Sample $\boldsymbol{U}_{\epsilon,s,1:T}^{*}\sim \psi_{\epsilon,s} \left(\cdot|\alpha_{s}^{*},\tau_{\epsilon,s}^{2*},
    \boldsymbol{\theta}_{-\alpha_{s},\tau_{\epsilon,s}^{2}}\right)$
\item Sample $J_{\epsilon,s}^{*}$ from $\tilde{\pi}^{N}\left(\cdot|\boldsymbol{U}_{\epsilon,s,1:T}^{*},\alpha_{s}^{*},\tau_{\epsilon,s}^{2*},
    \boldsymbol{\theta}_{-\alpha_{s},\tau_{\epsilon,s}^{2}}\right)$
\item Set $\left(\alpha_{s},\tau_{\epsilon,s}^{2}, \boldsymbol{U}_{\epsilon,s,1:T}, J_{\epsilon,s}\right) \leftarrow \left(\alpha_{s}^{*},\tau_{\epsilon,s}^{2*}, \boldsymbol{U}_{\epsilon,s,1:T}^{*}, J_{\epsilon,s}^{*}\right)$ with
probability
\begin{multline*}
\alpha\left(\boldsymbol{U}_{\epsilon,s,1:T},J_{s},\left(\alpha_{s},\tau_{\epsilon,s}^{2}\right);
\boldsymbol{U}_{\epsilon,s,1:T}^{*},J_{\epsilon,s}^{*},\left(\alpha_{s}^{*},\tau_{\epsilon,s}^{2*}\right) | \boldsymbol{\theta}_{-\alpha_{s},\tau_{\epsilon,s}^{2}} \right)=\\
1\wedge\frac{Z\left(U_{\epsilon,s,1:T}^{*},\alpha_{s}^{*},\tau_{\epsilon,s}^{2*},
\boldsymbol{\theta}_{-\alpha_{s},\tau_{\epsilon,s}^{2}}\right)
p\left(\alpha_{s}^{*},\tau_{\epsilon,s}^{2*}\right)}{Z\left(U_{\epsilon,s,1:T},\alpha_{s},\tau_{\epsilon,s}^{2},
\boldsymbol{\theta}_{-\alpha_{s},\tau_{s}^{2}}\right)p\left(\alpha_{s},\tau_{s}^{2}\right)}
\times\frac{q_{\alpha_s,\tau_{\epsilon,s}^2}\left(\alpha_{s},\tau_{\epsilon,s}^{2}|
\boldsymbol{U}_{\epsilon,s,1:T}^{*}, \alpha_{s}^{*},\tau_{\epsilon,s}^{2*},
\boldsymbol{\theta}_{-\alpha_{s},\tau_{\epsilon,s}^{2}}\right)}{q_{\alpha_s,\tau_{\epsilon,s}^2}\left(\alpha_{s}^{*},\tau_{\epsilon,s}^{2*}|\boldsymbol{U}_{\epsilon,s,1:T},\alpha_{s},
\tau_{\epsilon s}^{2},\boldsymbol{\theta}_{-\alpha_{s},\tau_{\epsilon,s}^{2}}\right)}.
\end{multline*}

\end{enumerate}
\end{enumerate}

\item (PG sampling)

\begin{enumerate}
\item Sample $\boldsymbol{\beta}|\boldsymbol{\lambda}_{1:T}^{\boldsymbol{J}_{f}},\boldsymbol{h}_{1:T}^{\boldsymbol{J}_{\epsilon}},\boldsymbol{B}_{f,1:T-1}^{\boldsymbol{J}_f},\boldsymbol{B}_{\epsilon,1:T-1}^{\boldsymbol{J}_{\epsilon}},\boldsymbol{J}_f,\boldsymbol{J}_{\epsilon},\boldsymbol{\theta}_{-\boldsymbol{\beta}},\boldsymbol{y}_{1:T}$
using equation \eqref{eq:Bfactor} in Appendix~\ref{SS: sampling loading matrix}.
\item Redraw the diagonal elements of $\boldsymbol{\beta}$ through the deep
interweaving procedure described in Appendix~\ref{SS: deep interweaving}.
This step is necessary to improve the mixing of the factor
loading matrix $\boldsymbol{\beta}$.
\item Sample $\boldsymbol{f}_{1:T}|\boldsymbol{\lambda}_{1:T}^{\boldsymbol{J}_{f}},\boldsymbol{h}_{1:T}^{\boldsymbol{J}_{\epsilon}},\boldsymbol{B}_{f,1:T-1}^{\boldsymbol{J}_f},\boldsymbol{B}_{\epsilon,1:T-1}^{\boldsymbol{J}_{\epsilon}},\boldsymbol{J}_f,\boldsymbol{J}_{\epsilon},\boldsymbol{\theta}_{-{\boldsymbol{f}_{1:T}}},\boldsymbol{y}_{1:T}$
using equation~\eqref{eq:factordraws} in Appendix~\ref{SS: sampling latent factors}.

\item For $k=1,...,K$

\begin{enumerate}
\item Sample $\phi_{k}^{*}$ from the proposal $q_{\phi_{k}}\left(\cdot|\boldsymbol{\lambda}_{k,1:T}^{J_{f,k}},
\boldsymbol{\theta}_{-\phi_{k}}\right)$
and set $\phi_{k} \leftarrow \phi_{k}^{*}$ with probability
\begin{align*}
1\land\frac{\tilde{\pi}^{N}\left(\phi_{k}^{*}|\boldsymbol{\lambda}_{k,1:T}^{J_{f,k}},\boldsymbol{B}_{f,k,1:T-1},J_{f,k},\boldsymbol{\theta}_{-\phi_{k}}\right)}{\tilde{\pi}^{N}\left(\phi_{k}|\boldsymbol{\lambda}_{k,1:T}^{J_{f,k}},\boldsymbol{B}_{f,k,1:T-1},J_{f,k},\boldsymbol{\theta}_{-\phi_{k}}\right)}\times
\frac{q_{\phi_{k}}\left(\phi_{k}|\boldsymbol{\lambda}_{k,1:T}^{J_{f,k}},
\boldsymbol{\theta}_{-\phi_{k}}\right)}{q_{\phi_{k}}\left(\phi_{k}^{*}|\boldsymbol{\lambda}_{k,1:T}^{J_{f,k}},
\boldsymbol{\theta}_{-\phi_{k}}\right)}.
\end{align*}

\item Sample $\boldsymbol{U}_{f,k,1:T}^{\left(-J_{f,k}\right)}\sim\tilde{\pi}^{N}\left(\cdot|\boldsymbol{\lambda}_{k,1:T}^{J_{f,k}},\boldsymbol{B}_{f,k,1:T-1},J_{f,k},\boldsymbol{\theta}\right)$
using the conditional sequential Monte Carlo algorithm (CSMC) discussed in Section ~\ref{alg:condsmc}.
\item Sample $J_{f,k}\sim\tilde{\pi}^{N}\left(\cdot|\boldsymbol{U}_{f,k,1:T},\boldsymbol{\theta}\right)$.
\end{enumerate}

\item For $s=1,...,S$,

\begin{enumerate}
\item Sample $\mu_{s}^{*}$ from the proposal $q_{\mu_{s}}\left(\cdot|\boldsymbol{h}_{s,1:T}^{J_{\epsilon,s}},
\boldsymbol{\theta}_{-\mu_{s}}\right)$
and set $\mu_{s} \leftarrow \mu_{s}^{*}$ with probability
\begin{align*}
1\land\frac{\tilde{\pi}^{N}\left(\mu_{s}^{*}|\boldsymbol{h}_{s,1:T}^{J_{\epsilon,s}},\boldsymbol{B}_{\epsilon,s,1:T-1},J_{\epsilon,s},
\boldsymbol{\theta}_{-\mu_{s}}\right)}
{\tilde{\pi}^{N}\left(\mu_{s}|\boldsymbol{h}_{s,1:T}^{J_{s}},\boldsymbol{B}_{\epsilon,s,1:T-1},J_{\epsilon,s},
\boldsymbol{\theta}_{-\mu_{s}}\right)}\times
\frac{q_{\mu_{s}}\left(\mu_{s}|\boldsymbol{h}_{s,1:T}^{J_{\epsilon,s}},
\boldsymbol{\theta}_{-\mu_{s}}\right)}{q_{\mu_{s}}\left(\mu_{s}^{*}|\boldsymbol{h}_{s,1:T}^{J_{\epsilon,s}},
\boldsymbol{\theta}_{-\mu_{s}}\right)}
\end{align*}

\item Sample $\boldsymbol{U}_{\epsilon,s,1:T}^{\left(-J_{\epsilon,s}\right)}
    \sim\tilde{\pi}^{N}\left(\cdot|
    \boldsymbol{h}_{s,1:T}^{J_{s}},\boldsymbol{B}_{\epsilon,s,1:T-1},J_{\epsilon,s},
\boldsymbol{\theta}\right)$
using the conditional sequential Monte Carlo algorithm (CSMC) discussed in Section~\ref{s:CSMC}.
\item Sample $J_{\epsilon,s}\sim\tilde{\pi}^{N}\left(\cdot|\boldsymbol{U}_{\epsilon,s,1:T},\boldsymbol{\theta}\right)$.
\end{enumerate}

\end{enumerate}

\end{enumerate}
\end{sscheme}

\subsection{Proposal densities\label{sss:proposal densities}}
This section details the proposal densities used in Sampling Scheme \ref{ssch:factor SV}
for the exact OU model given by equation \eqref{eq:exact transition}.
We will specify other cases such as the Euler evolution given by equation \eqref{eq:ou approximatetransition multiple}
and the GARCH diffusion model given by equation \eqref{eq:GARCH Euler transitiondensity} when describing the sampling scheme.
\begin{itemize}

\item
For $k=1, \dots, K$, $q_{\tau_{f,k}^{2}}$ is an adaptive random walk.
\item
For $s=1,\dots, S$, $q_{\alpha_{s},\tau_{\epsilon,s}^{2}}$
is an adaptive random walk.
\item
For $k=1, \dots, K$, $q_{\phi_{k}}\left(\cdot|\boldsymbol{\lambda}_{k,1:T}^{J_{f,k}},
\boldsymbol{\theta}_{-\phi_{k}}\right)= N\left(c_{\phi_{k}},d_{\phi_{k}}\right)$,
where
\begin{align*}
c_{\phi_{k}} &=\frac{d_{\phi_{k}}}{\tau_{f,k}^{2}} \sum_{t=2}^{T}\lambda_{k,t}\lambda_{k,t-1},\,\,\,
\text{and}\,\,\,
d_{\phi_{k}} =\frac{\tau_{f,k}^{2}}{\sum_{t=2}^{T-1}\lambda_{k,t}^{2}},
\end{align*}
\item
For $s=1,\dots, S$, $q_{\mu_{s}}\left(\cdot|\boldsymbol{h}_{s,1:T}^{J_{\epsilon s}},
\boldsymbol{\theta}_{-\mu_{s}}\right)= N\left(c_{\mu_{s}},d_{\mu_{s}}\right)$, where
\begin{align*}
c_{\mu_{s}}&=\frac{d_{\mu_{s}}}{\tau_{\epsilon,s}^2}\bigg (
h_{s,1}\left(2\alpha_{s}\right)+
\left ( \frac{2\alpha_{s}}{1-\exp\left(-2\alpha_{s}\right)}\right)
\left (  \sum_{t=2}^{T}
\left( h_{s,t}-\exp\left(-\alpha_{s}\right)h_{s,t}+ \right .\right .\\
& \left . \exp\left(-2\alpha_{s}\right)h_{s,t-1}
-\exp\left(-\alpha_{s}\right)h_{s,t-1}\right )
\bigg).
\\
d_{\mu_{s}}& =\frac{\tau_{\epsilon,s}^{2}}{\left(2\alpha_{s}\right)+\left(\frac{2\alpha_{s}}
{1-\exp\left(-2\alpha_{s}\right)}\right)\left(T-1\right)\left(1-2\exp\left(-\alpha_{s}\right)+
\exp\left(-2\alpha_{s}\right)\right)^{2}},
\end{align*}

\end{itemize}


\subsection{Sampling the factor loading matrix $\boldsymbol{\beta}$\label{SS: sampling loading matrix}}
First, to identify the parameters for the factor loading matrix
$\boldsymbol{\beta}$, we follow the usual convention and set the
upper triangular part of $\boldsymbol{\beta}$ to zero (\cite{gewekezhou1996}). This parameterisation imposes
an order dependence. Second, the model is also not identified without further
constraining either the scale of the $k$th column of $\boldsymbol{\beta}$
or the variance of $f_{k,t}$. The usual solution is to set the
diagonal elements of the factor loading matrix $\boldsymbol{\beta}_{k,k}$
to one, for $k=1,..,K$, while the level $\mu_{f,k}$ of the factor
volatility $\lambda_{k,t}$ is modeled to be unknown. However, \cite{Kastner:2017} note that
this approach makes the variable ordering dependence stronger. We therefore
follow \cite{Kastner:2017} and leave the diagonal elements $\boldsymbol{\beta}_{k,k}$
unrestricted and set the level $\mu_{f,k}$ of the factor volatility
$\lambda_{k,t}$ to zero for $k=1,...,K$.

Let $k_{s}$ denote the number of unrestricted
elements in row $s$ of $\bs \beta$ and define
\[
\boldsymbol{F}_{s}=\left[\begin{array}{ccc}
f_{1,1} & \cdots & f_{k_{s},1}\\
\vdots &  & \vdots\\
f_{1,T} & \cdots & f_{k_{s},T}
\end{array}\right],
\quad \text{and} \quad
\boldsymbol{\wt V}_{s}=\left[\begin{array}{ccc}
\exp\left(h_{s,1}\right) & \cdots & 0\\
0 & \ddots & 0\\
0 & \cdots & \exp\left(h_{s,T}\right)
\end{array}\right].
\]
We sample the factor loadings $\boldsymbol{\beta}_{s,.}=\left(\beta_{s,1},...,\beta_{s,k_{s}}\right)^{\transp}$,
for $s=1,...,S$,  independently for each $s$ using the Gibbs-update
\begin{equation}
\boldsymbol{\beta}_{s,.}|\boldsymbol{f},\boldsymbol{y}_{s,.},\boldsymbol{h}_{s,.}\sim N_{k_{s}}\left(a_{s,T},b_{s,T}\right),\label{eq:Bfactor}
\end{equation}
where
$
b_{s,T}=\left(\boldsymbol{F}_{s}^{\transp}\boldsymbol{\wt V}_{s}^{-1}\boldsymbol{F}_{s}+I_{k_{s}}\right)^{-1}$
and $
a_{s,T}=b_{s,T}\boldsymbol{F}_{s}^{\transp}\boldsymbol{\wt V}_{p}^{-1}\boldsymbol{y}_{s,1:T}$.

\subsection{Deep Interweaving\label{SS: deep interweaving}}

To improve the mixing in the draws
of the factor loading matrix we employ the following deep interweaving strategy
introduced by \cite{Kastner:2017}.
\begin{itemize}
\item Determine the vector $\boldsymbol{\beta}_{.,k}^{*}$, where $\beta_{s,k}^{*}=\beta_{s,k}^{old}/\beta_{k,k}^{old}$
in the $k$th column of the transformed factor loading matrix $\boldsymbol{\beta}^{*}$.
\item Define $\boldsymbol{\lambda_{k,.}}^{*}={\bs \lambda}_{k,.}^{old}+2\log|\beta_{k,k}^{old}|$
and sample $\beta_{k,k}^{new}$ from $p\left(\beta_{k,k}|\beta_{.,k}^{*},\boldsymbol{\lambda}_{k,.}^{*},\phi_{k},\tau_{f,k}^{2}\right)$.
\item Update $\boldsymbol{\beta}_{.,k}=\frac{\beta_{k,k}^{new}}{\beta_{k,k}^{old}}\boldsymbol{\beta}_{.,k}^{old}$,
$\boldsymbol{f}_{k,.}=\frac{\beta_{k,k}^{old}}{\beta_{k,k}^{new}}\boldsymbol{f}_{k,.}^{old}$,
and ${\bs \lambda}_{k,.}={\bs \lambda}_{k,.}^{old}+2\log|\frac{\beta_{k,k}^{old}}{\beta_{k,k}^{new}}|$.
\end{itemize}
In the deep interweaving representation  the scaling parameter
$\beta_{k,k}$ is sampled indirectly through $\mu_{f,k}=\log\beta_{k,k}^{2}$, $k=1,...,K$.
The implied prior $p\left(\mu_{f,k}\right)\propto\exp\left(\mu_{f,k}/2-\exp\left(\mu_{f,k}\right)/2\right)$
and the density $p\left(\boldsymbol{\beta}_{.,k}^{*}|\mu_{f,k}\right)\sim N_{k_{l}}\left(0,\exp\left(-\mu_{f,k}\right)I_{k_{l}}\right)$
and the likelihood yields the posterior
\[
p\left(\mu_{f,k}|\boldsymbol{\beta}_{.,k}^{*},\boldsymbol{\lambda}_{k,.}^{*},\phi_{k},\tau_{f,k}^{2}\right)\propto p\left(\boldsymbol{\lambda}_{k,.}^{*}|\mu_{f,k},\phi_{k},\tau_{f,k}^{2}\right)p\left(\boldsymbol{\beta}_{.,k}^{*}|\mu_{f,k}\right)p\left(\mu_{f,k}\right),
\]
which is not in recognisable form. We draw a proposal for $\mu_{f,k}^{prop}$
from $N\left(A,B\right)$ where
\[
A=\frac{\sum_{t=2}^{T-1}\lambda_{k,t}^{*}+\left(\lambda_{k,T}^{*}-\phi_{k}\lambda_{k,1}^{*}\right)/\left(1-\phi_{k}\right)}{T-1+1/B_{0}},B=\frac{\tau_{f,k}^{2}/\left(1-\phi_{k}\right)^{2}}{T-1+1/B_{0}}.
\]
Denoting the current value $\mu_{f,k}$ by $\mu_{f,k}^{old}$, the new
value $\mu_{f,k}^{prop}$ gets accepted with probability $\min\left(1,R\right)$,
where
\[
R=\frac{p\left(\mu_{f,k}^{prop}\right)p\left(\lambda_{k,1}^{*}|\mu_{f,k}^{prop},\phi_{k},\tau_{f,k}^{2}\right)p\left(\boldsymbol{\beta}_{.,k}^{*}|\mu_{f,k}^{prop}\right)}{p\left(\mu_{f,k}^{old}\right)\left(\lambda_{k,1}^{*}|\mu_{f,k}^{old},\phi_{k},\tau_{f,k}^{2}\right)p\left(\boldsymbol{\beta}{}_{.,k}^{*}|\mu_{f,k}^{old}\right)}\times\frac{p_{aux}\left(\mu_{f,k}^{old}|\phi_{k},\tau_{f,k}^{2}\right)}{p_{aux}\left(\mu_{f,k}^{prop}|\phi_{k},\tau_{f,k}^{2}\right)},
\]
where
\[
p_{aux}\left(\mu_{f,k}^{old}|\phi_{k},\tau_{f,k}^{2}\right)\sim N\left(0,B_{0}\tau_{f,k}^{2}/\left(1-\phi_{k}\right)^{2}\right).
\]
The constant $B_{0}$ is set to large value $10^{5}$ as in \cite{Kastner:2017}.

\subsection{Sampling the Latent Factors $\boldsymbol{f}_{1:T}$\label{SS: sampling latent factors}}
After some algebra, we obtain  that

\begin{align}
\left\{ \boldsymbol{f}_{t}\right\} |\boldsymbol{y},\left\{ \boldsymbol{h}_{t}\right\} ,\left\{ \boldsymbol{\lambda}_{t}\right\},\boldsymbol{\beta}
 & \sim N\left(a_{t},b_{t}\right),\label{eq:factordraws}
\end{align}
where
$b_{t}  =\left(\boldsymbol{\beta}^{\transp}\boldsymbol{V}_{t}^{-1} \boldsymbol{\beta}+\boldsymbol{D}_{t}^{-1}\right)^{-1}$
and
$a_{t}  =  b_{t}\boldsymbol{\beta}^{\transp}\boldsymbol{V}_{t}^{-1}\boldsymbol{y}_{t}$.

\section{Tables and figures for the factor stochastic volatility model in Sections \ref{ss:simulationandapplication} and \ref{SS: US stock returns}}
\label{S:FSV tables and figures} 

\begin{table}[H]
\caption{Inefficiency factor of $\boldsymbol{\beta}$, $\boldsymbol{\alpha}$,
$\boldsymbol{\mu}$, $\boldsymbol{\tau}^{2}$, $\boldsymbol{\phi}$,
and $\boldsymbol{\tau}_{f}^{2}$ with exact transition density for the Gaussian OU model: Sampler
I: $PMMH\left(\boldsymbol{\alpha},\boldsymbol{\tau}^{2},\boldsymbol{\tau}_{f}^{2}\right)+PG\left(\boldsymbol{\beta},\boldsymbol{\mu},\boldsymbol{\phi}\right)$,
Sampler $II$: $PGAT\left(\boldsymbol{\beta},\boldsymbol{\alpha},\boldsymbol{\tau}^{2},\boldsymbol{\mu},\boldsymbol{\phi},\boldsymbol{\tau}_{f}^{2}\right)$,
sampler III: $PGBS\left(\boldsymbol{\beta},\boldsymbol{\alpha},\boldsymbol{\tau}^{2},\boldsymbol{\mu},\boldsymbol{\phi},\boldsymbol{\tau}_{f}^{2}\right)$
for simulated data with $T=1000$, $S=20$,
and $K=1$, and number of particles $N=500$.\label{tab:Inefficiency-factor-of simulation}}

\centering{}%
\begin{tabular}{cccccccccccccccc}
\hline
 & {\footnotesize{}I} & {\footnotesize{}II} & {\footnotesize{}III} &  & {\footnotesize{}I} & {\footnotesize{}II} & {\footnotesize{}III} &  & {\footnotesize{}I} & {\footnotesize{}II} & {\footnotesize{}III} &  & {\footnotesize{}I} & {\footnotesize{}II} & {\footnotesize{}III}\tabularnewline
\hline
{\footnotesize{}$\beta_{1}$} & {\footnotesize{}$12.55$} & {\footnotesize{}$12.92$} & {\footnotesize{}$13.95$} & {\footnotesize{}$\alpha_{1}$} & {\footnotesize{}$12.64$} & {\footnotesize{}$66.69$} & {\footnotesize{}$39.94$} & {\footnotesize{}$\tau_{\epsilon,1}^{2}$} & {\footnotesize{}$14.70$} & {\footnotesize{}$136.58$} & {\footnotesize{}$99.80$} & {\footnotesize{}$\mu_{1}$} & {\footnotesize{}$1.29$} & {\footnotesize{}$1.47$} & {\footnotesize{}$1.39$}\tabularnewline
{\footnotesize{}$\beta_{2}$} & {\footnotesize{}$12.67$} & {\footnotesize{}$13.03$} & {\footnotesize{}$13.94$} & {\footnotesize{}$\alpha_{2}$} & {\footnotesize{}$11.76$} & {\footnotesize{}$44.67$} & {\footnotesize{}$35.59$} & {\footnotesize{}$\tau_{\epsilon,2}^{2}$} & {\footnotesize{}$14.36$} & {\footnotesize{}$72.64$} & {\footnotesize{}$74.03$} & {\footnotesize{}$\mu_{2}$} & {\footnotesize{}$1.28$} & {\footnotesize{}$1.43$} & {\footnotesize{}$1.33$}\tabularnewline
{\footnotesize{}$\beta_{3}$} & {\footnotesize{}$12.69$} & {\footnotesize{}$13.20$} & {\footnotesize{}$14.17$} & {\footnotesize{}$\alpha_{3}$} & {\footnotesize{}$11.89$} & {\footnotesize{}$64.76$} & {\footnotesize{}$61.08$} & {\footnotesize{}$\tau_{\epsilon,3}^{2}$} & {\footnotesize{}$12.01$} & {\footnotesize{}$92.80$} & {\footnotesize{}$101.64$} & {\footnotesize{}$\mu_{3}$} & {\footnotesize{}$1.56$} & {\footnotesize{}$1.72$} & {\footnotesize{}$1.59$}\tabularnewline
{\footnotesize{}$\beta_{4}$} & {\footnotesize{}$12.53$} & {\footnotesize{}$12.37$} & {\footnotesize{}$13.77$} & {\footnotesize{}$\alpha_{4}$} & {\footnotesize{}$13.13$} & {\footnotesize{}$107.58$} & {\footnotesize{}$59.69$} & {\footnotesize{}$\tau_{\epsilon,4}^{2}$} & {\footnotesize{}$14.70$} & {\footnotesize{}$283.23$} & {\footnotesize{}$93.35$} & {\footnotesize{}$\mu_{4}$} & {\footnotesize{}$1.41$} & {\footnotesize{}$1.40$} & {\footnotesize{}$1.33$}\tabularnewline
{\footnotesize{}$\beta_{5}$} & {\footnotesize{}$12.66$} & {\footnotesize{}$13.08$} & {\footnotesize{}$13.86$} & {\footnotesize{}$\alpha_{5}$} & {\footnotesize{}$15.21$} & {\footnotesize{}$76.45$} & {\footnotesize{}$35.94$} & {\footnotesize{}$\tau_{\epsilon,5}^{2}$} & {\footnotesize{}$14.56$} & {\footnotesize{}$123.53$} & {\footnotesize{}$81.58$} & {\footnotesize{}$\mu_{5}$} & {\footnotesize{}$1.29$} & {\footnotesize{}$1.37$} & {\footnotesize{}$1.25$}\tabularnewline
{\footnotesize{}$\beta_{6}$} & {\footnotesize{}$12.76$} & {\footnotesize{}$12.89$} & {\footnotesize{}$14.01$} & {\footnotesize{}$\alpha_{6}$} & {\footnotesize{}$14.80$} & {\footnotesize{}$37.25$} & {\footnotesize{}$30.74$} & {\footnotesize{}$\tau_{\epsilon,6}^{2}$} & {\footnotesize{}$14.84$} & {\footnotesize{}$76.76$} & {\footnotesize{}$56.96$} & {\footnotesize{}$\mu_{6}$} & {\footnotesize{}$1.25$} & {\footnotesize{}$1.29$} & {\footnotesize{}$1.18$}\tabularnewline
{\footnotesize{}$\beta_{7}$} & {\footnotesize{}$12.56$} & {\footnotesize{}$12.62$} & {\footnotesize{}$13.72$} & {\footnotesize{}$\alpha_{7}$} & {\footnotesize{}$14.11$} & {\footnotesize{}$27.87$} & {\footnotesize{}$24.29$} & {\footnotesize{}$\tau_{\epsilon,7}^{2}$} & {\footnotesize{}$13.36$} & {\footnotesize{}$58.61$} & {\footnotesize{}$43.39$} & {\footnotesize{}$\mu_{7}$} & {\footnotesize{}$1.23$} & {\footnotesize{}$1.28$} & {\footnotesize{}$1.18$}\tabularnewline
{\footnotesize{}$\beta_{8}$} & {\footnotesize{}$12.85$} & {\footnotesize{}$12.96$} & {\footnotesize{}$13.87$} & {\footnotesize{}$\alpha_{8}$} & {\footnotesize{}$13.65$} & {\footnotesize{}$40.08$} & {\footnotesize{}$19.94$} & {\footnotesize{}$\tau_{\epsilon,8}^{2}$} & {\footnotesize{}$13.37$} & {\footnotesize{}$98.49$} & {\footnotesize{}$42.14$} & {\footnotesize{}$\mu_{8}$} & {\footnotesize{}$1.24$} & {\footnotesize{}$1.27$} & {\footnotesize{}$1.20$}\tabularnewline
{\footnotesize{}$\beta_{9}$} & {\footnotesize{}$12.52$} & {\footnotesize{}$13.11$} & {\footnotesize{}$13.83$} & {\footnotesize{}$\alpha_{9}$} & {\footnotesize{}$13.58$} & {\footnotesize{}$96.90$} & {\footnotesize{}$47.77$} & {\footnotesize{}$\tau_{\epsilon,9}^{2}$} & {\footnotesize{}$15.06$} & {\footnotesize{}$144.72$} & {\footnotesize{}$81.66$} & {\footnotesize{}$\mu_{9}$} & {\footnotesize{}$1.99$} & {\footnotesize{}$1.86$} & {\footnotesize{}$1.54$}\tabularnewline
{\footnotesize{}$\beta_{10}$} & {\footnotesize{}$12.39$} & {\footnotesize{}$12.81$} & {\footnotesize{}$14.05$} & {\footnotesize{}$\alpha_{10}$} & {\footnotesize{}$18.07$} & {\footnotesize{}$23.49$} & {\footnotesize{}$32.13$} & {\footnotesize{}$\tau_{\epsilon,10}^{2}$} & {\footnotesize{}$16.56$} & {\footnotesize{}$58.06$} & {\footnotesize{}$57.03$} & {\footnotesize{}$\mu_{10}$} & {\footnotesize{}$1.29$} & {\footnotesize{}$1.28$} & {\footnotesize{}$1.23$}\tabularnewline
{\footnotesize{}$\beta_{11}$} & {\footnotesize{}$12.80$} & {\footnotesize{}$12.94$} & {\footnotesize{}$14.13$} & {\footnotesize{}$\alpha_{11}$} & {\footnotesize{}$17.31$} & {\footnotesize{}$41.43$} & {\footnotesize{}$31.13$} & {\footnotesize{}$\tau_{\epsilon,11}^{2}$} & {\footnotesize{}$14.33$} & {\footnotesize{}$75.79$} & {\footnotesize{}$66.30$} & {\footnotesize{}$\mu_{11}$} & {\footnotesize{}$1.33$} & {\footnotesize{}$1.37$} & {\footnotesize{}$1.27$}\tabularnewline
{\footnotesize{}$\beta_{12}$} & {\footnotesize{}$12.75$} & {\footnotesize{}$13.07$} & {\footnotesize{}$14.22$} & {\footnotesize{}$\alpha_{12}$} & {\footnotesize{}$16.33$} & {\footnotesize{}$30.14$} & {\footnotesize{}$47.93$} & {\footnotesize{}$\tau_{\epsilon,12}^{2}$} & {\footnotesize{}$14.18$} & {\footnotesize{}$53.80$} & {\footnotesize{}$74.84$} & {\footnotesize{}$\mu_{12}$} & {\footnotesize{}$1.42$} & {\footnotesize{}$1.35$} & {\footnotesize{}$1.31$}\tabularnewline
{\footnotesize{}$\beta_{13}$} & {\footnotesize{}$12.78$} & {\footnotesize{}$12.87$} & {\footnotesize{}$14.16$} & {\footnotesize{}$\alpha_{13}$} & {\footnotesize{}$16.24$} & {\footnotesize{}$38.37$} & {\footnotesize{}$27.31$} & {\footnotesize{}$\tau_{\epsilon,13}^{2}$} & {\footnotesize{}$13.67$} & {\footnotesize{}$67.67$} & {\footnotesize{}$47.37$} & {\footnotesize{}$\mu_{13}$} & {\footnotesize{}$1.25$} & {\footnotesize{}$1.31$} & {\footnotesize{}$1.25$}\tabularnewline
{\footnotesize{}$\beta_{14}$} & {\footnotesize{}$12.78$} & {\footnotesize{}$13.04$} & {\footnotesize{}$14.23$} & {\footnotesize{}$\alpha_{14}$} & {\footnotesize{}$14.41$} & {\footnotesize{}$38.38$} & {\footnotesize{}$21.61$} & {\footnotesize{}$\tau_{\epsilon,14}^{2}$} & {\footnotesize{}$15.88$} & {\footnotesize{}$83.16$} & {\footnotesize{}$46.09$} & {\footnotesize{}$\mu_{14}$} & {\footnotesize{}$1.27$} & {\footnotesize{}$1.30$} & {\footnotesize{}$1.26$}\tabularnewline
{\footnotesize{}$\beta_{15}$} & {\footnotesize{}$12.47$} & {\footnotesize{}$12.82$} & {\footnotesize{}$13.80$} & {\footnotesize{}$\alpha_{15}$} & {\footnotesize{}$12.72$} & {\footnotesize{}$34.25$} & {\footnotesize{}$22.16$} & {\footnotesize{}$\tau_{\epsilon,15}^{2}$} & {\footnotesize{}$15.39$} & {\footnotesize{}$60.91$} & {\footnotesize{}$44.90$} & {\footnotesize{}$\mu_{15}$} & {\footnotesize{}$1.22$} & {\footnotesize{}$1.25$} & {\footnotesize{}$1.19$}\tabularnewline
{\footnotesize{}$\beta_{16}$} & {\footnotesize{}$12.91$} & {\footnotesize{}$12.99$} & {\footnotesize{}$14.01$} & {\footnotesize{}$\alpha_{16}$} & {\footnotesize{}$15.19$} & {\footnotesize{}$70.11$} & {\footnotesize{}$42.38$} & {\footnotesize{}$\tau_{\epsilon,16}^{2}$} & {\footnotesize{}$13.60$} & {\footnotesize{}$110.75$} & {\footnotesize{}$66.36$} & {\footnotesize{}$\mu_{16}$} & {\footnotesize{}$1.40$} & {\footnotesize{}$1.62$} & {\footnotesize{}$1.34$}\tabularnewline
{\footnotesize{}$\beta_{17}$} & {\footnotesize{}$12.74$} & {\footnotesize{}$13.11$} & {\footnotesize{}$13.86$} & {\footnotesize{}$\alpha_{17}$} & {\footnotesize{}$11.17$} & {\footnotesize{}$22.16$} & {\footnotesize{}$27.11$} & {\footnotesize{}$\tau_{\epsilon,17}^{2}$} & {\footnotesize{}$11.43$} & {\footnotesize{}$53.60$} & {\footnotesize{}$51.73$} & {\footnotesize{}$\mu_{17}$} & {\footnotesize{}$1.37$} & {\footnotesize{}$1.31$} & {\footnotesize{}$1.21$}\tabularnewline
{\footnotesize{}$\beta_{18}$} & {\footnotesize{}$12.58$} & {\footnotesize{}$12.93$} & {\footnotesize{}$13.84$} & {\footnotesize{}$\alpha_{18}$} & {\footnotesize{}$12.74$} & {\footnotesize{}$28.17$} & {\footnotesize{}$28.51$} & {\footnotesize{}$\tau_{\epsilon,18}^{2}$} & {\footnotesize{}$15.66$} & {\footnotesize{}$59.10$} & {\footnotesize{}$75.58$} & {\footnotesize{}$\mu_{18}$} & {\footnotesize{}$1.33$} & {\footnotesize{}$1.32$} & {\footnotesize{}$1.30$}\tabularnewline
{\footnotesize{}$\beta_{19}$} & {\footnotesize{}$12.64$} & {\footnotesize{}$12.81$} & {\footnotesize{}$13.80$} & {\footnotesize{}$\alpha_{19}$} & {\footnotesize{}$12.67$} & {\footnotesize{}$40.38$} & {\footnotesize{}$29.96$} & {\footnotesize{}$\tau_{\epsilon,19}^{2}$} & {\footnotesize{}$15.17$} & {\footnotesize{}$74.87$} & {\footnotesize{}$59.19$} & {\footnotesize{}$\mu_{19}$} & {\footnotesize{}$1.44$} & {\footnotesize{}$1.57$} & {\footnotesize{}$1.41$}\tabularnewline
{\footnotesize{}$\beta_{20}$} & {\footnotesize{}$12.77$} & {\footnotesize{}$13.19$} & {\footnotesize{}$14.08$} & {\footnotesize{}$\alpha_{20}$} & {\footnotesize{}$12.85$} & {\footnotesize{}$27.12$} & {\footnotesize{}$22.34$} & {\footnotesize{}$\tau_{\epsilon,20}^{2}$} & {\footnotesize{}$12.84$} & {\footnotesize{}$73.02$} & {\footnotesize{}$44.80$} & {\footnotesize{}$\mu_{20}$} & {\footnotesize{}$1.26$} & {\footnotesize{}$1.38$} & {\footnotesize{}$1.30$}\tabularnewline
$\phi$ & {\footnotesize{}$8.03$} & {\footnotesize{}$20.12$} & {\footnotesize{}$18.62$} & {\footnotesize{}$\tau_{f,1}^{2}$} & {\footnotesize{}$14.76$} & {\footnotesize{}$73.76$} & {\footnotesize{}$79.14$} &  &  &  &  &  &  &  & \tabularnewline
\hline
\end{tabular}
\end{table}

\begin{sidewaystable}
\caption{Inefficiency factor of $\boldsymbol{\beta}$, $\boldsymbol{\alpha}$,
$\boldsymbol{\mu}$, $\boldsymbol{\tau}^{2}$, $\boldsymbol{\phi}$,
and $\boldsymbol{\tau}_{f}^{2}$ with Euler approximation for state
transition density for the Gaussian OU model: Sampler I: $PMMH\left(\boldsymbol{\alpha},\boldsymbol{\tau}^{2},\boldsymbol{\mu},\boldsymbol{\tau}_{f}^{2}\right)+PG\left(\boldsymbol{\beta},\boldsymbol{\phi}\right)$,
Sampler $II$: $PGAT\left(\boldsymbol{\beta},\boldsymbol{\alpha},\boldsymbol{\tau}^{2},\boldsymbol{\mu},\boldsymbol{\phi},\boldsymbol{\tau}_{f}^{2}\right)$,
sampler III: $PGBS\left(\boldsymbol{\beta},\boldsymbol{\alpha},\boldsymbol{\tau}^{2},\boldsymbol{\mu},\boldsymbol{\phi},\boldsymbol{\tau}_{f}^{2}\right)$
for simulated data with $T=1000$, $S=20$,
and $K=1$, and number of particles $N=1000$.\label{tab:Inefficiency-factor-of simulation-1}}

\centering{}%
\begin{tabular}{cccccccccccccccc}
\hline
 & {\footnotesize{}I} & {\footnotesize{}II} & {\footnotesize{}III} &  & {\footnotesize{}I} & {\footnotesize{}II} & {\footnotesize{}III} &  & {\footnotesize{}I} & {\footnotesize{}II} & {\footnotesize{}III} &  & {\footnotesize{}I} & {\footnotesize{}II} & {\footnotesize{}III}\tabularnewline
\hline
{\footnotesize{}$\beta_{1}$} & {\footnotesize{}$13.72$} & {\footnotesize{}$13.67$} & {\footnotesize{}$11.06$} & {\footnotesize{}$\alpha_{1}$} & {\footnotesize{}$12.85$} & {\footnotesize{}$159.79$} & {\footnotesize{}$181.78$} & {\footnotesize{}$\tau_{\epsilon,1}^{2}$} & {\footnotesize{}$13.10$} & {\footnotesize{}$374.25$} & {\footnotesize{}$444.82$} & {\footnotesize{}$\mu_{1}$} & {\footnotesize{}$13.27$} & {\footnotesize{}$12.92$} & {\footnotesize{}$11.82$}\tabularnewline
{\footnotesize{}$\beta_{2}$} & {\footnotesize{}$13.93$} & {\footnotesize{}$13.79$} & {\footnotesize{}$11.23$} & {\footnotesize{}$\alpha_{2}$} & {\footnotesize{}$15.49$} & {\footnotesize{}$92.87$} & {\footnotesize{}$335.05$} & {\footnotesize{}$\tau_{\epsilon,2}^{2}$} & {\footnotesize{}$13.49$} & {\footnotesize{}$340.28$} & {\footnotesize{}$792.88$} & {\footnotesize{}$\mu_{2}$} & {\footnotesize{}$13.42$} & {\footnotesize{}$11.00$} & {\footnotesize{}$11.98$}\tabularnewline
{\footnotesize{}$\beta_{3}$} & {\footnotesize{}$13.87$} & {\footnotesize{}$13.60$} & {\footnotesize{}$11.30$} & {\footnotesize{}$\alpha_{3}$} & {\footnotesize{}$12.43$} & {\footnotesize{}$300.77$} & {\footnotesize{}$272.34$} & {\footnotesize{}$\tau_{\epsilon,3}^{2}$} & {\footnotesize{}$12.46$} & {\footnotesize{}$733.43$} & {\footnotesize{}$682.28$} & {\footnotesize{}$\mu_{3}$} & {\footnotesize{}$15.41$} & {\footnotesize{}$13.09$} & {\footnotesize{}$13.23$}\tabularnewline
{\footnotesize{}$\beta_{4}$} & {\footnotesize{}$14.14$} & {\footnotesize{}$13.48$} & {\footnotesize{}$10.95$} & {\footnotesize{}$\alpha_{4}$} & {\footnotesize{}$13.35$} & {\footnotesize{}$530.99$} & {\footnotesize{}$303.41$} & {\footnotesize{}$\tau_{\epsilon,4}^{2}$} & {\footnotesize{}$13.46$} & {\footnotesize{}$977.93$} & {\footnotesize{}$654.65$} & {\footnotesize{}$\mu_{4}$} & {\footnotesize{}$13.44$} & {\footnotesize{}$13.87$} & {\footnotesize{}$14.16$}\tabularnewline
{\footnotesize{}$\beta_{5}$} & {\footnotesize{}$13.63$} & {\footnotesize{}$13.56$} & {\footnotesize{}$10.95$} & {\footnotesize{}$\alpha_{5}$} & {\footnotesize{}$15.72$} & {\footnotesize{}$93.77$} & {\footnotesize{}$140.44$} & {\footnotesize{}$\tau_{\epsilon,5}^{2}$} & {\footnotesize{}$16.23$} & {\footnotesize{}$514.24$} & {\footnotesize{}$339.73$} & {\footnotesize{}$\mu_{5}$} & {\footnotesize{}$13.24$} & {\footnotesize{}$11.83$} & {\footnotesize{}$12.87$}\tabularnewline
{\footnotesize{}$\beta_{6}$} & {\footnotesize{}$13.84$} & {\footnotesize{}$13.68$} & {\footnotesize{}$11.30$} & {\footnotesize{}$\alpha_{6}$} & {\footnotesize{}$16.81$} & {\footnotesize{}$190.71$} & {\footnotesize{}$152.17$} & {\footnotesize{}$\tau_{\epsilon,6}^{2}$} & {\footnotesize{}$16.20$} & {\footnotesize{}$539.97$} & {\footnotesize{}$418.23$} & {\footnotesize{}$\mu_{6}$} & {\footnotesize{}$14.00$} & {\footnotesize{}$13.23$} & {\footnotesize{}$13.45$}\tabularnewline
{\footnotesize{}$\beta_{7}$} & {\footnotesize{}$13.77$} & {\footnotesize{}$13.69$} & {\footnotesize{}$11.25$} & {\footnotesize{}$\alpha_{7}$} & {\footnotesize{}$17.57$} & {\footnotesize{}$79.74$} & {\footnotesize{}$102.55$} & {\footnotesize{}$\tau_{\epsilon,7}^{2}$} & {\footnotesize{}$13.75$} & {\footnotesize{}$592.05$} & {\footnotesize{}$352.65$} & {\footnotesize{}$\mu_{7}$} & {\footnotesize{}$13.80$} & {\footnotesize{}$10.85$} & {\footnotesize{}$11.77$}\tabularnewline
{\footnotesize{}$\beta_{8}$} & {\footnotesize{}$13.87$} & {\footnotesize{}$13.52$} & {\footnotesize{}$11.14$} & {\footnotesize{}$\alpha_{8}$} & {\footnotesize{}$13.33$} & {\footnotesize{}$134.56$} & {\footnotesize{}$136.97$} & {\footnotesize{}$\tau_{\epsilon,8}^{2}$} & {\footnotesize{}$13.99$} & {\footnotesize{}$392.80$} & {\footnotesize{}$376.86$} & {\footnotesize{}$\mu_{8}$} & {\footnotesize{}$16.46$} & {\footnotesize{}$11.48$} & {\footnotesize{}$11.67$}\tabularnewline
{\footnotesize{}$\beta_{9}$} & {\footnotesize{}$13.69$} & {\footnotesize{}$13.39$} & {\footnotesize{}$11.15$} & {\footnotesize{}$\alpha_{9}$} & {\footnotesize{}$13.50$} & {\footnotesize{}$395.91$} & {\footnotesize{}$161.91$} & {\footnotesize{}$\tau_{\epsilon,9}^{2}$} & {\footnotesize{}$14.65$} & {\footnotesize{}$803.36$} & {\footnotesize{}$457.15$} & {\footnotesize{}$\mu_{9}$} & {\footnotesize{}$16.12$} & {\footnotesize{}$13.72$} & {\footnotesize{}$12.55$}\tabularnewline
{\footnotesize{}$\beta_{10}$} & {\footnotesize{}$13.95$} & {\footnotesize{}$13.66$} & {\footnotesize{}$11.19$} & {\footnotesize{}$\alpha_{10}$} & {\footnotesize{}$12.46$} & {\footnotesize{}$128.96$} & {\footnotesize{}$117.10$} & {\footnotesize{}$\tau_{\epsilon,10}^{2}$} & {\footnotesize{}$13.10$} & {\footnotesize{}$408.40$} & {\footnotesize{}$357.97$} & {\footnotesize{}$\mu_{10}$} & {\footnotesize{}$14.72$} & {\footnotesize{}$11.70$} & {\footnotesize{}$11.94$}\tabularnewline
{\footnotesize{}$\beta_{11}$} & {\footnotesize{}$13.99$} & {\footnotesize{}$13.84$} & {\footnotesize{}$11.14$} & {\footnotesize{}$\alpha_{11}$} & {\footnotesize{}$13.55$} & {\footnotesize{}$273.87$} & {\footnotesize{}$98.71$} & {\footnotesize{}$\tau_{\epsilon,11}^{2}$} & {\footnotesize{}$15.56$} & {\footnotesize{}$667.52$} & {\footnotesize{}$402.61$} & {\footnotesize{}$\mu_{11}$} & {\footnotesize{}$12.55$} & {\footnotesize{}$11.51$} & {\footnotesize{}$12.62$}\tabularnewline
{\footnotesize{}$\beta_{12}$} & {\footnotesize{}$13.85$} & {\footnotesize{}$13.78$} & {\footnotesize{}$11.32$} & {\footnotesize{}$\alpha_{12}$} & {\footnotesize{}$16.34$} & {\footnotesize{}$105.64$} & {\footnotesize{}$204.73$} & {\footnotesize{}$\tau_{\epsilon,12}^{2}$} & {\footnotesize{}$16.09$} & {\footnotesize{}$356.37$} & {\footnotesize{}$438.96$} & {\footnotesize{}$\mu_{12}$} & {\footnotesize{}$12.56$} & {\footnotesize{}$13.00$} & {\footnotesize{}$13.25$}\tabularnewline
{\footnotesize{}$\beta_{13}$} & {\footnotesize{}$14.20$} & {\footnotesize{}$13.56$} & {\footnotesize{}$11.13$} & {\footnotesize{}$\alpha_{13}$} & {\footnotesize{}$13.56$} & {\footnotesize{}$262.15$} & {\footnotesize{}$136.41$} & {\footnotesize{}$\tau_{\epsilon,13}^{2}$} & {\footnotesize{}$12.73$} & {\footnotesize{}$511.17$} & {\footnotesize{}$378.67$} & {\footnotesize{}$\mu_{13}$} & {\footnotesize{}$13.18$} & {\footnotesize{}$14.97$} & {\footnotesize{}$11.28$}\tabularnewline
{\footnotesize{}$\beta_{14}$} & {\footnotesize{}$14.12$} & {\footnotesize{}$13.92$} & {\footnotesize{}$11.34$} & {\footnotesize{}$\alpha_{14}$} & {\footnotesize{}$12.60$} & {\footnotesize{}$188.22$} & {\footnotesize{}$177.73$} & {\footnotesize{}$\tau_{\epsilon,14}^{2}$} & {\footnotesize{}$12.00$} & {\footnotesize{}$530.42$} & {\footnotesize{}$428.24$} & {\footnotesize{}$\mu_{14}$} & {\footnotesize{}$16.19$} & {\footnotesize{}$12.18$} & {\footnotesize{}$11.69$}\tabularnewline
{\footnotesize{}$\beta_{15}$} & {\footnotesize{}$13.65$} & {\footnotesize{}$13.27$} & {\footnotesize{}$11.00$} & {\footnotesize{}$\alpha_{15}$} & {\footnotesize{}$14.79$} & {\footnotesize{}$200.20$} & {\footnotesize{}$162.37$} & {\footnotesize{}$\tau_{\epsilon,15}^{2}$} & {\footnotesize{}$12.79$} & {\footnotesize{}$574.45$} & {\footnotesize{}$578.06$} & {\footnotesize{}$\mu_{15}$} & {\footnotesize{}$15.09$} & {\footnotesize{}$13.01$} & {\footnotesize{}$12.46$}\tabularnewline
{\footnotesize{}$\beta_{16}$} & {\footnotesize{}$13.89$} & {\footnotesize{}$13.89$} & {\footnotesize{}$11.07$} & {\footnotesize{}$\alpha_{16}$} & {\footnotesize{}$14.62$} & {\footnotesize{}$271.96$} & {\footnotesize{}$337.69$} & {\footnotesize{}$\tau_{\epsilon,16}^{2}$} & {\footnotesize{}$15.67$} & {\footnotesize{}$470.91$} & {\footnotesize{}$672.67$} & {\footnotesize{}$\mu_{16}$} & {\footnotesize{}$13.51$} & {\footnotesize{}$15.99$} & {\footnotesize{}$11.88$}\tabularnewline
{\footnotesize{}$\beta_{17}$} & {\footnotesize{}$13.77$} & {\footnotesize{}$13.30$} & {\footnotesize{}$11.07$} & {\footnotesize{}$\alpha_{17}$} & {\footnotesize{}$16.29$} & {\footnotesize{}$139.51$} & {\footnotesize{}$87.63$} & {\footnotesize{}$\tau_{\epsilon,17}^{2}$} & {\footnotesize{}$13.62$} & {\footnotesize{}$467.94$} & {\footnotesize{}$330.15$} & {\footnotesize{}$\mu_{17}$} & {\footnotesize{}$16.63$} & {\footnotesize{}$12.34$} & {\footnotesize{}$13.24$}\tabularnewline
{\footnotesize{}$\beta_{18}$} & {\footnotesize{}$13.71$} & {\footnotesize{}$13.40$} & {\footnotesize{}$10.96$} & {\footnotesize{}$\alpha_{18}$} & {\footnotesize{}$15.69$} & {\footnotesize{}$55.90$} & {\footnotesize{}$107.32$} & {\footnotesize{}$\tau_{\epsilon,18}^{2}$} & {\footnotesize{}$17.08$} & {\footnotesize{}$262.38$} & {\footnotesize{}$317.31$} & {\footnotesize{}$\mu_{18}$} & {\footnotesize{}$15.03$} & {\footnotesize{}$10.81$} & {\footnotesize{}$11.65$}\tabularnewline
{\footnotesize{}$\beta_{19}$} & {\footnotesize{}$13.90$} & {\footnotesize{}$13.69$} & {\footnotesize{}$11.05$} & {\footnotesize{}$\alpha_{19}$} & {\footnotesize{}$15.73$} & {\footnotesize{}$284.70$} & {\footnotesize{}$194.08$} & {\footnotesize{}$\tau_{\epsilon,19}^{2}$} & {\footnotesize{}$14.97$} & {\footnotesize{}$649.26$} & {\footnotesize{}$537.12$} & {\footnotesize{}$\mu_{19}$} & {\footnotesize{}$15.39$} & {\footnotesize{}$13.53$} & {\footnotesize{}$11.72$}\tabularnewline
{\footnotesize{}$\beta_{20}$} & {\footnotesize{}$13.86$} & {\footnotesize{}$13.61$} & {\footnotesize{}$11.21$} & {\footnotesize{}$\alpha_{20}$} & {\footnotesize{}$13.76$} & {\footnotesize{}$311.20$} & {\footnotesize{}$125.72$} & {\footnotesize{}$\tau_{\epsilon,20}^{2}$} & {\footnotesize{}$14.99$} & {\footnotesize{}$667.49$} & {\footnotesize{}$331.18$} & {\footnotesize{}$\mu_{20}$} & {\footnotesize{}$14.64$} & {\footnotesize{}$16.43$} & {\footnotesize{}$15.96$}\tabularnewline
$\phi$ & {\footnotesize{}$7.11$} & {\footnotesize{}$20.88$} & {\footnotesize{}$17.01$} & {\footnotesize{}$\tau_{f,1}^{2}$} & {\footnotesize{}$12.66$} & {\footnotesize{}$78.23$} & {\footnotesize{}$67.92$} &  &  &  &  &  &  &  & \tabularnewline
\hline
\end{tabular}
\end{sidewaystable}


\begin{sidewaystable}
\caption{Inefficiency factors of $\boldsymbol{\beta}$, $\boldsymbol{\alpha}$,
$\boldsymbol{\mu}$, $\boldsymbol{\tau}^{2}$, $\boldsymbol{\phi}$,
and $\boldsymbol{\tau}_{f}^{2}$ with exact transition density for the Gaussian OU model: Sampler
I: $PMMH\left(\boldsymbol{\alpha},\boldsymbol{\tau}^{2},\boldsymbol{\tau}_{f}^{2}\right)+PG\left(\boldsymbol{\beta},\boldsymbol{\mu},\boldsymbol{\phi}\right)$,
Sampler $II$: $PGAT\left(\boldsymbol{\beta},\boldsymbol{\alpha},\boldsymbol{\tau}^{2},\boldsymbol{\mu},\boldsymbol{\phi},\boldsymbol{\tau}_{f}^{2}\right)$,
sampler III: $PGBS\left(\boldsymbol{\beta},\boldsymbol{\alpha},\boldsymbol{\tau}^{2},\boldsymbol{\mu},\boldsymbol{\phi},\boldsymbol{\tau}_{f}^{2}\right)$
for US stock returns data with $T=1000$,
$S=20$, and $K=1$, and number of particles $N=500$.\label{tab:Inefficiency-factor-of real data}}
\centering{}%
\begin{tabular}{cccccccccccccccc}
\hline
 & {\footnotesize{}I} & {\footnotesize{}II} & {\footnotesize{}III} &  & {\footnotesize{}I} & {\footnotesize{}II} & {\footnotesize{}III} &  & {\footnotesize{}I} & {\footnotesize{}II} & {\footnotesize{}III} &  & {\footnotesize{}I} & {\footnotesize{}II} & {\footnotesize{}III}\tabularnewline
\hline
{\footnotesize{}$\beta_{1}$} & {\footnotesize{}$2.18$} & {\footnotesize{}$2.05$} & {\footnotesize{}$1.91$} & {\footnotesize{}$\alpha_{1}$} & {\footnotesize{}$14.21$} & {\footnotesize{}$219.61$} & {\footnotesize{}$113.66$} & {\footnotesize{}$\tau_{\epsilon,1}^{2}$} & {\footnotesize{}$14.37$} & {\footnotesize{}$260.88$} & {\footnotesize{}$129.79$} & {\footnotesize{}$\mu_{1}$} & {\footnotesize{}$2.11$} & {\footnotesize{}$4.50$} & {\footnotesize{}$2.84$}\tabularnewline
{\footnotesize{}$\beta_{2}$} & {\footnotesize{}$1.68$} & {\footnotesize{}$1.85$} & {\footnotesize{}$1.90$} & {\footnotesize{}$\alpha_{2}$} & {\footnotesize{}$11.87$} & {\footnotesize{}$35.87$} & {\footnotesize{}$40.80$} & {\footnotesize{}$\tau_{\epsilon,2}^{2}$} & {\footnotesize{}$12.29$} & {\footnotesize{}$68.34$} & {\footnotesize{}$70.17$} & {\footnotesize{}$\mu_{2}$} & {\footnotesize{}$1.20$} & {\footnotesize{}$1.42$} & {\footnotesize{}$1.18$}\tabularnewline
{\footnotesize{}$\beta_{3}$} & {\footnotesize{}$1.80$} & {\footnotesize{}$1.76$} & {\footnotesize{}$1.70$} & {\footnotesize{}$\alpha_{3}$} & {\footnotesize{}$13.04$} & {\footnotesize{}$62.04$} & {\footnotesize{}$89.69$} & {\footnotesize{}$\tau_{\epsilon,3}^{2}$} & {\footnotesize{}$13.23$} & {\footnotesize{}$110.88$} & {\footnotesize{}$157.46$} & {\footnotesize{}$\mu_{3}$} & {\footnotesize{}$2.39$} & {\footnotesize{}$2.66$} & {\footnotesize{}$2.36$}\tabularnewline
{\footnotesize{}$\beta_{4}$} & {\footnotesize{}$1.79$} & {\footnotesize{}$1.76$} & {\footnotesize{}$1.83$} & {\footnotesize{}$\alpha_{4}$} & {\footnotesize{}$14.22$} & {\footnotesize{}$66.24$} & {\footnotesize{}$51.79$} & {\footnotesize{}$\tau_{\epsilon,4}^{2}$} & {\footnotesize{}$14.99$} & {\footnotesize{}$122.26$} & {\footnotesize{}$88.17$} & {\footnotesize{}$\mu_{4}$} & {\footnotesize{}$1.77$} & {\footnotesize{}$1.83$} & {\footnotesize{}$1.50$}\tabularnewline
{\footnotesize{}$\beta_{5}$} & {\footnotesize{}$1.87$} & {\footnotesize{}$1.76$} & {\footnotesize{}$1.69$} & {\footnotesize{}$\alpha_{5}$} & {\footnotesize{}$18.44$} & {\footnotesize{}$466.48$} & {\footnotesize{}$136.77$} & {\footnotesize{}$\tau_{\epsilon,5}^{2}$} & {\footnotesize{}$17.14$} & {\footnotesize{}$682.49$} & {\footnotesize{}$167.35$} & {\footnotesize{}$\mu_{5}$} & {\footnotesize{}$2.97$} & {\footnotesize{}$3.57$} & {\footnotesize{}$1.91$}\tabularnewline
{\footnotesize{}$\beta_{6}$} & {\footnotesize{}$1.66$} & {\footnotesize{}$1.74$} & {\footnotesize{}$1.67$} & {\footnotesize{}$\alpha_{6}$} & {\footnotesize{}$17.31$} & {\footnotesize{}$113.00$} & {\footnotesize{}$112.08$} & {\footnotesize{}$\tau_{\epsilon,6}^{2}$} & {\footnotesize{}$19.29$} & {\footnotesize{}$202.66$} & {\footnotesize{}$258.42$} & {\footnotesize{}$\mu_{6}$} & {\footnotesize{}$4.88$} & {\footnotesize{}$5.94$} & {\footnotesize{}$4.11$}\tabularnewline
{\footnotesize{}$\beta_{7}$} & {\footnotesize{}$1.61$} & {\footnotesize{}$1.67$} & {\footnotesize{}$1.66$} & {\footnotesize{}$\alpha_{7}$} & {\footnotesize{}$11.41$} & {\footnotesize{}$52.72$} & {\footnotesize{}$64.09$} & {\footnotesize{}$\tau_{\epsilon,7}^{2}$} & {\footnotesize{}$14.00$} & {\footnotesize{}$91.79$} & {\footnotesize{}$92.67$} & {\footnotesize{}$\mu_{7}$} & {\footnotesize{}$1.87$} & {\footnotesize{}$1.79$} & {\footnotesize{}$1.86$}\tabularnewline
{\footnotesize{}$\beta_{8}$} & {\footnotesize{}$1.82$} & {\footnotesize{}$1.93$} & {\footnotesize{}$1.70$} & {\footnotesize{}$\alpha_{8}$} & {\footnotesize{}$18.71$} & {\footnotesize{}$86.37$} & {\footnotesize{}$45.28$} & {\footnotesize{}$\tau_{\epsilon,8}^{2}$} & {\footnotesize{}$20.57$} & {\footnotesize{}$145.71$} & {\footnotesize{}$76.37$} & {\footnotesize{}$\mu_{8}$} & {\footnotesize{}$2.43$} & {\footnotesize{}$3.41$} & {\footnotesize{}$1.80$}\tabularnewline
{\footnotesize{}$\beta_{9}$} & {\footnotesize{}$1.89$} & {\footnotesize{}$1.96$} & {\footnotesize{}$1.74$} & {\footnotesize{}$\alpha_{9}$} & {\footnotesize{}$12.97$} & {\footnotesize{}$80.73$} & {\footnotesize{}$136.71$} & {\footnotesize{}$\tau_{\epsilon,9}^{2}$} & {\footnotesize{}$14.23$} & {\footnotesize{}$116.44$} & {\footnotesize{}$158.23$} & {\footnotesize{}$\mu_{9}$} & {\footnotesize{}$2.27$} & {\footnotesize{}$2.77$} & {\footnotesize{}$3.30$}\tabularnewline
{\footnotesize{}$\beta_{10}$} & {\footnotesize{}$1.65$} & {\footnotesize{}$1.73$} & {\footnotesize{}$1.66$} & {\footnotesize{}$\alpha_{10}$} & {\footnotesize{}$15.25$} & {\footnotesize{}$119.34$} & {\footnotesize{}$124.61$} & {\footnotesize{}$\tau_{\epsilon,10}^{2}$} & {\footnotesize{}$12.54$} & {\footnotesize{}$106.68$} & {\footnotesize{}$128.63$} & {\footnotesize{}$\mu_{10}$} & {\footnotesize{}$6.21$} & {\footnotesize{}$7.57$} & {\footnotesize{}$6.70$}\tabularnewline
{\footnotesize{}$\beta_{11}$} & {\footnotesize{}$1.63$} & {\footnotesize{}$1.74$} & {\footnotesize{}$1.67$} & {\footnotesize{}$\alpha_{11}$} & {\footnotesize{}$14.66$} & {\footnotesize{}$65.71$} & {\footnotesize{}$69.71$} & {\footnotesize{}$\tau_{\epsilon,11}^{2}$} & {\footnotesize{}$14.44$} & {\footnotesize{}$121.39$} & {\footnotesize{}$83.53$} & {\footnotesize{}$\mu_{11}$} & {\footnotesize{}$3.24$} & {\footnotesize{}$5.57$} & {\footnotesize{}$2.84$}\tabularnewline
{\footnotesize{}$\beta_{12}$} & {\footnotesize{}$1.65$} & {\footnotesize{}$1.89$} & {\footnotesize{}$1.69$} & {\footnotesize{}$\alpha_{12}$} & {\footnotesize{}$17.47$} & {\footnotesize{}$433.51$} & {\footnotesize{}$97.20$} & {\footnotesize{}$\tau_{\epsilon,12}^{2}$} & {\footnotesize{}$16.20$} & {\footnotesize{}$545.21$} & {\footnotesize{}$146.63$} & {\footnotesize{}$\mu_{12}$} & {\footnotesize{}$3.36$} & {\footnotesize{}$5.94$} & {\footnotesize{}$2.54$}\tabularnewline
{\footnotesize{}$\beta_{13}$} & {\footnotesize{}$1.94$} & {\footnotesize{}$2.02$} & {\footnotesize{}$1.92$} & {\footnotesize{}$\alpha_{13}$} & {\footnotesize{}$13.50$} & {\footnotesize{}$151.20$} & {\footnotesize{}$112.64$} & {\footnotesize{}$\tau_{\epsilon,13}^{2}$} & {\footnotesize{}$13.49$} & {\footnotesize{}$189.17$} & {\footnotesize{}$145.44$} & {\footnotesize{}$\mu_{13}$} & {\footnotesize{}$2.74$} & {\footnotesize{}$3.19$} & {\footnotesize{}$2.19$}\tabularnewline
{\footnotesize{}$\beta_{14}$} & {\footnotesize{}$1.66$} & {\footnotesize{}$1.79$} & {\footnotesize{}$1.60$} & {\footnotesize{}$\alpha_{14}$} & {\footnotesize{}$14.48$} & {\footnotesize{}$70.44$} & {\footnotesize{}$74.94$} & {\footnotesize{}$\tau_{\epsilon,14}^{2}$} & {\footnotesize{}$14.11$} & {\footnotesize{}$146.32$} & {\footnotesize{}$121.04$} & {\footnotesize{}$\mu_{14}$} & {\footnotesize{}$2.01$} & {\footnotesize{}$2.06$} & {\footnotesize{}$1.73$}\tabularnewline
{\footnotesize{}$\beta_{15}$} & {\footnotesize{}$1.62$} & {\footnotesize{}$1.82$} & {\footnotesize{}$1.45$} & {\footnotesize{}$\alpha_{15}$} & {\footnotesize{}$13.08$} & {\footnotesize{}$126.39$} & {\footnotesize{}$291.78$} & {\footnotesize{}$\tau_{\epsilon,15}^{2}$} & {\footnotesize{}$14.80$} & {\footnotesize{}$148.03$} & {\footnotesize{}$382.86$} & {\footnotesize{}$\mu_{15}$} & {\footnotesize{}$2.20$} & {\footnotesize{}$2.66$} & {\footnotesize{}$2.11$}\tabularnewline
{\footnotesize{}$\beta_{16}$} & {\footnotesize{}$1.69$} & {\footnotesize{}$1.76$} & {\footnotesize{}$1.83$} & {\footnotesize{}$\alpha_{16}$} & {\footnotesize{}$11.58$} & {\footnotesize{}$133.17$} & {\footnotesize{}$39.94$} & {\footnotesize{}$\tau_{\epsilon,16}^{2}$} & {\footnotesize{}$11.64$} & {\footnotesize{}$210.38$} & {\footnotesize{}$99.40$} & {\footnotesize{}$\mu_{16}$} & {\footnotesize{}$1.54$} & {\footnotesize{}$1.54$} & {\footnotesize{}$1.55$}\tabularnewline
{\footnotesize{}$\beta_{17}$} & {\footnotesize{}$2.12$} & {\footnotesize{}$2.54$} & {\footnotesize{}$1.95$} & {\footnotesize{}$\alpha_{17}$} & {\footnotesize{}$14.52$} & {\footnotesize{}$39.97$} & {\footnotesize{}$30.94$} & {\footnotesize{}$\tau_{\epsilon,17}^{2}$} & {\footnotesize{}$15.65$} & {\footnotesize{}$94.23$} & {\footnotesize{}$54.03$} & {\footnotesize{}$\mu_{17}$} & {\footnotesize{}$1.30$} & {\footnotesize{}$1.25$} & {\footnotesize{}$1.24$}\tabularnewline
{\footnotesize{}$\beta_{18}$} & {\footnotesize{}$1.94$} & {\footnotesize{}$2.04$} & {\footnotesize{}$1.93$} & {\footnotesize{}$\alpha_{18}$} & {\footnotesize{}$15.24$} & {\footnotesize{}$51.58$} & {\footnotesize{}$40.02$} & {\footnotesize{}$\tau_{\epsilon,18}^{2}$} & {\footnotesize{}$17.46$} & {\footnotesize{}$105.41$} & {\footnotesize{}$70.14$} & {\footnotesize{}$\mu_{18}$} & {\footnotesize{}$1.36$} & {\footnotesize{}$1.51$} & {\footnotesize{}$1.36$}\tabularnewline
{\footnotesize{}$\beta_{19}$} & {\footnotesize{}$1.80$} & {\footnotesize{}$1.92$} & {\footnotesize{}$1.73$} & {\footnotesize{}$\alpha_{19}$} & {\footnotesize{}$15.14$} & {\footnotesize{}$36.14$} & {\footnotesize{}$28.02$} & {\footnotesize{}$\tau_{\epsilon,19}^{2}$} & {\footnotesize{}$13.73$} & {\footnotesize{}$81.59$} & {\footnotesize{}$68.35$} & {\footnotesize{}$\mu_{19}$} & {\footnotesize{}$1.28$} & {\footnotesize{}$1.48$} & {\footnotesize{}$1.37$}\tabularnewline
{\footnotesize{}$\beta_{20}$} & {\footnotesize{}$1.87$} & {\footnotesize{}$1.81$} & {\footnotesize{}$1.73$} & {\footnotesize{}$\alpha_{20}$} & {\footnotesize{}$14.52$} & {\footnotesize{}$33.78$} & {\footnotesize{}$28.57$} & {\footnotesize{}$\tau_{\epsilon,20}^{2}$} & {\footnotesize{}$17.10$} & {\footnotesize{}$72.67$} & {\footnotesize{}$55.28$} & {\footnotesize{}$\mu_{20}$} & {\footnotesize{}$1.27$} & {\footnotesize{}$1.51$} & {\footnotesize{}$1.22$}\tabularnewline
$\phi$ & {\footnotesize{}$8.77$} & {\footnotesize{}$25.64$} & {\footnotesize{}$20.05$} & {\footnotesize{}$\tau_{f,1}^{2}$} & {\footnotesize{}$14.24$} & {\footnotesize{}$55.08$} & {\footnotesize{}$48.92$} &  &  &  &  &  &  &  & \tabularnewline
\hline
\end{tabular}
\end{sidewaystable}

Table~\ref{tab:Inefficiency-factor-of real data} gives the inefficiency factors of $\boldsymbol{\beta}$, $\boldsymbol{\alpha}$,
$\boldsymbol{\mu}$, $\boldsymbol{\tau}_{\epsilon}^{2}$, $\boldsymbol{\phi}$,
and $\boldsymbol{\tau}_{f}^{2}$ with the exact transition density for the Gaussian OU model for the three samplers: Sampler
I: $PMMH\left(\boldsymbol{\alpha},\boldsymbol{\tau}_{\epsilon}^{2},\boldsymbol{\tau}_{f}^{2}\right)+PG\left(\boldsymbol{\beta},\boldsymbol{\mu},\boldsymbol{\phi}\right)$,
Sampler $II$: $PGAT\left(\boldsymbol{\beta},\boldsymbol{\alpha},\boldsymbol{\tau}_{\epsilon}^{2},\boldsymbol{\mu},\boldsymbol{\phi},\boldsymbol{\tau}_{f}^{2}\right)$,
sampler III: $PGBS\left(\boldsymbol{\beta},\boldsymbol{\alpha},\boldsymbol{\tau}_{\epsilon}^{2},\boldsymbol{\mu},\boldsymbol{\phi},\boldsymbol{\tau}_{f}^{2}\right)$
for US stock returns data with $T=1000$,
$S=20$, and $K=1$, and with the number of particles $N=500$.

Table~\ref{tab:Inefficiency-factor-of real data-1} gives the inefficiency factors of $\boldsymbol{\beta}$, $\boldsymbol{\alpha}$,
$\boldsymbol{\mu}$, $\boldsymbol{\tau}_{\epsilon}^{2}$, $\boldsymbol{\phi}$,
and $\boldsymbol{\tau}_{f}^{2}$ with the approximate Euler based transition density for the Gaussian OU model, for the three samplers:
Sampler
I: $PMMH\left(\boldsymbol{\alpha},\boldsymbol{\tau}_{\epsilon}^{2},\boldsymbol{\tau}_{f}^{2}\right)+PG\left(\boldsymbol{\beta},\boldsymbol{\mu},\boldsymbol{\phi}\right)$,
Sampler $II$: $PGAT\left(\boldsymbol{\beta},\boldsymbol{\alpha},\boldsymbol{\tau}_{\epsilon}^{2},\boldsymbol{\mu},\boldsymbol{\phi},\boldsymbol{\tau}_{f}^{2}\right)$, Sampler $III$: $PGBS\left(\boldsymbol{\beta},\boldsymbol{\alpha},\boldsymbol{\tau}_{\epsilon}^{2},\boldsymbol{\mu},\boldsymbol{\phi},\boldsymbol{\tau}_{f}^{2}\right)$
for US stock returns data with $T=1000$,
$S=20$, and $K=1$, and with the number of particles $N=1000$.

\begin{sidewaystable}
\caption{Inefficiency factors of $\boldsymbol{\beta}$, $\boldsymbol{\alpha}$,
$\boldsymbol{\mu}$, $\boldsymbol{\tau}^{2}$, $\boldsymbol{\phi}$,
and $\boldsymbol{\tau}_{f}^{2}$ with an Euler approximation for the state
transition densities for the Gaussian OU model: Sampler I: $PMMH\left(\boldsymbol{\alpha},\boldsymbol{\tau}^{2},\boldsymbol{\mu},\boldsymbol{\tau}_{f}^{2}\right)+PG\left(\boldsymbol{\beta},\boldsymbol{\phi}\right)$,
Sampler $II$: $PGAT\left(\boldsymbol{\beta},\boldsymbol{\alpha},\boldsymbol{\tau}^{2},\boldsymbol{\mu},\boldsymbol{\phi},\boldsymbol{\tau}_{f}^{2}\right)$,
sampler III: $PGBS\left(\boldsymbol{\beta},\boldsymbol{\alpha},\boldsymbol{\tau}^{2},\boldsymbol{\mu},\boldsymbol{\phi},\boldsymbol{\tau}_{f}^{2}\right)$
for US stock returns data with $T=1000$,
$S=20$, and $K=1$, and number of particles $N=1000$.\label{tab:Inefficiency-factor-of real data-1}}

\centering{}%
\begin{tabular}{cccccccccccccccc}
\hline
 & {\footnotesize{}I} & {\footnotesize{}II} & {\footnotesize{}III} &  & {\footnotesize{}I} & {\footnotesize{}II} & {\footnotesize{}III} &  & {\footnotesize{}I} & {\footnotesize{}II} & {\footnotesize{}III} &  & {\footnotesize{}I} & {\footnotesize{}II} & {\footnotesize{}III}\tabularnewline
\hline
{\footnotesize{}$\beta_{1}$} & {\footnotesize{}$2.01$} & {\footnotesize{}$2.18$} & {\footnotesize{}$1.89$} & {\footnotesize{}$\alpha_{1}$} & {\footnotesize{}$15.52$} & {\footnotesize{}$559.41$} & {\footnotesize{}$723.77$} & {\footnotesize{}$\tau_{\epsilon,1}^{2}$} & {\footnotesize{}$15.33$} & {\footnotesize{}$787.45$} & {\footnotesize{}$977.73$} & {\footnotesize{}$\mu_{1}$} & {\footnotesize{}$11.99$} & {\footnotesize{}$23.88$} & {\footnotesize{}$18.17$}\tabularnewline
{\footnotesize{}$\beta_{2}$} & {\footnotesize{}$1.86$} & {\footnotesize{}$1.84$} & {\footnotesize{}$1.83$} & {\footnotesize{}$\alpha_{2}$} & {\footnotesize{}$16.40$} & {\footnotesize{}$554.76$} & {\footnotesize{}$186.87$} & {\footnotesize{}$\tau_{\epsilon,2}^{2}$} & {\footnotesize{}$14.35$} & {\footnotesize{}$914.39$} & {\footnotesize{}$475.00$} & {\footnotesize{}$\mu_{2}$} & {\footnotesize{}$15.46$} & {\footnotesize{}$12.34$} & {\footnotesize{}$11.59$}\tabularnewline
{\footnotesize{}$\beta_{3}$} & {\footnotesize{}$1.80$} & {\footnotesize{}$1.79$} & {\footnotesize{}$1.73$} & {\footnotesize{}$\alpha_{3}$} & {\footnotesize{}$18.50$} & {\footnotesize{}$342.35$} & {\footnotesize{}$210.83$} & {\footnotesize{}$\tau_{\epsilon,3}^{2}$} & {\footnotesize{}$19.40$} & {\footnotesize{}$688.81$} & {\footnotesize{}$546.86$} & {\footnotesize{}$\mu_{3}$} & {\footnotesize{}$13.45$} & {\footnotesize{}$12.56$} & {\footnotesize{}$12.66$}\tabularnewline
{\footnotesize{}$\beta_{4}$} & {\footnotesize{}$1.79$} & {\footnotesize{}$1.83$} & {\footnotesize{}$1.76$} & {\footnotesize{}$\alpha_{4}$} & {\footnotesize{}$15.40$} & {\footnotesize{}$215.12$} & {\footnotesize{}$111.11$} & {\footnotesize{}$\tau_{\epsilon,4}^{2}$} & {\footnotesize{}$16.44$} & {\footnotesize{}$455.87$} & {\footnotesize{}$326.75$} & {\footnotesize{}$\mu_{4}$} & {\footnotesize{}$12.11$} & {\footnotesize{}$12.82$} & {\footnotesize{}$12.83$}\tabularnewline
{\footnotesize{}$\beta_{5}$} & {\footnotesize{}$1.85$} & {\footnotesize{}$1.75$} & {\footnotesize{}$1.68$} & {\footnotesize{}$\alpha_{5}$} & {\footnotesize{}$15.82$} & {\footnotesize{}$308.00$} & {\footnotesize{}$305.18$} & {\footnotesize{}$\tau_{\epsilon,5}^{2}$} & {\footnotesize{}$19.29$} & {\footnotesize{}$576.04$} & {\footnotesize{}$456.33$} & {\footnotesize{}$\mu_{5}$} & {\footnotesize{}$21.39$} & {\footnotesize{}$16.70$} & {\footnotesize{}$20.62$}\tabularnewline
{\footnotesize{}$\beta_{6}$} & {\footnotesize{}$1.73$} & {\footnotesize{}$1.75$} & {\footnotesize{}$1.74$} & {\footnotesize{}$\alpha_{6}$} & {\footnotesize{}$20.72$} & {\footnotesize{}$494.91$} & {\footnotesize{}$374.78$} & {\footnotesize{}$\tau_{\epsilon,6}^{2}$} & {\footnotesize{}$20.06$} & {\footnotesize{}$995.03$} & {\footnotesize{}$797.53$} & {\footnotesize{}$\mu_{6}$} & {\footnotesize{}$19.43$} & {\footnotesize{}$26.62$} & {\footnotesize{}$36.67$}\tabularnewline
{\footnotesize{}$\beta_{7}$} & {\footnotesize{}$1.78$} & {\footnotesize{}$1.75$} & {\footnotesize{}$1.77$} & {\footnotesize{}$\alpha_{7}$} & {\footnotesize{}$16.07$} & {\footnotesize{}$340.91$} & {\footnotesize{}$464.08$} & {\footnotesize{}$\tau_{\epsilon,7}^{2}$} & {\footnotesize{}$14.49$} & {\footnotesize{}$783.92$} & {\footnotesize{}$754.46$} & {\footnotesize{}$\mu_{7}$} & {\footnotesize{}$13.71$} & {\footnotesize{}$13.56$} & {\footnotesize{}$14.80$}\tabularnewline
{\footnotesize{}$\beta_{8}$} & {\footnotesize{}$1.83$} & {\footnotesize{}$1.83$} & {\footnotesize{}$1.74$} & {\footnotesize{}$\alpha_{8}$} & {\footnotesize{}$19.85$} & {\footnotesize{}$400.60$} & {\footnotesize{}$128.31$} & {\footnotesize{}$\tau_{\epsilon,8}^{2}$} & {\footnotesize{}$23.81$} & {\footnotesize{}$928.38$} & {\footnotesize{}$328.45$} & {\footnotesize{}$\mu_{8}$} & {\footnotesize{}$18.30$} & {\footnotesize{}$13.56$} & {\footnotesize{}$12.63$}\tabularnewline
{\footnotesize{}$\beta_{9}$} & {\footnotesize{}$1.76$} & {\footnotesize{}$1.96$} & {\footnotesize{}$1.77$} & {\footnotesize{}$\alpha_{9}$} & {\footnotesize{}$15.19$} & {\footnotesize{}$909.99$} & {\footnotesize{}$546.01$} & {\footnotesize{}$\tau_{\epsilon,9}^{2}$} & {\footnotesize{}$14.84$} & {\footnotesize{}$1215.77$} & {\footnotesize{}$937.28$} & {\footnotesize{}$\mu_{9}$} & {\footnotesize{}$13.06$} & {\footnotesize{}$19.86$} & {\footnotesize{}$19.14$}\tabularnewline
{\footnotesize{}$\beta_{10}$} & {\footnotesize{}$1.74$} & {\footnotesize{}$1.78$} & {\footnotesize{}$1.77$} & {\footnotesize{}$\alpha_{10}$} & {\footnotesize{}$16.96$} & {\footnotesize{}$385.25$} & {\footnotesize{}$236.04$} & {\footnotesize{}$\tau_{\epsilon,10}^{2}$} & {\footnotesize{}$23.59$} & {\footnotesize{}$962.51$} & {\footnotesize{}$716.08$} & {\footnotesize{}$\mu_{10}$} & {\footnotesize{}$11.92$} & {\footnotesize{}$50.67$} & {\footnotesize{}$35.06$}\tabularnewline
{\footnotesize{}$\beta_{11}$} & {\footnotesize{}$1.77$} & {\footnotesize{}$1.78$} & {\footnotesize{}$1.74$} & {\footnotesize{}$\alpha_{11}$} & {\footnotesize{}$18.43$} & {\footnotesize{}$368.53$} & {\footnotesize{}$115.84$} & {\footnotesize{}$\tau_{\epsilon,11}^{2}$} & {\footnotesize{}$23.99$} & {\footnotesize{}$811.02$} & {\footnotesize{}$872.32$} & {\footnotesize{}$\mu_{11}$} & {\footnotesize{}$13.76$} & {\footnotesize{}$15.12$} & {\footnotesize{}$14.85$}\tabularnewline
{\footnotesize{}$\beta_{12}$} & {\footnotesize{}$1.81$} & {\footnotesize{}$1.82$} & {\footnotesize{}$1.77$} & {\footnotesize{}$\alpha_{12}$} & {\footnotesize{}$20.48$} & {\footnotesize{}$521.58$} & {\footnotesize{}$460.67$} & {\footnotesize{}$\tau_{\epsilon,12}^{2}$} & {\footnotesize{}$20.43$} & {\footnotesize{}$771.17$} & {\footnotesize{}$700.72$} & {\footnotesize{}$\mu_{12}$} & {\footnotesize{}$16.91$} & {\footnotesize{}$20.80$} & {\footnotesize{}$19.88$}\tabularnewline
{\footnotesize{}$\beta_{13}$} & {\footnotesize{}$1.81$} & {\footnotesize{}$1.86$} & {\footnotesize{}$1.83$} & {\footnotesize{}$\alpha_{13}$} & {\footnotesize{}$17.79$} & {\footnotesize{}$362.85$} & {\footnotesize{}$548.70$} & {\footnotesize{}$\tau_{\epsilon,13}^{2}$} & {\footnotesize{}$18.43$} & {\footnotesize{}$632.95$} & {\footnotesize{}$707.42$} & {\footnotesize{}$\mu_{13}$} & {\footnotesize{}$15.76$} & {\footnotesize{}$14.73$} & {\footnotesize{}$19.90$}\tabularnewline
{\footnotesize{}$\beta_{14}$} & {\footnotesize{}$1.77$} & {\footnotesize{}$1.79$} & {\footnotesize{}$1.64$} & {\footnotesize{}$\alpha_{14}$} & {\footnotesize{}$15.48$} & {\footnotesize{}$195.27$} & {\footnotesize{}$375.87$} & {\footnotesize{}$\tau_{\epsilon,14}^{2}$} & {\footnotesize{}$17.05$} & {\footnotesize{}$603.04$} & {\footnotesize{}$704.08$} & {\footnotesize{}$\mu_{14}$} & {\footnotesize{}$14.14$} & {\footnotesize{}$14.75$} & {\footnotesize{}$19.37$}\tabularnewline
{\footnotesize{}$\beta_{15}$} & {\footnotesize{}$1.59$} & {\footnotesize{}$1.69$} & {\footnotesize{}$1.57$} & {\footnotesize{}$\alpha_{15}$} & {\footnotesize{}$17.48$} & {\footnotesize{}$485.37$} & {\footnotesize{}$1097.26$} & {\footnotesize{}$\tau_{\epsilon,15}^{2}$} & {\footnotesize{}$15.58$} & {\footnotesize{}$897.29$} & {\footnotesize{}$1228.99$} & {\footnotesize{}$\mu_{15}$} & {\footnotesize{}$15.76$} & {\footnotesize{}$18.84$} & {\footnotesize{}$29.16$}\tabularnewline
{\footnotesize{}$\beta_{16}$} & {\footnotesize{}$1.80$} & {\footnotesize{}$1.70$} & {\footnotesize{}$1.74$} & {\footnotesize{}$\alpha_{16}$} & {\footnotesize{}$15.94$} & {\footnotesize{}$240.28$} & {\footnotesize{}$211.86$} & {\footnotesize{}$\tau_{\epsilon,16}^{2}$} & {\footnotesize{}$14.50$} & {\footnotesize{}$571.97$} & {\footnotesize{}$434.93$} & {\footnotesize{}$\mu_{16}$} & {\footnotesize{}$13.40$} & {\footnotesize{}$13.29$} & {\footnotesize{}$13.19$}\tabularnewline
{\footnotesize{}$\beta_{17}$} & {\footnotesize{}$2.14$} & {\footnotesize{}$2.12$} & {\footnotesize{}$2.02$} & {\footnotesize{}$\alpha_{17}$} & {\footnotesize{}$16.99$} & {\footnotesize{}$143.03$} & {\footnotesize{}$330.84$} & {\footnotesize{}$\tau_{\epsilon,17}^{2}$} & {\footnotesize{}$17.16$} & {\footnotesize{}$496.86$} & {\footnotesize{}$683.20$} & {\footnotesize{}$\mu_{17}$} & {\footnotesize{}$15.79$} & {\footnotesize{}$11.49$} & {\footnotesize{}$10.91$}\tabularnewline
{\footnotesize{}$\beta_{18}$} & {\footnotesize{}$1.88$} & {\footnotesize{}$1.96$} & {\footnotesize{}$1.87$} & {\footnotesize{}$\alpha_{18}$} & {\footnotesize{}$18.10$} & {\footnotesize{}$225.30$} & {\footnotesize{}$184.31$} & {\footnotesize{}$\tau_{\epsilon,18}^{2}$} & {\footnotesize{}$16.15$} & {\footnotesize{}$518.71$} & {\footnotesize{}$683.36$} & {\footnotesize{}$\mu_{18}$} & {\footnotesize{}$18.81$} & {\footnotesize{}$11.63$} & {\footnotesize{}$12.63$}\tabularnewline
{\footnotesize{}$\beta_{19}$} & {\footnotesize{}$1.84$} & {\footnotesize{}$1.88$} & {\footnotesize{}$1.79$} & {\footnotesize{}$\alpha_{19}$} & {\footnotesize{}$16.61$} & {\footnotesize{}$200.54$} & {\footnotesize{}$70.61$} & {\footnotesize{}$\tau_{\epsilon,19}^{2}$} & {\footnotesize{}$16.24$} & {\footnotesize{}$474.26$} & {\footnotesize{}$276.66$} & {\footnotesize{}$\mu_{19}$} & {\footnotesize{}$16.64$} & {\footnotesize{}$11.33$} & {\footnotesize{}$8.73$}\tabularnewline
{\footnotesize{}$\beta_{20}$} & {\footnotesize{}$1.91$} & {\footnotesize{}$1.86$} & {\footnotesize{}$1.77$} & {\footnotesize{}$\alpha_{20}$} & {\footnotesize{}$13.97$} & {\footnotesize{}$94.55$} & {\footnotesize{}$310.15$} & {\footnotesize{}$\tau_{\epsilon,20}^{2}$} & {\footnotesize{}$16.21$} & {\footnotesize{}$306.76$} & {\footnotesize{}$726.35$} & {\footnotesize{}$\mu_{20}$} & {\footnotesize{}$17.31$} & {\footnotesize{}$10.68$} & {\footnotesize{}$11.21$}\tabularnewline
$\phi$ & {\footnotesize{}$8.22$} & {\footnotesize{}$22.73$} & {\footnotesize{}$34.20$} & {\footnotesize{}$\tau_{f,1}^{2}$} & {\footnotesize{}$12.36$} & {\footnotesize{}$52.16$} & {\footnotesize{}$68.27$} &  &  &  &  &  &  &  & \tabularnewline
\hline
\end{tabular}
\end{sidewaystable}

Figures~\ref{fig:The-kernel-density alpha real data}  and \ref{fig:The-kernel-density tau real data}
present the kernel density estimates of marginal posterior densities of four representative $\alpha$ and $\tau_{\epsilon}^2$ respectively
for the Gaussian OU model for the US stock returns data. The density estimates are for PMMH+PG using exact and approximate transition densities
and PG with approximate transition densities using ancestral tracing and  backward simulation.
Both figures show that both PMMH+PG samplers
produce estimates that are close to each other, whereas the PG samplers are much less reliable.
\begin{figure}[H]
\caption{The kernel density estimates of marginal posterior densities of four representative $\alpha$
for the US stock returns data. The density estimates are for PMMH+PG using exact and approximate transition densities
and PG with  approximate transition densities using ancestral tracing and  backward simulation for the Gaussian OU model.\label{fig:The-kernel-density alpha real data}}
\centering{}\includegraphics[width=20cm,height=15cm]{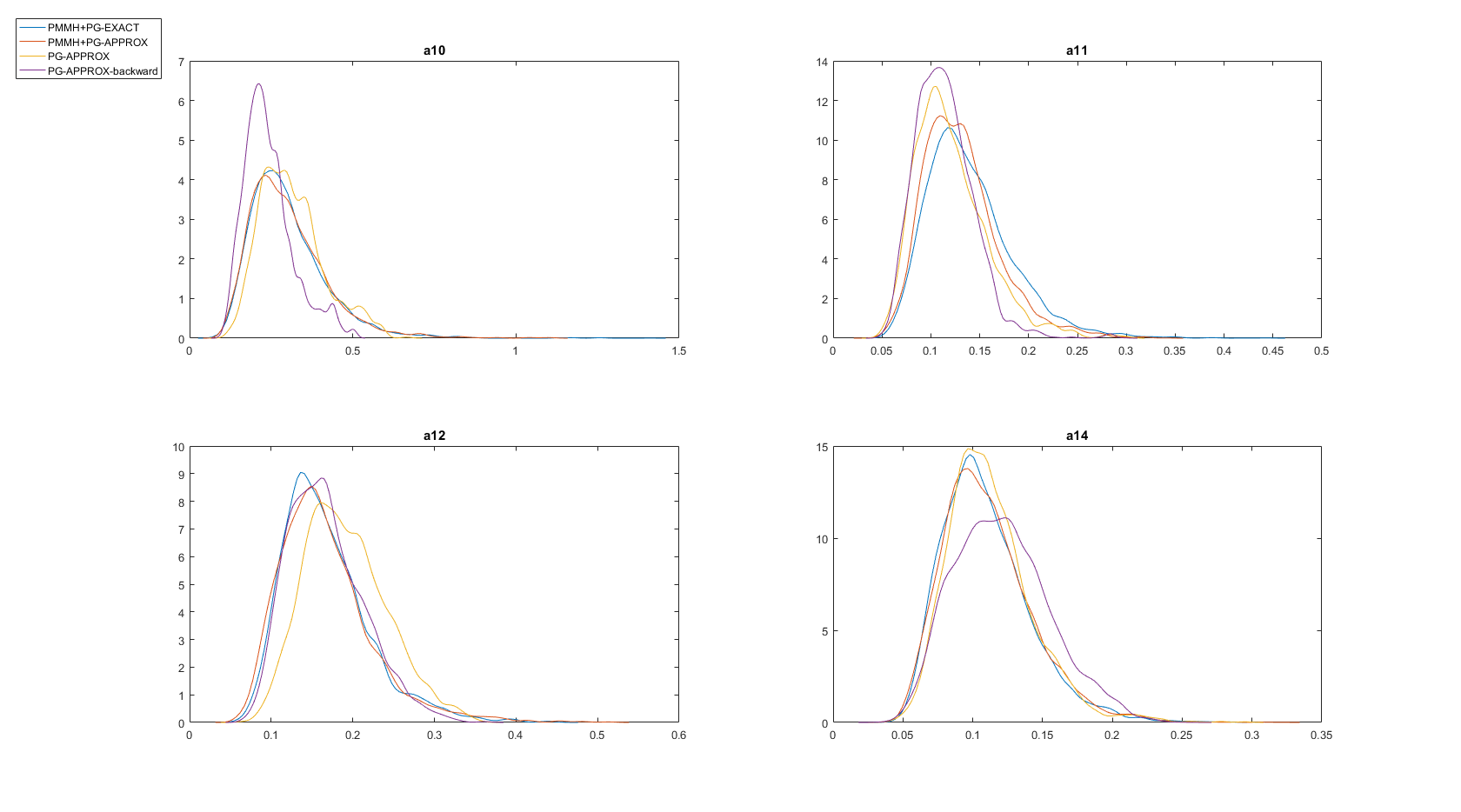}
\end{figure}

\begin{figure}[H]
\caption{The kernel density estimates of marginal posterior densities of $\boldsymbol{\tau}_{\epsilon}^{2}$
for the US stock returns data for four representative $\tau_{\epsilon}^2$. The density estimates are for PMMH+PG using exact and approximate transition densities
and PG  using ancestral tracing and  backward simulation for the Gaussian OU model.\label{fig:The-kernel-density tau real data}}
\centering{}\includegraphics[width=20cm,height=15cm]{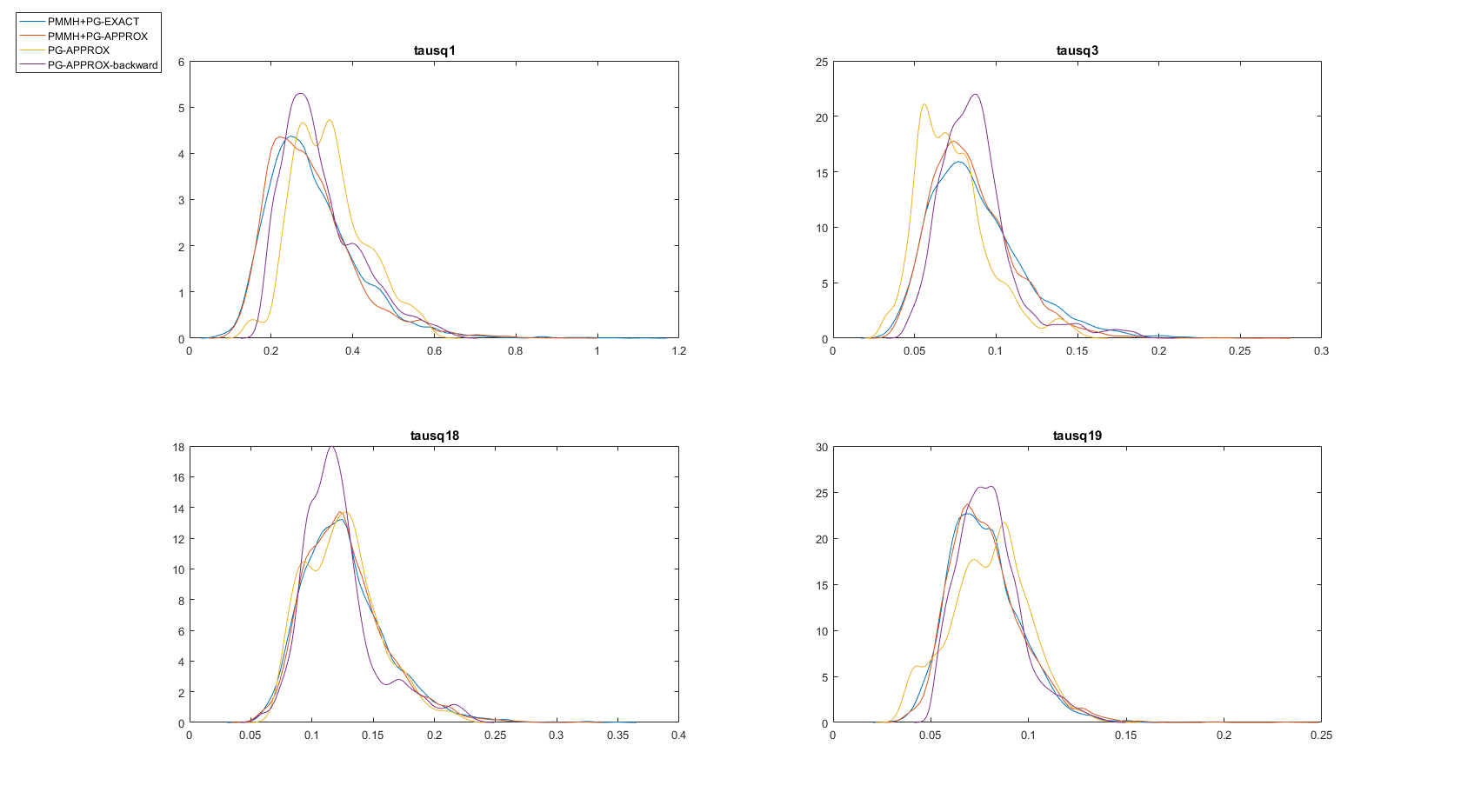}
\end{figure}

Table~\ref{tab:Inefficiency-factor-of real data-GARCH} gives the inefficiency factors of $\boldsymbol{\beta}$, $\boldsymbol{\alpha}$,
$\boldsymbol{\mu}$, $\boldsymbol{\tau}_{\epsilon}^{2}$, $\boldsymbol{\phi}$,
and $\boldsymbol{\tau}_{f}^{2}$ with the approximate Euler based transition density for the GARCH diffusion model, for the three samplers:
Sampler
I: $PMMH\left(\boldsymbol{\alpha},\boldsymbol{\tau}_{\epsilon}^{2},\boldsymbol{\tau}_{f}^{2},\boldsymbol{\mu}\right)+PG\left(\boldsymbol{f}_{1:T},\boldsymbol{\beta},\boldsymbol{\phi}\right)$,
Sampler $II$: $PGAT\left(\boldsymbol{f}_{1:T},\boldsymbol{\beta},\boldsymbol{\alpha},\boldsymbol{\tau}_{\epsilon}^{2},\boldsymbol{\mu},\boldsymbol{\phi},\boldsymbol{\tau}_{f}^{2}\right)$, Sampler $III$: $PGBS\left(\boldsymbol{f}_{1:T},\boldsymbol{\beta},\boldsymbol{\alpha},\boldsymbol{\tau}_{\epsilon}^{2},\boldsymbol{\mu},\boldsymbol{\phi},\boldsymbol{\tau}_{f}^{2}\right)$
for US stock returns data with $T=1000$,
$S=20$, and $K=1$, and with the number of particles $N=1000$.

\begin{sidewaystable}
\caption{Inefficiency factors of $\boldsymbol{\beta}$, $\boldsymbol{\alpha}$,
$\boldsymbol{\mu}$, $\boldsymbol{\tau}^{2}$, $\boldsymbol{\phi}$,
and $\boldsymbol{\tau}_{f}^{2}$ with an Euler approximation for the state
transition densities for the GARCH diffusion model: Sampler I: $PG\left(\boldsymbol{f}_{1:T},\boldsymbol{\beta},\boldsymbol{\phi}\right)+PMMH\left(\boldsymbol{\alpha},\boldsymbol{\tau}^{2},\boldsymbol{\mu},\boldsymbol{\tau}_{f}^{2}\right)$,
Sampler $II$: $PGAT\left(\boldsymbol{f}_{1:T},\boldsymbol{\beta},\boldsymbol{\alpha},\boldsymbol{\tau}^{2},\boldsymbol{\mu},\boldsymbol{\phi},\boldsymbol{\tau}_{f}^{2}\right)$,
sampler III: $PGBS\left(\boldsymbol{f}_{1:T},\boldsymbol{\beta},\boldsymbol{\alpha},\boldsymbol{\tau}^{2},\boldsymbol{\mu},\boldsymbol{\phi},\boldsymbol{\tau}_{f}^{2}\right)$
for US stock returns data with $T=1000$, $P=20$, and $K=1$, and
number of particles $N=1000$.\label{tab:Inefficiency-factor-of real data-GARCH}}

\centering{}%
\begin{tabular}{cccccccccccccccc}
\hline 
 & {\footnotesize{}I} & {\footnotesize{}II} & {\footnotesize{}III} &  & {\footnotesize{}I} & {\footnotesize{}II} & {\footnotesize{}III} &  & {\footnotesize{}I} & {\footnotesize{}II} & {\footnotesize{}III} &  & {\footnotesize{}I} & {\footnotesize{}II} & {\footnotesize{}III}\tabularnewline
\hline 
{\footnotesize{}$\beta_{1}$} & {\footnotesize{}$1.91$} & {\footnotesize{}$2.06$} & {\footnotesize{}$1.88$} & {\footnotesize{}$\alpha_{1}$} & {\footnotesize{}$32.83$} & {\footnotesize{}$197.50$} & {\footnotesize{}$318.60$} & {\footnotesize{}$\tau_{1}^{2}$} & {\footnotesize{}$48.24$} & {\footnotesize{}$1944.73$} & {\footnotesize{}$1079.01$} & {\footnotesize{}$\mu_{1}$} & {\footnotesize{}$111.02$} & {\footnotesize{}$207.54$} & {\footnotesize{}$229.44$}\tabularnewline
{\footnotesize{}$\beta_{2}$} & {\footnotesize{}$1.73$} & {\footnotesize{}$1.77$} & {\footnotesize{}$1.76$} & {\footnotesize{}$\alpha_{2}$} & {\footnotesize{}$13.31$} & {\footnotesize{}$144.70$} & {\footnotesize{}$135.32$} & {\footnotesize{}$\tau_{2}^{2}$} & {\footnotesize{}$12.39$} & {\footnotesize{}$2186.34$} & {\footnotesize{}$2205.98$} & {\footnotesize{}$\mu_{2}$} & {\footnotesize{}$14.57$} & {\footnotesize{}$227.46$} & {\footnotesize{}$130.33$}\tabularnewline
{\footnotesize{}$\beta_{3}$} & {\footnotesize{}$1.69$} & {\footnotesize{}$1.76$} & {\footnotesize{}$1.72$} & {\footnotesize{}$\alpha_{3}$} & {\footnotesize{}$13.53$} & {\footnotesize{}$179.77$} & {\footnotesize{}$186.45$} & {\footnotesize{}$\tau_{3}^{2}$} & {\footnotesize{}$23.59$} & {\footnotesize{}$1794.07$} & {\footnotesize{}$654.43$} & {\footnotesize{}$\mu_{3}$} & {\footnotesize{}$19.46$} & {\footnotesize{}$212.72$} & {\footnotesize{}$143.36$}\tabularnewline
{\footnotesize{}$\beta_{4}$} & {\footnotesize{}$1.70$} & {\footnotesize{}$1.77$} & {\footnotesize{}$1.76$} & {\footnotesize{}$\alpha_{4}$} & {\footnotesize{}$14.94$} & {\footnotesize{}$225.04$} & {\footnotesize{}$157.04$} & {\footnotesize{}$\tau_{4}^{2}$} & {\footnotesize{}$16.84$} & {\footnotesize{}$3098.27$} & {\footnotesize{}$1208.39$} & {\footnotesize{}$\mu_{4}$} & {\footnotesize{}$23.45$} & {\footnotesize{}$163.81$} & {\footnotesize{}$118.05$}\tabularnewline
{\footnotesize{}$\beta_{5}$} & {\footnotesize{}$1.71$} & {\footnotesize{}$1.74$} & {\footnotesize{}$1.71$} & {\footnotesize{}$\alpha_{5}$} & {\footnotesize{}$16.23$} & {\footnotesize{}$420.88$} & {\footnotesize{}$421.26$} & {\footnotesize{}$\tau_{5}^{2}$} & {\footnotesize{}$14.29$} & {\footnotesize{}$558.61$} & {\footnotesize{}$3257.52$} & {\footnotesize{}$\mu_{5}$} & {\footnotesize{}$19.34$} & {\footnotesize{}$322.81$} & {\footnotesize{}$243.11$}\tabularnewline
{\footnotesize{}$\beta_{6}$} & {\footnotesize{}$1.66$} & {\footnotesize{}$1.72$} & {\footnotesize{}$1.68$} & {\footnotesize{}$\alpha_{6}$} & {\footnotesize{}$18.66$} & {\footnotesize{}$875.82$} & {\footnotesize{}$1166.83$} & {\footnotesize{}$\tau_{6}^{2}$} & {\footnotesize{}$21.67$} & {\footnotesize{}$1097.21$} & {\footnotesize{}$2746.64$} & {\footnotesize{}$\mu_{6}$} & {\footnotesize{}$18.93$} & {\footnotesize{}$359.97$} & {\footnotesize{}$638.61$}\tabularnewline
{\footnotesize{}$\beta_{7}$} & {\footnotesize{}$1.64$} & {\footnotesize{}$1.73$} & {\footnotesize{}$1.66$} & {\footnotesize{}$\alpha_{7}$} & {\footnotesize{}$14.81$} & {\footnotesize{}$488.09$} & {\footnotesize{}$447.91$} & {\footnotesize{}$\tau_{7}^{2}$} & {\footnotesize{}$16.79$} & {\footnotesize{}$1932.45$} & {\footnotesize{}$2415.33$} & {\footnotesize{}$\mu_{7}$} & {\footnotesize{}$35.95$} & {\footnotesize{}$247.08$} & {\footnotesize{}$205.50$}\tabularnewline
{\footnotesize{}$\beta_{8}$} & {\footnotesize{}$1.75$} & {\footnotesize{}$1.86$} & {\footnotesize{}$1.72$} & {\footnotesize{}$\alpha_{8}$} & {\footnotesize{}$18.77$} & {\footnotesize{}$180.04$} & {\footnotesize{}$152.92$} & {\footnotesize{}$\tau_{8}^{2}$} & {\footnotesize{}$17.51$} & {\footnotesize{}$681.34$} & {\footnotesize{}$2236.32$} & {\footnotesize{}$\mu_{8}$} & {\footnotesize{}$16.08$} & {\footnotesize{}$140.56$} & {\footnotesize{}$231.74$}\tabularnewline
{\footnotesize{}$\beta_{9}$} & {\footnotesize{}$1.76$} & {\footnotesize{}$1.79$} & {\footnotesize{}$1.85$} & {\footnotesize{}$\alpha_{9}$} & {\footnotesize{}$23.51$} & {\footnotesize{}$655.71$} & {\footnotesize{}$543.04$} & {\footnotesize{}$\tau_{9}^{2}$} & {\footnotesize{}$23.17$} & {\footnotesize{}$2465.44$} & {\footnotesize{}$3065.63$} & {\footnotesize{}$\mu_{9}$} & {\footnotesize{}$147.16$} & {\footnotesize{}$434.49$} & {\footnotesize{}$814.62$}\tabularnewline
{\footnotesize{}$\beta_{10}$} & {\footnotesize{}$1.70$} & {\footnotesize{}$1.77$} & {\footnotesize{}$1.75$} & {\footnotesize{}$\alpha_{10}$} & {\footnotesize{}$13.04$} & {\footnotesize{}$1159.77$} & {\footnotesize{}$969.04$} & {\footnotesize{}$\tau_{10}^{2}$} & {\footnotesize{}$14.04$} & {\footnotesize{}$2013.82$} & {\footnotesize{}$1638.88$} & {\footnotesize{}$\mu_{10}$} & {\footnotesize{}$17.20$} & {\footnotesize{}$902.78$} & {\footnotesize{}$322.78$}\tabularnewline
{\footnotesize{}$\beta_{11}$} & {\footnotesize{}$1.69$} & {\footnotesize{}$1.77$} & {\footnotesize{}$1.74$} & {\footnotesize{}$\alpha_{11}$} & {\footnotesize{}$11.05$} & {\footnotesize{}$298.47$} & {\footnotesize{}$210.95$} & {\footnotesize{}$\tau_{11}^{2}$} & {\footnotesize{}$14.72$} & {\footnotesize{}$1224.84$} & {\footnotesize{}$2551.95$} & {\footnotesize{}$\mu_{11}$} & {\footnotesize{}$17.49$} & {\footnotesize{}$335.21$} & {\footnotesize{}$216.52$}\tabularnewline
{\footnotesize{}$\beta_{12}$} & {\footnotesize{}$1.68$} & {\footnotesize{}$1.86$} & {\footnotesize{}$1.78$} & {\footnotesize{}$\alpha_{12}$} & {\footnotesize{}$19.20$} & {\footnotesize{}$462.64$} & {\footnotesize{}$495.65$} & {\footnotesize{}$\tau_{12}^{2}$} & {\footnotesize{}$22.52$} & {\footnotesize{}$2865.97$} & {\footnotesize{}$1412.81$} & {\footnotesize{}$\mu_{12}$} & {\footnotesize{}$49.95$} & {\footnotesize{}$179.47$} & {\footnotesize{}$351.02$}\tabularnewline
{\footnotesize{}$\beta_{13}$} & {\footnotesize{}$1.78$} & {\footnotesize{}$1.85$} & {\footnotesize{}$1.86$} & {\footnotesize{}$\alpha_{13}$} & {\footnotesize{}$14.12$} & {\footnotesize{}$232.22$} & {\footnotesize{}$270.89$} & {\footnotesize{}$\tau_{13}^{2}$} & {\footnotesize{}$13.88$} & {\footnotesize{}$1646.83$} & {\footnotesize{}$2770.24$} & {\footnotesize{}$\mu_{13}$} & {\footnotesize{}$16.15$} & {\footnotesize{}$230.03$} & {\footnotesize{}$597.87$}\tabularnewline
{\footnotesize{}$\beta_{14}$} & {\footnotesize{}$1.61$} & {\footnotesize{}$1.63$} & {\footnotesize{}$1.63$} & {\footnotesize{}$\alpha_{14}$} & {\footnotesize{}$17.59$} & {\footnotesize{}$159.37$} & {\footnotesize{}$337.67$} & {\footnotesize{}$\tau_{14}^{2}$} & {\footnotesize{}$16.22$} & {\footnotesize{}$2651.10$} & {\footnotesize{}$1083.02$} & {\footnotesize{}$\mu_{14}$} & {\footnotesize{}$15.34$} & {\footnotesize{}$146.47$} & {\footnotesize{}$227.23$}\tabularnewline
{\footnotesize{}$\beta_{15}$} & {\footnotesize{}$1.54$} & {\footnotesize{}$1.54$} & {\footnotesize{}$1.52$} & {\footnotesize{}$\alpha_{15}$} & {\footnotesize{}$13.93$} & {\footnotesize{}$330.76$} & {\footnotesize{}$329.03$} & {\footnotesize{}$\tau_{15}^{2}$} & {\footnotesize{}$16.10$} & {\footnotesize{}$1551.35$} & {\footnotesize{}$1303.25$} & {\footnotesize{}$\mu_{15}$} & {\footnotesize{}$30.23$} & {\footnotesize{}$164.37$} & {\footnotesize{}$182.29$}\tabularnewline
{\footnotesize{}$\beta_{16}$} & {\footnotesize{}$1.67$} & {\footnotesize{}$1.69$} & {\footnotesize{}$1.62$} & {\footnotesize{}$\alpha_{16}$} & {\footnotesize{}$17.17$} & {\footnotesize{}$352.23$} & {\footnotesize{}$275.77$} & {\footnotesize{}$\tau_{16}^{2}$} & {\footnotesize{}$15.05$} & {\footnotesize{}$2166.59$} & {\footnotesize{}$1121.20$} & {\footnotesize{}$\mu_{16}$} & {\footnotesize{}$11.35$} & {\footnotesize{}$141.30$} & {\footnotesize{}$246.14$}\tabularnewline
{\footnotesize{}$\beta_{17}$} & {\footnotesize{}$2.04$} & {\footnotesize{}$2.16$} & {\footnotesize{}$2.03$} & {\footnotesize{}$\alpha_{17}$} & {\footnotesize{}$16.20$} & {\footnotesize{}$202.76$} & {\footnotesize{}$198.07$} & {\footnotesize{}$\tau_{17}^{2}$} & {\footnotesize{}$17.68$} & {\footnotesize{}$2007.36$} & {\footnotesize{}$3053.61$} & {\footnotesize{}$\mu_{17}$} & {\footnotesize{}$36.97$} & {\footnotesize{}$728.55$} & {\footnotesize{}$820.53$}\tabularnewline
{\footnotesize{}$\beta_{18}$} & {\footnotesize{}$1.83$} & {\footnotesize{}$1.86$} & {\footnotesize{}$1.77$} & {\footnotesize{}$\alpha_{18}$} & {\footnotesize{}$13.94$} & {\footnotesize{}$347.07$} & {\footnotesize{}$192.65$} & {\footnotesize{}$\tau_{18}^{2}$} & {\footnotesize{}$17.27$} & {\footnotesize{}$1478.12$} & {\footnotesize{}$2889.07$} & {\footnotesize{}$\mu_{18}$} & {\footnotesize{}$19.63$} & {\footnotesize{}$311.89$} & {\footnotesize{}$603.94$}\tabularnewline
{\footnotesize{}$\beta_{19}$} & {\footnotesize{}$1.74$} & {\footnotesize{}$1.80$} & {\footnotesize{}$1.78$} & {\footnotesize{}$\alpha_{19}$} & {\footnotesize{}$14.17$} & {\footnotesize{}$398.65$} & {\footnotesize{}$157.60$} & {\footnotesize{}$\tau_{19}^{2}$} & {\footnotesize{}$18.14$} & {\footnotesize{}$2896.07$} & {\footnotesize{}$2682.24$} & {\footnotesize{}$\mu_{19}$} & {\footnotesize{}$19.85$} & {\footnotesize{}$1340.20$} & {\footnotesize{}$235.55$}\tabularnewline
{\footnotesize{}$\beta_{20}$} & {\footnotesize{}$1.75$} & {\footnotesize{}$1.81$} & {\footnotesize{}$1.80$} & {\footnotesize{}$\alpha_{20}$} & {\footnotesize{}$17.59$} & {\footnotesize{}$130.58$} & {\footnotesize{}$262.31$} & {\footnotesize{}$\tau_{20}^{2}$} & {\footnotesize{}$15.98$} & {\footnotesize{}$2096.28$} & {\footnotesize{}$1352.18$} & {\footnotesize{}$\mu_{20}$} & {\footnotesize{}$17.63$} & {\footnotesize{}$119.10$} & {\footnotesize{}$148.03$}\tabularnewline
$\phi$ & {\footnotesize{}$8.74$} & {\footnotesize{}$21.52$} & {\footnotesize{}$20.91$} & {\footnotesize{}$\tau_{f_{1}}^{2}$} & {\footnotesize{}$13.65$} & {\footnotesize{}$47.47$} & {\footnotesize{}$46.10$} &  &  &  &  &  &  &  & \tabularnewline
\hline 
\end{tabular}
\end{sidewaystable}

\end{document}